\newcolumntype{d}{D{.}{.}{-1}}
\numberwithin{equation}{section}
\newtheorem{theorem}{Theorem}
\newtheorem{corollary}[theorem]{Corollary}
\newtheorem{definition}[theorem]{Definition}
\newtheorem{lemma}{Lemma}
\numberwithin{equation}{section}
\numberwithin{lemma}{section}
\numberwithin{theorem}{section}
\newenvironment{proof}[1][Proof]{\noindent \textbf{#1.} }{\  \rule{0.5em}{0.5em}}
\begin{document}

\title{Closed-form approximations of moments and densities of
continuous--time Markov models\thanks{{\scriptsize We would like to thank
participants at the NBER--NSF\ Time Series Conference 2022 and Seungmoon
Park for valuable comments and suggestions}\texttt{\scriptsize .}}}
\author{Dennis Kristensen\thanks{{\scriptsize {}Department of Economics,
University College London; E-mail: }\texttt{{\scriptsize %
{}d.kristensen@ucl.ac.uk.}}} \and Young Jun Lee\thanks{{\scriptsize {}IGIER,
Universit� Bocconi; E-mail: }\texttt{{\scriptsize {}young.lee@unibocconi.it.}%
}} \and Antonio Mele\thanks{{\scriptsize USI Lugano, {}Swiss Finance
Institute and CEPR; E-mail: }\texttt{{\scriptsize {}antonio.mele@usi.ch}}} }
\maketitle

\begin{abstract}
This paper develops power series expansions of a general class of moment
functions, including transition densities and option prices, of
continuous-time Markov processes, including jump--diffusions. The proposed
expansions extend the ones in \cite{kristensen2011} to cover general Markov
processes. We demonstrate that the class of expansions nests the transition
density and option price expansions developed in \cite{Yangetal2019} and 
\cite{Wan2021} as special cases, thereby connecting seemingly different
ideas in a unified framework. We show how the general expansion can be
implemented for fully general jump--diffusion models. We provide a new
theory for the validity of the expansions which shows that series expansions
are not guaranteed to converge as more terms are added in general. Thus,
these methods should be used with caution. At the same time, the numerical
studies in this paper demonstrate good performance of the proposed
implementation in practice when a small number of terms are
included.\bigskip\ 

\noindent \textsc{JEL Classification:} C13; C32; C63; G12; G13.\medskip {}

\noindent \textsc{Keywords:} Continuous-time models, jump-diffusion,
transition density, stochastic volatility, closed-form approximations,
maximum-likelihood estimation, option pricing.
\end{abstract}

\newpage

\section{Introduction\label{sec: intro}}

Continuous-time jump-diffusion processes are used in economics and finance
to model the dynamics of state variables (see, e.g., \citealp{bjork2008}).
They lead to a simple and elegant analysis of problems such as the pricing
of financial assets, portfolio management and other dynamic phenomena. This
comes at a big computational cost though: Many relevant characteristics,
such as moments and densities, of such processes cannot be expressed in
closed-form except in a few special cases. This hampers their practical use
and implementation. This has led researchers to develop numerical methods
for the computation of these. Broadly speaking, these methods fall in three
categories: Finite--difference methods \citep[][]{Ames1992},
simulation--based methods 
\citep[see, e.g.,][]{elerian2001,
brandt2002, durham2002,beskos2009,kristensen2012,sermaidis2013} and series
expansions 
\citep[see,
e.g.,][]{aitsahalia2002, bakshi2006, yu2007, aitsahalia2008, Filipovic2013, li2013}%
. This paper focuses on the latter category.

Most existing expansions proposed in the literature are application
specific: Depending on the particular features of the chosen moment and
model of the underlying stochastic process, different methods have been
developed. One exception is \cite{kristensen2011} who developed power series
expansions that covered a general class of moment functions and the
transition density of multivariate diffusion processes. Their focus was on
applications to option pricing but the class of expansions applies more
generally. The current paper makes four contributions:

First, we demonstrate that the class of series expansions of \cite%
{kristensen2011} are easily extended to cover fully general continuous--time
Markov models, including any jump--diffusion process. Thus, the proposal of 
\cite{kristensen2011} can in principle be applied to any moment of any
Markov process. As part of this extension, we present a novel derivation and
representation of the series expansion of \cite{kristensen2011}. This new
representation highlights important features of the original expansion that
was perhaps not obvious from the analysis of \cite{kristensen2011}.

Second, we revisit the recent work of \cite{Yangetal2019} and \cite{Wan2021}
and demonstrate that in fact their proposed expansions of transition
densities and option prices are special cases of \cite{kristensen2011}.
Thus, at a theoretical level the expansions in \cite{Yangetal2019} and \cite%
{Wan2021} are not new. At the same time, it should be emphasized that \cite%
{Yangetal2019} and \cite{Wan2021} make important contributions in terms of
the practical implementation of the proposal in \cite{kristensen2011}. They
develop numerical algorithms that allow for fast implementation of the
general method of \cite{kristensen2011} when applied to transition densities
and option prices of diffusion processes and a limited set of
jump--diffusion processes. As such, the current paper should hopefully
clarify the relationship between these three existing papers and their
relative contributions to the literature.

Third, we propose a novel numerical implementation of our series expansions
when applied to general jump--diffusion models. The algorithms of \cite%
{kristensen2011} and \cite{Yangetal2019} are restricted to pure diffusions
while the extension found in \cite{Wan2021} requires the jump component to
be fully independent of the diffusive component. That is, the jump intensity
and the jump sizes are not allowed to be state--dependent. Our numerical
implementation allows for both to be state--dependent. We demonstrate
through a series of numerical studies that our numerical method works well
in practice.

Fourth, we provide a novel theory for the validity of power series
expansions of moment functions of continuous--time Markov processes used
here and elsewhere in the literature, including all above references to
papers employing series--based approximations. Most existing theoretical
results for these expansions only show that a given moment expansion
converges as the time interval over which the conditional moment is defined
shrinks to zero. As such existing results provide no guarantees that the
approximation error will get smaller as more terms are added to expansion;
in fact, nothing rules out that the approximation error may actually explode
as more terms are added. For the power series expansion to be reliable, it
is desirable with conditions under which the expansions converge not only
over shrinking time intervals but also over a fixed time interval. We here
provide guarantees for the approximations to be numerically stable as the
order of the approximation grows. Our theoretical results rely on
semi--group theory as also used by, e.g., \cite{Scheinkman1995} to analyze
the properties of continuous--time Markov processes.

Our theoretical results demonstrate that power series expansions of Markov
moments may very well not converge: The chosen moment and model has to
satisfy certain regularity conditions for this to hold. In particular, we
demonstrate that the expansions of transition densities and option prices
proposed by \cite{kristensen2011}, \cite{Yangetal2019} and \cite{Wan2021} do
not converge. That is, these methods are bound to fail as the number of
series terms grows. As such, the expansions proposed in these papers and the
extension to general jump--diffusions developed here should be used with
care. In particular, researchers may not wish to add more than, say, 4--5
terms to the expansion in order to avoid the numerical error to blow up.

The remains of the paper are organized as follows. Section \ref{sec: atd}
presents series expansions of a broad class of moments and densities of
basically any continuous--time Markov process. In section \ref{sec:
Practical}, we propose a numerical implementation of the general method when
applied to general jump--diffusion models. Section \ref{sec: theory}
analyzes the theoretical properties of the power series expansion over both
shrinking and fixed time distances. Section \ref{sec: numeric} examines the
numerical performance of our numerical algorithm. Section \ref{sec:
conclusion} concludes. Appendix \ref{sec: Proofs} gathers all proofs.

\section{Moment expansions of Markov processes\label{sec: atd}}

We first provide a motivating example of a jump--diffusion model and some of
the moments researchers often are interested in computing. We then proceed
to consider more general framework and develop a general moment expansion
method in this setting.

\subsection{Motivating example}

Consider a $d$-dimensional process, $x_{t}\in \mathcal{X\subseteq }\mathbb{R}%
^{d}$ that solves the following stochastic differential equation (SDE): 
\begin{equation}
dx_{t}=\mu \left( x_{t}\right) dt+\sigma \left( x_{t}\right)
dW_{t}+J_{t}dN_{t},  \label{eq: model}
\end{equation}%
where $\mu \left( x\right) $ and $\sigma \left( x\right) $ are the so-called
drift and diffusion functions, respectively, $W_{t}$ is a $d$-dimensional
standard Brownian motion, $N_{t}$ is a Poisson process with jump intensity $%
\lambda \left( x_{t}\right) $, and $J_{t}$ captures the jump-sizes and has
conditional density $\nu \left( \cdot |x_{t}\right) $. The precise form of $%
\mu \left( x\right) $, $\sigma \left( x\right) $, $\lambda \left( x\right) $
and $\nu \left( \cdot |x\right) $ are chosen by the researcher according to
the dynamic problem that is being considered and so are known to us. To keep
notation simple, we restrict ourselves to the time--homogenous case meaning
that none of the functions entering the model depend on $t$; the extension
to the time--inhomogenous case can be found in Appendix \ref{Sec:
Time-inhomo}.

We are interested in computing conditional moments on the form%
\begin{equation}
u_{t}\left( x\right) =E_{t}f\left( x\right)   \label{eq: u FK}
\end{equation}%
where%
\begin{equation}
\left( t,f\right) \mapsto E_{t}f\left( x\right) \equiv \mathbb{E}\left[
\left. \exp \left( -\int\nolimits_{0}^{t}r\left( x_{s}\right) ds\right)
f\left( x_{t}\right) \right\vert x_{0}=x\right]   \label{eq: E_t(f) def}
\end{equation}%
is a conditional moment operator. This family of operators, indexed by the
time variable $t\geq 0$, constitutes a so--called semi--group of linear
operators; for an overview of the general theory of semi--groups with
applications to Markov processes we refer to \cite{ethier1986}; for
applications of semi--group theory in econometrics and finance, see \cite%
{AitSahaliaetal2010}.\footnote{%
Note that we here opt for the so--called Musiela parameterization where $t$
measures the time distance between the current and some future calendar time
point. One could alternatively have defined the function of interest as, for
some given $T<\infty $, 
\begin{equation*}
\tilde{u}_{\tau }\left( x\right) =\mathbb{E}\left[ \left. \exp \left(
-\int\nolimits_{\tau }^{T}r\left( x_{s}\right) ds\right\vert f\left(
x_{T}\right) \right\vert x_{\tau }=x\right] ,
\end{equation*}%
where now $\tau \leq T$ is a calendar time point. In the current
time--homogenous case, it is easily seen that $\tilde{u}_{\tau }\left(
x\right) =u_{T-\tau }\left( x\right) $, where $u_{t}$ was defined in (\ref%
{eq: u FK}).}

The functions $r\left( x\right) $ and $f\left( x\right) $ entering (\ref{eq:
u FK})--(\ref{eq: E_t(f) def}) are chosen by the researcher according to the
problem of interest. For example, with $r\left( x\right) =0$ and $f\left(
x\right) =\delta \left( y-x\right) $ for some fixed $y\in \mathcal{X}$,
where $\delta \left( x\right) $ is Dirac's Delta function, $u_{t}\left(
x\right) =p_{t}\left( y|x\right) $, where $p_{t}$ is the transition density
of $x_{t}$,%
\begin{equation*}
\Pr \left( x_{t}\in \mathcal{A}|x_{0}=x\right) =\int_{\mathcal{A}%
}p_{t}\left( y|x\right) dy,\text{ \ \ }\mathcal{A}\subseteq \mathcal{X}.
\end{equation*}%
If instead we choose $r_{t}\left( x\right) =r>0$ and $f\left( x\right)
=\left( \exp \left( x_{1}\right) -K\right) ^{+}$ then $u_{t}\left( x\right) $
becomes the price of a European call option with time to maturity $t$ when
the state variables $x_{t}$ satisfy (\ref{eq: model}) under the
risk--neutral measure with the first component, $x_{1,t}$, being the
log-price of the underlying asset and the short-term interest rate equals
the constant $r$. When $r\left( x\right) =x_{2}$, $u_{t}\left( x\right) $ is
the price of the same option but now allowing for a stochastic short--term
interest rate, which is the second component of $x_{t}$.

In most cases, an analytic expression of (\ref{eq: u FK}) is not available
and $u_{t}\left( x\right) $ has to be computed using numerical
approximations. To motivate our proposed approximation of $u_{t}\left(
x\right) $, observe that an equivalent representation of it is the solution
to a partial integro-differential equation (PIDE). An important component of
this PIDE is the so--called (infinitesimal) generator $A$ of $x_{t}$ which
fully characterizes the dynamics. The generator is given by, for any
sufficiently regular function $f\left( x\right) $,%
\begin{equation}
Af\left( x\right) =A_{D}f\left( x\right) +A_{J}f\left( x\right) ,
\label{eq: A def}
\end{equation}%
where, with $\sigma ^{2}\left( x\right) :=\sigma \left( x\right) \sigma
\left( x\right) ^{\top }\in \mathbb{R}^{d\times d}$,%
\begin{equation}
A_{D}f\left( x\right) =\sum_{i=1}^{d}\mu _{i}\left( x\right) \partial
_{x_{i}}f\left( x\right) +\frac{1}{2}\sum_{i,j=1}^{d}\sigma
_{ij,t}^{2}\left( x\right) \partial _{x_{i},x_{j}}^{2}f\left( x\right)
\label{eq: A_D def}
\end{equation}%
and%
\begin{equation}
A_{J}f\left( x\right) =\lambda \left( x\right) \int_{\mathbb{R}^{d}}\left[
f\left( x+c\right) -f\left( x\right) \right] \nu \left( c|x\right) dc
\label{eq: A_J def}
\end{equation}%
are the generators of the diffusive and jump component of $x_{t}$,
respectively. Here, $\partial _{x_{i}}f\left( x\right) =\partial f\left(
x\right) /\left( \partial x_{i}\right) $, $\partial
_{x_{i},x_{j}}^{2}f\left( x\right) =\partial ^{2}f\left( x\right) /\left(
\partial x_{i}\partial x_{j}\right) $ and similar for other partial
derivatives.

It can then be shown, c.f. Section \ref{sec: theory}, that $u_{t}\left(
x\right) $ solves the following PIDE:%
\begin{equation}
\partial _{t}u_{t}\left( x\right) =\left[ A-r\left( x\right) \right]
u_{t}\left( x\right) ,\text{ \ \ }t\geq 0,x\in \mathcal{X},
\label{eq: PIDE true}
\end{equation}%
with initial condition $u_{0}\left( x\right) =f\left( x\right) $ for all $%
x\in \mathcal{X}$. In the case of pure diffusions ($A_{J}=0$), the reader
may recognize (\ref{eq: u FK}) as the celebrated Feynman--Kac representation
of the solution to (\ref{eq: PIDE true}) which also holds for the general
case of jump--diffusions. The solution to this PIDE can be represented in
the following abstract manner: $u_{t}\left( x\right) =e^{\left( A-r\right)
t}f\left( x\right) $, where $e^{\left( A-r\right) t}$ is the exponential of
the operator $A-r$ in the sense that%
\begin{equation}
\frac{\partial e^{\left( A-r\right) t}}{\partial t}=\left( A-r\right)
e^{\left( A-r\right) t}.  \label{eq: exp(A-r) deriv}
\end{equation}

We are now interested in obtaining an approximation of $u_{t}\left( x\right) 
$ based on a series expansion w.r.t. time $t$. A simple version of this
would be a Taylor series expansion around $t=0$ on the form 
\begin{equation}
\tilde{u}_{t}\left( x\right) \equiv \sum\limits_{m=0}^{M}\frac{t^{m}}{m!}%
\left. \partial _{t}^{m}u_{t}\left( x\right) \right\vert
_{t=0}=\sum\limits_{m=0}^{M}\frac{t^{m}}{m!}\left( A-r\right) ^{m}f\left(
x\right) ,  \label{eq: u expand naive}
\end{equation}%
for some $M\geq 1$, where the second equality uses (\ref{eq: exp(A-r) deriv}%
). This type of moment approximations have found widespread use in the
literature; see, e.g., 
\citep[see,
e.g.,][]{aitsahalia2002, bakshi2006, yu2007, aitsahalia2008, Filipovic2013, li2013}%
. However, this expansion is not valid (well-defined) when, for example, $%
f\left( x\right) $ is a non--smooth function since the domain of the
operator $A_{D}$ is restricted to smooth functions, c.f. (\ref{eq: A_D def}%
). The transition density and option pricing examples provided above fall in
this category. We will now present a generalized version of above expansion
that circumvents this issue; this is done for fully general
semi--groups/Markov processes.

\subsection{General framework}

We take as given some semi--group\footnote{%
A family of linear operators $\left\{ E_{t}:t\geq 0\right\} $ is said to be
a semi--group if it satisfies (i) $E_{0}f=f$ and and (ii) $%
E_{s+t}f=E_{s}E_{t}f$ for all $s,t\geq 0$..} $\left( t,f\right) \mapsto
E_{t}f\left( x\right) $ of interest. It could, for example, be on the form (%
\ref{eq: E_t(f) def}) for some continuous--time Markov process $x_{t}$, not
necessarily a jump--diffusion process. But we do not restrict ourselves to
this case.

Suppose that $E_{t}f\left( x\right) $ is not available on closed form for a
given choice of $f$; we here show how this can be approximated through a
Taylor series expansion of $E_{t}f\left( x\right) $ w.r.t. $t$ when either $%
f $ is sufficiently regular, where the notion of "regular" will be made
clear below, or it can be expressed as the limit of a regular function. The
proposal is a generalisation of the one of \cite{kristensen2011}, but the
derivation will be carried out using semi--group theory which simplifies the
derivations substantially compared to \cite{kristensen2011} and provides new
insights into the expansion.

Let $\mathcal{D}\left( E\right) $ denote the domain of $E$ and let $B$ be
the infinitesimal operator of the semi--group defined as%
\begin{equation*}
Bf\left( x\right) =\lim_{t\rightarrow 0^{+}}\frac{E_{t}f\left( x\right)
-f\left( x\right) }{t},
\end{equation*}%
and let $\mathcal{D}\left( B\right) $ denote the domain of $B$; that is, the
set of functions $f\in \mathcal{D}\left( E\right) $ for which the above
limit exists. In the motivating jump--diffusion example, $B=A-r$. For any $%
f\in \mathcal{D}\left( E\right) $, we write%
\begin{equation}
E_{t}f\left( x\right) =e^{Bt}f\left( x\right) ,  \label{eq: E_t def gen}
\end{equation}%
where as before this should be interpreted as 
\begin{equation}
\partial _{t}E_{t}f\left( x\right) =BE_{t}f\left( x\right) ,\text{ \ \ }t>0.
\label{eq: PIDE general}
\end{equation}%
If the chosen $f$ satisfies $f\in \mathcal{D}\left( B^{M}\right) $, (\ref%
{eq: PIDE general}) also holds at $t=0$ and so the following Taylor series
expansion of $E_{t}f\left( x\right) $ w.r.t. $t$ around $t=0$ is valid, 
\begin{equation}
\hat{E}_{t}f\left( x\right) \equiv \sum\limits_{m=0}^{M}\frac{t^{m}}{m!}%
\left. \partial _{t}^{m}E_{t}f\left( x\right) \right\vert
_{t=0^{+}}=\sum\limits_{m=0}^{M}\frac{t^{m}}{m!}B^{m}f\left( x\right) ,
\label{eq: E approx regular}
\end{equation}%
where under weak conditions $\tilde{E}_{t}f\left( x\right) =E_{t}f\left(
x\right) +O\left( t^{M}\right) $. This is a generalised version of (\ref{eq:
u expand naive}).

We are now interested in generalising this series expansion to also work
when $f\notin \mathcal{D}\left( B^{M}\right) $. An important ingredient of
this is to first identify/construct a smoothed version of $f\left( x\right) $%
, denoted $u_{0,s}\left( x\right) $, $s\geq 0$ and $x\in \mathcal{X}$, which
we require to satisfy the following two conditions:

\begin{description}
\item[A.0] (i) $\lim_{s\rightarrow 0^{+}}u_{0,s}\left( x\right) =f\left(
x\right) $ and (ii) $u_{0,s}\in \mathcal{D}\left( \left( \partial
_{s}\right) ^{M_{1}}\right) \cap \mathcal{D}\left( B^{M_{2}}\right) $ for
some $s\geq 0$ and $M_{1},M_{2}\geq 1$.
\end{description}

The function $u_{0,s}\left( x\right) $ is chosen by the researcher and needs
to be available on closed form for the subsequent approximation to be
operational. The choice of $u_{0,s}\left( x\right) $ is application specific
in the sense that Assumption A.0 has to be satisfied: Part (i) requires $%
u_{0,s}\left( x\right) $ to converge towards the irregular function of
interest $f\left( x\right) \notin \mathcal{D}\left( B\right) $ as $%
s\rightarrow 0^{+}$. Part (ii) says that, for some $s>0$, $u_{0,s}\left(
x\right) $ is sufficiently regular in the sense that it is $M$ times
continuously differentiable in $s$ and each of these derivatives belongs to $%
\mathcal{D}\left( B^{M}\right) $.

Assumption A.0 allows for a broad range of smoothers. One choice of $%
u_{0,s}\left( x\right) $ which under great generality will satisfy A.0 is $%
u_{0,s}\left( x\right) =E_{0,s}f\left( x\right) $ where $E_{0}$ is another
semi--group chosen such that $u_{0,s}\left( x\right) $ is available on
closed form. This choice clearly satisfies part (i) and if $E_{0}$ has
similar properties as the one of interest, $E$, so that their respective
generators have shared domain, then part (ii) will also hold. A simple
choice of $E_{0}$, as proposed by \cite{kristensen2011}, is $E_{0,s}f\left(
x\right) =E\left[ f\left( x_{0,s}\right) |x_{0}=x\right] ,$where $x_{0,s}$
is another stochastic process specified by the researcher. The process $%
x_{0,t}$ could, for example, be chosen as a random walk type stochastic
process with transition density $p_{0,s}\left( y|x\right) =K\left( \left(
y-x\right) /\sqrt{s}\right) /\sqrt{s}$, for some kernel density $K:\mathbb{R}%
^{d}\mapsto \mathbb{R}^{d}$, in which case%
\begin{equation}
u_{0,s}\left( x\right) =\frac{1}{\sqrt{s}}\int_{\mathbb{R}^{d}}f\left(
y\right) K\left( \frac{y-x}{\sqrt{s}}\right) dy.  \label{eq: u kernel reg}
\end{equation}%
This choice satisfies (i) and if $K$ is $M_{1}$ times differentiable then $%
u_{0,s}\left( x\right) $ has the same property. The final requirement, $%
u_{0,s}\in \mathcal{D}\left( B^{M_{2}}\right) $, has to be checked on a case
by case basis.

Under A.0, the following identity holds:%
\begin{equation*}
f\left( x\right) =u_{0,0}\left( x\right) =e^{\left( -\partial _{s}\right)
s}u_{0,s}\left( x\right) ,
\end{equation*}%
where the second equality simply states that $u_{0,0}\left( x\right)
=u_{0,s}\left( x\right) +\int_{0}^{s}\left( -\partial _{\tau }\right)
u_{0,\tau }\left( x\right) d\tau $. Substituting this into (\ref{eq: E_t def
gen}) yields%
\begin{equation}
E_{t}f\left( x\right) =e^{Bt}e^{\left( -\partial _{s}\right) s}u_{0,s}\left(
x\right) =e^{\left( -\partial _{s}\right) s}e^{Bt}u_{0,s}\left( x\right) ,
\label{eq: u_t id}
\end{equation}%
where the last equality uses the following fundamental result: If two
infinitesimal operators, say, $B_{1}$ and $B_{2}$, commute in the sense that 
$B_{1}B_{2}f=B_{2}B_{1}f$ then $%
e^{B_{1}s}e^{B_{2}t}f=e^{B_{1}s+B_{2}t}f=e^{B_{2}t}e^{B_{1}s}f$. This
applies to the case of $B$ and $\partial _{s}$, $\partial _{s}B=B\partial
_{s}$, since $B$ acts on $x$ while $\partial _{s}$ acts on $s$.

Finally, carry out a Taylor series expansion w.r.t. $\left( s,t\right) $ to
obtain 
\begin{equation}
\hat{E}_{t}f\left( x\right)
=\sum\limits_{m_{1}=0}^{M_{1}}\sum\limits_{m_{2}=0}^{M_{2}}\frac{\left(
-s\right) ^{m_{1}}t^{m_{2}}}{m_{1}!m_{2}!}B^{m_{2}}\partial
_{s}^{m_{1}}u_{0,s}\left( x\right) ,  \label{eq: E approx general}
\end{equation}%
where the order of $\partial _{s}$ and $B$ can be exchanged since $\partial
_{s}^{m_{1}}B^{m_{2}}u_{0,s}\left( x\right) =B^{m_{2}}\partial
_{s}^{m_{1}}u_{0,s}\left( x\right) $. The resulting approximation error is
of order $O\left( s^{M_{1}}\right) +O\left( t^{M_{2}}\right) $. In
particular, the above expansion will generally be more precise as $s$ gets
smaller. Thus, we ideally want to choose $s$ as small as possible to reduce
the approximation error. However, for the chosen value of $s\geq 0$ A.0(ii)
has to be satisfied. This rules out, for example, $s=0$ when $f$ is
irregular since $u_{0,0}\left( x\right) =f\left( x\right) $.

However, if the approximation error is not a major concern (which is, for
example, the case if the order of approximation can be chosen sufficiently
large) then one can choose $s=t$ in which case $E_{t}f\left( x\right)
=e^{\left( B-\partial _{t}\right) t}u_{0,t}\left( x\right) $ and the
following special case of (\ref{eq: E approx general}) can be employed,%
\begin{equation}
\hat{E}_{t}f\left( x\right) =\sum\limits_{m=0}^{M}\frac{t^{m}}{m!}\left(
B-\partial _{t}\right) ^{m}u_{0,t}\left( x\right) .
\label{eq: E approx special}
\end{equation}

We show in Appendix \ref{sec: relationship} that (\ref{eq: E approx special}%
) is a generalized version of the proposal of \cite{kristensen2011} which in
turn contains as special cases the expansions of \cite{Yangetal2019} and 
\cite{Wan2021}.

\section{Implementation of expansion for jump-diffusion models\label{sec:
Practical}}

This section provides details regarding the practical implementation of the
proposed approximation in the jump--diffusion case. We here focus on the
special case of $r\left( x\right) =0$ and $s=t$, in which case $u_{t}\left(
x\right) =E_{t}f\left( x\right) =\mathbb{E}\left[ f\left( x_{t}\right)
|x_{0}=x\right] $ and%
\begin{equation}
\hat{u}_{t}\left( x\right) =\sum\limits_{m=0}^{M}\frac{t^{m}}{m!}\left(
A-\partial _{t}\right) ^{m}u_{0,t}\left( x\right) .
\label{eq: u-hat special}
\end{equation}%
This is done to avoid overly complicated notation. Most of the ideas and
arguments extend to the general case.

\subsection{Choice of smoothing function for irregular moments}

Following \cite{kristensen2011}, a simple choice of $u_{0,s}\left( x\right) $
that satisfies A.1 is $u_{0,s}\left( x\right) =E_{0,s}f\left( x\right) =%
\mathbb{E}\left[ f\left( x_{0,s}\right) |x_{0,0}=x\right] $ where $x_{0,s}$
is chosen as the solution to an auxiliary jump--diffusion model,%
\begin{equation}
dx_{0,t}=\mu _{0}\left( x_{0,t}\right) dt+\sigma _{0}\left( x_{0,t}\right)
dW_{t}+J_{0,t}dN_{0,t},
\end{equation}%
where $N_{0,t}$ is a Poisson process with jump intensity $\lambda _{0}\left(
x\right) $ and $J_{0,t}$ has density $\nu _{0}\left( \cdot |x\right) $. The
auxiliary model should be chosen so that $u_{0,t}\left( x\right) $ is
available on closed form. One such model is the multivariate Brownian motion
with drift model,%
\begin{equation}
dx_{0,t}=\mu _{0}dt+\sigma _{0}dW_{t}  \label{eq: BM aux}
\end{equation}%
where $\mu _{0}\in \mathbb{R}^{d}$ and $\sigma _{0}\in \mathbb{R}^{d\times
d} $ are constants, or the multivariate Vasicek (Ornstein--Uhlenbeck) model,%
\begin{equation*}
dx_{0,t}=\left( \mu _{0}+Ax_{0t}\right) dt+\sigma _{0}dW_{t},
\end{equation*}%
both of which have a Gaussian transition density on known form. In either
case,%
\begin{equation*}
u_{0,s}\left( x\right) =\int f\left( y\right) p_{0,s}\left( y|x\right) dy,
\end{equation*}%
where $p_{0,s}\left( y|x\right) $ is the transition density of the auxiliary
model. For example, in the case of (\ref{eq: BM aux}),%
\begin{equation}
p_{0,t}\left( y|x\right) =\frac{1}{\sqrt{2\pi t\left\vert \sigma
_{0}^{2}\right\vert }}\exp \left( -\frac{\left( y-x-t\mu _{0}\right)
^{\prime }\sigma _{0}^{-2}\left( y-x-t\mu _{0}\right) }{2t}\right) .
\label{eq: p_0 BM}
\end{equation}%
Note that with $\mu _{0}=0$ above specification corresponds to (\ref{eq: u
kernel reg}) with $K$ chosen as the Gaussian kernel.

Recall the two motivating examples of transition density and option price
approximation. In the case of $f\left( x\right) =\delta \left( y-x\right) $,
we get $u_{0,s}\left( x\right) =p_{0,s}\left( y|x\right) $. If $f\left(
x\right) =\left( \exp \left( x_{1}\right) -K\right) ^{+}$, and we set $\mu
_{0,1}=r-\sigma _{0,11}^{2}/2$ to ensure risk--neutrality in the auxiliary
model, then $u_{0,s}\left( x\right) $ takes the form of the well-known
formula for the risk--neutral expected pay-off of a call option in the
Black--Scholes model, 
\begin{equation}
u_{0,s}\left( x\right) =se^{rs}\Phi \left( d_{+}\left( x,s\right) \right)
-K\Phi \left( d_{-}\left( x,s\right) \right) ,  \label{eq: B-S pay-off}
\end{equation}%
where $d_{\pm }\left( x,s\right) =\left( x-\log \left( K\right) +\left( r\pm 
\frac{1}{2}\sigma _{0,11}^{2}\right) s\right) /\left( \sigma _{0,11}\sqrt{s}%
\right) $ and $\Phi \left( \cdot \right) $ denotes the cdf of the $N\left(
0,1\right) $ distribution.

\subsection{Pure diffusion case}

In the pure diffusion case, where no jump component is present so that $%
A_{J}=0$, analytical expressions of $\left( A_{D}-\partial _{t}\right)
^{m}u_{0,t}\left( x\right) $ are in principal straightforward to obtain
relying on symbolic software packages, such as Mathematica, since $A_{D}$ is
a differential operator. We refer to \cite{kristensen2011}, \cite%
{Yangetal2019} and \cite{Wan2021} for more details on this for the two
leading examples of density and option price approximations and with $u_{0,t}
$ chosen as the corresponding solution under (\ref{eq: BM aux}).

\subsection{Jump-diffusion case}

\subsubsection{State--independent jump or diffusion component}

Next, consider jump--diffusion models where either the diffusive component
or the jump component of $x_{t}$ are state--independent; the latter case
corresponds to the class of jump--diffusions considered in \cite{Wan2021}.

These two cases correspond to (i) $\mu \left( x\right) =\mu $ and $\sigma
^{2}\left( x\right) =\sigma ^{2}$ are constant or (ii) $\lambda \left(
x\right) =\lambda $ and $\nu \left( \cdot |x\right) =\nu \left( \cdot
\right) $ are independent of $x$, respectively. In either case, we can write 
$x_{t}=x_{D,t}+x_{J,t}$ where the diffusive component, $x_{D,t}$, and the
jump component, $x_{J,t}$, are now mutually independent. As a consequence,
the two generators $A_{D}$ and $A_{J}$ commute, $A_{D}A_{J}=A_{J}A_{D}$, in
which case%
\begin{equation}
u_{t}\left( x\right) =e^{\left( A_{D}+A_{J}\right) t}f\left( x\right)
=e^{A_{D}t}B_{J,t}\left( x\right) =e^{A_{J}t}B_{D,t}\left( x\right) ,
\label{eq: w A B commute}
\end{equation}%
where 
\begin{equation*}
B_{J,t}\left( x\right) =\mathbb{E}\left[ \left. f\left( x_{J,t}\right)
\right\vert x_{J,0}=x\right] ,\text{ \ \ }B_{D,t}\left( x\right) =\mathbb{E}%
\left[ \left. f\left( x_{D,t}\right) \right\vert x_{D,0}=x\right] .
\end{equation*}

Now, consider first the case where (ii) is satisfied. In this scenario, $%
x_{J,t}|x_{J,0}=x$ has density 
\begin{equation}
p_{J,t}\left( y|x\right) =\sum_{k=0}^{\infty }e^{-\lambda t}\frac{\left(
\lambda t\right) ^{k}}{k!}\nu _{k}\left( y-x\right) ,  \label{eq: p_J expr}
\end{equation}%
where $\nu _{k}\left( y\right) $ is the density of the sum of $k$
independent jumps, $\sum_{i=1}^{k}J_{i}$, $J_{i}\sim \nu \left( \cdot
\right) $. Since $p_{J,t}\left( y|x\right) $ is a smooth function then $%
x\mapsto B_{J,t}\left( \Delta x_{D,T}+x\right) $ is also a smooth function
even if $f\left( x\right) $ is irregular. Thus, if $B_{J,t}\left( x\right) $
is available on closed form then the smoothing device is not needed and we
can approximate $u_{t}$ by%
\begin{equation}
\hat{u}_{t}\left( x\right) =\sum\limits_{m=0}^{M}\frac{t^{m}}{m!}%
A_{D}^{m}B_{J,t}\left( x\right) .  \label{eq: w-hat J ver}
\end{equation}

Similar, if (i) is satisfied then $x_{D,t}|x_{D,0}$ is a Brownian motion
with drift and has Gaussian density as given in (\ref{eq: p_0 BM}). Because
of its simple dynamics, $B_{D,t}\left( x\right) $ is available on closed
form in many cases and will again be a smooth function; if so, we propose to
approximate $u_{t}$ by%
\begin{equation*}
\hat{u}_{t}\left( x\right) =\sum\limits_{m=0}^{M}\frac{t^{m}}{m!}%
A_{J}^{m}B_{D,t}\left( x\right) .
\end{equation*}%
If closed form expressions of neither $B_{D,t}$ nor $B_{J,t}$ are available,
it is still possible to simplify the computation using, for example,%
\begin{equation}
\hat{u}_{t}\left( x\right) =\left[ \sum\limits_{m=0}^{M}\frac{t^{m}}{m!}%
e^{A_{J}t}\left( A_{D}-\partial _{t}\right) ^{m}u_{0,t}\left( x\right) %
\right] ,  \label{eq: w-hat J ver 2}
\end{equation}%
assuming that closed form expressions of $e^{A_{J}\left( T-t\right) }\left(
A_{D}-\partial _{t}\right) ^{m}u_{0,t}\left( x\right) $ can be computed.
This last version is the one proposed by \cite{Wan2021} for jump--diffusions
with state--independent jumps.

\subsubsection{State--dependent jump and diffusive component}

Finally, consider the general case where $A_{J}\neq 0$ and both the
diffusion and jump component are state--dependent. First observe that when
the jumps are state--dependent, or have a complex distribution, $%
A_{J}f\left( x\right) $ cannot be evaluated analytically for a given
function $f$ in general. We propose to resolve this issue by approximating
the integral part of $A_{J}f\left( x\right) $, $A_{J1}f\left( x\right)
=\lambda \left( x\right) \int_{\mathbb{R}^{d}}f\left( x+c\right) \nu \left(
c\right) dc$, by%
\begin{equation}
\hat{A}_{J1}f\left( x\right) =\lambda \left( x\right) \sum_{s=1}^{S}\omega
_{s}f\left( x+c_{s}\right) ,  \label{eq: L-hat-J}
\end{equation}%
where $\omega _{s}$ and $c_{s}$, $s=1,...,S$, are integration weights and
nodes, respectively. For example, in the case of Monte Carlo integration
with $S$ random draws from $\nu $, $\omega _{s}=1/S$ and $c_{s}$ is the $s$%
th draw from $\nu \left( \cdot \right) $. The resulting approximate operator 
$\hat{A}_{J}f\left( x\right) =\hat{A}_{J1}f\left( x\right) -\lambda \left(
x\right) f\left( x\right) $ is on closed form and so we can now continue as
in the pure diffusion case. Also note that $\hat{A}_{J1}f\left( x\right)
\rightarrow A_{J1}f\left( x\right) $ as $S\rightarrow \infty $ which ensures
that the added numerical error can be controlled by choosing $S$ large
enough.

In the case that $v\left( c\right) $ belongs to the exponential family, the
generator of jump component, $A_{J1}$, is well--approximated using
Gauss-Hermite or Gauss-Laguerre quadrature. For example, when $J_{t}$ is
i.i.d. scalar with double exponential distribution with mean zero and
standard deviation $\sigma _{J}$, it follows from a change of variables that 
\begin{eqnarray*}
\int_{-\infty }^{\infty }f\left( x+c\right) \frac{1}{2\sigma _{J}}e^{-\frac{%
\left\vert c\right\vert }{\sigma _{J}}}dc &=&\int_{0}^{\infty }\left[
f\left( x+c\right) +f\left( x-c\right) \right] \frac{1}{2\sigma _{J}}e^{-%
\frac{c}{\sigma _{J}}}dc \\
&=&\frac{1}{2}\int_{0}^{\infty }\left[ f\left( x+\sigma _{J}c\right)
+f\left( x-\sigma _{J}c\right) \right] e^{-c}dc.
\end{eqnarray*}%
Then, given the nodes and weights, $c_{s}^{GL}$ and $\omega _{s}^{GL}$, for
the Gauss-Laguerre quadrature, the approximation takes the following form: 
\begin{equation*}
\int_{-\infty }^{\infty }f\left( x+c\right) \frac{1}{2\sigma _{J}}e^{-\frac{%
\left\vert c\right\vert }{\sigma _{J}}}dc\simeq \frac{1}{2}%
\sum_{s=0}^{S_{GL}-1}w_{s}^{GL}\left[ f\left( x+\sigma _{J}c_{s}^{GL}\right)
+f\left( x-\sigma _{J}c_{s}^{GL}\right) \right] .
\end{equation*}%
We use this approximation method in our numerical studies when we cannot
obtain an exact expression of the integral (as discussed with standard
packages such as Mathematica, or it may not be evaluated through equally
standard packages such as Matlab. We find that Gaussian quadrature is more
accurate and easier to implement than Monte Carlo methods with low
computational cost.

With $\hat{A}_{J1}$ replacing $A_{J1}$, we can now use a symbolic software
package to obtain expressions of $\left( A_{D}+\hat{A}_{J}-\partial
_{t}\right) ^{m}u_{0,t}$, $m=1,2,\ldots $. For example, 
\begin{equation*}
\left( A_{D}+\hat{A}_{J}-\partial _{t}\right) ^{2}u_{0,t}=\left(
A_{D}-\lambda -\partial _{t}\right) ^{2}u_{0,t}+\hat{A}_{J1}\left(
A_{D}-\lambda -\partial _{t}\right) u_{0,t}+\left( A_{D,t}-\lambda -\partial
_{t}\right) \hat{A}_{J1}u_{0,t}+\hat{A}_{J1}^{2}u_{0,t},
\end{equation*}%
where the evaluation of $\left( A_{D}-\lambda -\partial _{t}\right) ^{2}f$
and $\left( A_{D}-\lambda -\partial _{t}\right) f$ can done using symbolic
methods while (here in the univariate case for simplicity) 
\begin{eqnarray*}
\left( A_{D}-\lambda -\partial _{t}\right) \hat{A}_{J1}u_{0,t} &=&\partial
_{t}\left\{ \lambda \left( x\right) \sum_{s=1}^{S}\omega _{s}u_{0,t}\left(
x+c_{s}\right) \right\} +\mu \left( x\right) \partial _{x}\left\{ \lambda
\left( x\right) \sum_{s=1}^{S}\omega _{s}u_{0,t}\left( x+c_{s}\right)
\right\}  \\
&&+\frac{1}{2}\sigma ^{2}\left( x\right) \partial _{x}^{2}\left\{ \lambda
\left( x\right) \sum_{s=1}^{S}\omega _{s}u_{0,t}\left( x+c_{s}\right)
\right\} -\lambda ^{2}\left( x\right) \sum_{s=1}^{S}\omega _{s}u_{0,t}\left(
x+c_{s}\right) ,
\end{eqnarray*}%
and 
\begin{eqnarray*}
\hat{A}_{J1}^{2}u_{0,t}\left( x\right)  &=&\lambda \left( x\right)
\sum_{s_{1}=1}^{S}\omega _{s_{1}}\hat{A}_{J1}u_{0,t}\left(
x+c_{s_{1}}\right)  \\
&=&\lambda \left( x\right) \sum_{s_{1}=1}^{S}\omega _{s_{1}}\left[ \lambda
\left( x+c_{s_{1}}\right) \sum_{s_{2}=1}^{S}\omega _{s2}u_{0,t}\left(
x+c_{s_{1}}+c_{s_{2}}\right) \right] .
\end{eqnarray*}

\section{Theoretical properties\label{sec: theory}}

We first present a general theory of series expansions on the form (\ref{eq:
E approx regular}) when the function $f$ is regular in the sense that $f\in
D\left( B\right) $. We provide two sets of results: First, we derive an
error bound for any given value $M$ of the order of the expansion. Second,
we provide conditions under which the error bound vanishes as $M\rightarrow
\infty $ at a given value of the time horizon $t>0$. The conditions for the
second set of results come in two forms: We first provide conditions under
which the proposed power series expansion converges globally, i.e., over the
whole domain of $x_{t}$. These conditions are somewhat restrictive though
and rule out certain models and functions of interest. We therefore proceed
to examine how the approximation behaves on a given compact subset of the
full domain, and show that the power series expansion is consistent over
compact subsets under weak regularity conditions that most known models
satisfy. We then apply the theory to moments of jump--diffusions on the form
(\ref{eq: model}) and provide primitive conditions under which the expansion
is valid. Some of the results presented here rely on the important insights
found in the unpublished work of \cite{schaumburg2004} which we are indebted
to.

Next, we then proceed to analyze the "smoothed" expansion (\ref{eq: E approx
general}). As in the regular case, we are able to derive an error bound for
a given choice of $M$. But at the same time, this expansion is generally not
consistent in the sense that it will not converge as $M\rightarrow \infty $
for a fixed value of $t>0$. This is an important result since this shows
that the approximation error will eventually blow up as we increase $M$.
Thus, researchers should use the generalized version with caution.

\subsection{Series expansions of regular "moments"}

We take as given a semi--group $E_{t}:\mathcal{F\mapsto F}$ where $\mathcal{F%
}$  is equipped with some function norm $\left\Vert \cdot \right\Vert _{%
\mathcal{F}}$. In the leading case of $E_{t}f\left( x\right) =\mathbb{E}%
\left[ f\left( x_{t}\right) |x_{0}=x\right] $, two standard choices of $%
\left( \mathcal{F},\left\Vert \cdot \right\Vert _{\mathcal{F}}\right) $ are
the following: The first is the space of bounded functions equipped with a $%
\sup $ norm, $\left\Vert f\right\Vert _{\mathcal{F}}=\sup_{x\in \mathcal{X}%
}\left\vert f\left( x\right) \right\vert $. The second is the space of
functions with second moments equipped with the following $L_{2}$ norm,$\
\left\Vert f\right\Vert _{\mathcal{F}}^{2}=\int_{\mathcal{X}}^{\infty
}f^{2}\left( x\right) \pi \left( x\right) dx$ for some weighting function $%
\pi \left( x\right) $. In case of $x_{t}$ being stationary, a natural choice
for $\pi $ is the stationary marginal distribution in which case $\left\Vert
f\right\Vert _{\mathcal{F}}^{2}=\mathbb{E}\left[ f^{2}\left( x_{t}\right) %
\right] $; this norm was, for example, used by \cite{Scheinkman1995}.

We now formally introduce the so--called generator associated with $E_{t}$.
We will here work with the so--called extended generator which is defined as
follows (see, e.g., \cite{Meyn1993}):

\begin{definition}
\label{Def generator}We denote by $\mathcal{D}\left( B\right) $ the set of
functions $f\in \mathcal{F}$ for which there exists $g\in \mathcal{F}$ such
that, for each $t\geq 0$,%
\begin{equation}
E_{t}f=f+\int_{0}^{t}E_{s}gds,\text{ \ \ }\left\Vert E_{t}\left\vert
g\right\vert \right\Vert _{\mathcal{F}}<\infty ,  \label{eq: A extend def}
\end{equation}%
and we write $Bf:=g$ and call $B$ the (extended) generator of $E_{t}$.
\end{definition}

For a given function $f\in \mathcal{F}$, we will in the following frequently
use $u_{t}\left( x\right) $ to denote%
\begin{equation}
u_{t}\left( x\right) \equiv E_{t}f\left( x\right)  \label{eq: w def}
\end{equation}%
to economize on notation. As a first step, we show that $u_{t}\left(
x\right) $ solves (\ref{eq: PIDE general}) if $f\in \mathcal{D}\left(
B\right) $:

\begin{equation}
u_{t}\left( x\right) =f\left( x\right) +\int_{0}^{t}E_{s}\left( Bf\right)
\left( x\right) ds=f\left( x\right) +\int_{0}^{t}Bu_{s}\left( x\right) ds,
\label{eq: w representation}
\end{equation}%
or, equivalently,%
\begin{equation}
\partial _{t}u_{t}\left( x\right) =Bu_{t}\left( x\right) ,\text{ \ \ }t>0,%
\text{ \ \ }u_{0}\left( x\right) =f\left( x\right) .  \label{eq: u PIDE}
\end{equation}

\begin{theorem}
\label{Th: w representation}For any $f\in \mathcal{D}\left( B\right) $, $%
u_{t}\left( x\right) $ in (\ref{eq: w def}) satisfies:

\begin{enumerate}
\item For any fixed $t\geq 0$, $x\mapsto u_{t}\left( x\right) \in \mathcal{D}%
\left( B\right) $ with $Bu_{t}\left( x\right) =E_{t}\left( Bf\right) \left(
x\right) $.

\item If $t\mapsto E_{t}\left( Bf\right) $ is right-continuous at $t=0^{+}$
then $u_{t}\left( x\right) $ solves (\ref{eq: u PIDE}).
\end{enumerate}
\end{theorem}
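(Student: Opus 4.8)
The two claims should follow from the defining property of the extended generator (equation (2.??), i.e.\ \eqref{eq: A extend def}) combined with the semigroup property $E_{s+t} = E_s E_t$. For part (1), the natural move is to apply $E_t$ to both sides of the identity $E_r f = f + \int_0^r E_s (Bf)\, ds$ (the defining relation for $f \in \mathcal{D}(B)$ with $g = Bf$), and then use the semigroup property to rewrite $E_t E_r f = E_{t+r} f$ and $E_t E_s (Bf) = E_{s+t}(Bf)$. This would give $E_{t+r} f = E_t f + \int_0^r E_{s+t}(Bf)\, ds$, i.e.\ $u_{t+r} = u_t + \int_0^r E_{s+t}(Bf)\, ds$. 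Reading this with $u_t$ in the role of ``$f$'' and $E_t(Bf)$ in the role of ``$g$'' is exactly the statement that $u_t \in \mathcal{D}(B)$ with $B u_t = E_t(Bf)$, provided one also checks the integrability side condition $\|E_r |E_t(Bf)|\|_{\mathcal{F}} < \infty$; this last bound follows from $|E_t g| \le E_t|g|$ (positivity of the conditional-expectation operator) and $\|E_{t+r}|g|\|_{\mathcal{F}} = \|E_r E_t |g|\|_{\mathcal{F}} \le \|E_r |E_t g| \|_{\mathcal F}\le$ the finite quantity guaranteed by $f \in \mathcal{D}(B)$.

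The second equality in part (1), $B u_t(x) = E_t(Bf)(x)$ rewritten as $B u_t = B E_t f$ where $B$ acts in the $x$ variable, together with the interchange $E_t(Bf) = B(E_t f)$, is really the same computation read the other way; I would just note that the displayed identity $u_{t+r} = u_t + \int_0^r E_s(B u_t)\, ds$ (obtained by substituting $B u_t = E_t(Bf)$ back in) is the $\mathcal{D}(B)$-membership certificate for $u_t$, which is what \eqref{eq: w representation} asserts.

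For part (2), I would start from the identity just established, specialized at $t = 0$: $u_r = f + \int_0^r E_s(Bf)\, ds$. Under right-continuity of $s \mapsto E_s(Bf)$ at $s = 0^+$, the fundamental theorem of calculus (for the Bochner/pointwise integral of a right-continuous integrand) gives that $r \mapsto u_r$ is right-differentiable at $r = 0$ with $\partial_r u_r|_{r=0^+} = E_0(Bf) = Bf$. For general $r > 0$, differentiating $u_{t+r} = u_t + \int_0^r E_{s+t}(Bf)\, ds$ in $t$ — again using right-continuity of $s \mapsto E_s(Bf)$, now at $s = t$, which I would want to argue follows from the $t=0$ right-continuity plus the semigroup structure, or simply assume as part of the hypothesis as the statement seems to intend — yields $\partial_t u_t = E_t(Bf) = B u_t$ for $t > 0$, with initial condition $u_0 = f$. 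That is precisely \eqref{eq: u PIDE}.

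The main obstacle is analytic rather than algebraic: justifying the differentiation of the integral $\int_0^t E_s(Bf)\, ds$ in the correct (one-sided) sense and at the correct points, and confirming that right-continuity at $0^+$ propagates to right-continuity at every $t$ (via $E_s(Bf) = E_t E_{s-t}(Bf)$ for $s \ge t$, together with strong continuity / boundedness of $E_t$ on $\mathcal{F}$). One must also be careful that the equality ``$B u_t = E_t(Bf)$'' is asserted for $u_t$ as an element of $\mathcal{F}$ in the $x$-variable, so the interchange of $E_t$ (acting on the ``future'' time variable) with $B$ (acting on $x$) needs the semigroup identity, not any commuting-operators heuristic; I would present that interchange as a direct consequence of the computation above rather than as a separate lemma.
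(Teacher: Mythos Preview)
Your proposal is correct and follows essentially the same route as the paper: for part (1) you combine the defining identity $E_r f = f + \int_0^r E_s(Bf)\,ds$ with the semigroup property $E_t E_s = E_{s+t}$ to exhibit the $\mathcal{D}(B)$-certificate for $u_t$ with $Bu_t = E_t(Bf)$, and for part (2) you differentiate the resulting integral representation using right-continuity of the integrand; the paper's proof is a terse version of exactly this computation. If anything, you are more careful than the paper about the integrability side condition and about propagating right-continuity from $0^+$ to general $t$ via the semigroup, points the paper leaves implicit.
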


The continuity condition in the second part of the theorem is satisfied
under great generality when $E_{t}$ is on the form (\ref{eq: E_t(f) def}). A
sufficient condition is that the mapping $\left\{ x_{t}:t\geq 0\right\} $ is
Borel measurable w.r.t. the product sigma algebra, c.f. p. 771 in \cite%
{Scheinkman1995}. The above result, and many subsequent ones, requires the
function $f$ defining $u_{t}\left( x\right) $ to satisfy $f\in \mathcal{D}%
\left( B\right) $. Unfortunately, it rarely easy to give an explicit
characterization of $\mathcal{D}\left( B\right) $. Instead, we will often
work in a smaller subspace, say, $\mathcal{D}_{0}\left( B\right) \subseteq 
\mathcal{D}\left( B\right) $ which is known to us; see Section \ref{sec:
conv jump-diffusion} for an example. One says that $\mathcal{D}_{0}\left(
B\right) $ is a core of $\mathcal{D}\left( B\right) $ if it is a dense
subset of the latter.

We recognize (\ref{eq: u PIDE}) as a generalized version of the celebrated
Kolmogorov's backward equation for jump-diffusion models. In particular, it
implies that $\lim_{t\rightarrow 0^{+}}\partial _{t}u_{t}\left( x\right)
=Bf\left( x\right) $. More generally, under suitable regularity conditions, $%
t\mapsto u_{t}\left( x\right) $ will be $M\geq 1$ times differentiable with%
\begin{equation}
\lim_{t\rightarrow 0^{+}}\partial _{t}^{m}u_{t}\left( x\right) =B^{m}f\left(
x\right) ,\text{ \ \ }0\leq m\leq M,  \label{eq: w diff}
\end{equation}%
in which case the following Taylor series approximation is valid,%
\begin{equation}
\hat{u}_{t}\left( x\right) :=\sum_{m=0}^{M}\frac{t^{m}}{m!}B^{m}f\left(
x\right) .  \label{eq: w-hat def}
\end{equation}%
In order for $\hat{u}_{t}\rightarrow u_{t}$ as $M\rightarrow \infty $, we
need $t\mapsto u_{t}$ to be analytic:

\begin{definition}
\label{Def w analytic}$t\mapsto u_{t}$ is said to be analytic (at $t=0^{+}$)
with radius $T_{0}>0$ if it is infinitely differentiable w.r.t. $t$ and
satisfies 
\begin{equation}
u_{t}=\lim_{M\rightarrow \infty }\hat{u}_{t},\text{ \ }t\leq T_{0}.
\label{eq: power series representation}
\end{equation}
\end{definition}

The definition of $B$ and the convergence result (\ref{eq: power series
representation}) are stated w.r.t. the chosen function norm $\left\Vert
\cdot \right\Vert _{\mathcal{F}}$ introduced earlier. As we shall see,
different assumptions regarding the model and the chosen function $f$
defining $u$ motivate different spaces and norms. Ideally, we would like the
convergence to take place uniformly over all values of $x\in \mathcal{X}$,
but this will only hold for a small set of functions $f$ and models, and so
in some applications it is necessary to work with the weaker $L_{2}$ norm.

In order for $u_{t}$ to be analytic, we need as a minimum that $u_{t}$ is
infinitely differentiable so that (\ref{eq: w diff}) holds for all $m\geq 1$%
. This in turn requires $B^{m}f\left( x\right) $, $m\geq 1$, to be
well--defined. That is, $f\in \mathcal{D}\left( B^{m}\right) $, $m\geq 1$,
where the domains are defined recursively as%
\begin{equation*}
\mathcal{D}\left( B^{m}\right) =\left\{ f\in \mathcal{D}\left(
B^{m-1}\right) :Bf\in \mathcal{D}\left( B\right) \right\} \subseteq \mathcal{%
D}\left( B^{m-1}\right) ,\text{ \ \ }m=2,3,...
\end{equation*}%
The following result shows that the Taylor series $\hat{u}_{t}\left(
x\right) $ is a valid approximation for any $f\in \mathcal{D}\left(
B^{M+1}\right) $ and also provide an error bound for it:

\begin{theorem}
\label{Th: w-hat error bound}For any $f\in \mathcal{D}\left( B^{M+1}\right) $
and $t\geq 0$, $\hat{u}_{t}\left( x\right) $ in (\ref{eq: w-hat def})
satisfies%
\begin{eqnarray*}
\left\vert u_{t}\left( x\right) -\hat{u}_{t}\left( x\right) \right\vert
&=&\left\vert \int_{0}^{t}\int_{0}^{t_{1}}\cdots
\int_{0}^{t_{M}}B^{M+1}u_{t_{M+1}}\left( x\right) dt_{M+1}\cdots
dt_{1}\right\vert \\
&\leq &\frac{t^{M+1}}{\left( M+1\right) !}\sup_{0\leq s\leq t}\left\vert
B^{M+1}u_{s}\left( x\right) \right\vert .
\end{eqnarray*}
\end{theorem}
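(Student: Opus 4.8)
The plan is to establish the integral (Taylor-with-remainder) identity first and then bound the remainder. The starting point is Theorem \ref{Th: w representation}: for any $f \in \mathcal{D}(B)$ we have $u_t(x) = f(x) + \int_0^t B u_s(x)\, ds$, and part 1 of that theorem gives $B u_t(x) = E_t(Bf)(x)$, so in particular $t \mapsto u_t(x)$ is itself of the form $E_t g$ for $g = Bf$. The key observation is that this structure iterates: since $f \in \mathcal{D}(B^{M+1})$, we have $Bf \in \mathcal{D}(B^M)$, and applying Theorem \ref{Th: w representation} to $Bf$ shows $t \mapsto B u_t = E_t(Bf)$ is again differentiable in the same sense with derivative $E_t(B^2 f) = B^2 u_t$, and so on. Thus $\partial_t^m u_t(x) = B^m u_t(x) = E_t(B^m f)(x)$ for $0 \le m \le M+1$, and in particular $\lim_{t \to 0^+}\partial_t^m u_t(x) = B^m f(x)$ for $0 \le m \le M$.

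The second step is to iterate the integral representation $M+1$ times. Writing $u_t = f + \int_0^t B u_{t_1}\, dt_1$ and then substituting $B u_{t_1} = Bf + \int_0^{t_1} B^2 u_{t_2}\, dt_2$, and continuing, one obtains
\begin{equation*}
u_t(x) = \sum_{m=0}^{M} \frac{t^m}{m!} B^m f(x) + \int_0^t \int_0^{t_1} \cdots \int_0^{t_M} B^{M+1} u_{t_{M+1}}(x)\, dt_{M+1} \cdots dt_1,
\end{equation*}
where the polynomial part collapses to $\sum_{m=0}^M (t^m/m!) B^m f(x)$ because the iterated integral of the constant $B^m f(x)$ over the simplex $\{0 \le t_m \le \cdots \le t_1 \le t\}$ equals $t^m/m!$. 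Since the first sum is exactly $\hat u_t(x)$ from (\ref{eq: w-hat def}), subtracting gives the claimed identity
\begin{equation*}
u_t(x) - \hat u_t(x) = \int_0^t \int_0^{t_1} \cdots \int_0^{t_M} B^{M+1} u_{t_{M+1}}(x)\, dt_{M+1} \cdots dt_1.
\end{equation*}
The bound then follows from $|u_t(x) - \hat u_t(x)| \le \sup_{0 \le s \le t}|B^{M+1} u_s(x)| \cdot \mathrm{vol}\{0 \le t_{M+1} \le \cdots \le t_1 \le t\}$ and the fact that the simplex has volume $t^{M+1}/(M+1)!$.

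The main technical point to handle carefully is the inductive bookkeeping: one must verify at each of the $M+1$ stages that Theorem \ref{Th: w representation} applies, i.e. that the function being fed in lies in $\mathcal{D}(B)$ (this is where $f \in \mathcal{D}(B^{M+1})$ is used, peeled off one power at a time) and that the integrability condition $\|E_t|B^{m}f|\|_{\mathcal{F}} < \infty$ in Definition \ref{Def generator} holds so the Fubini-type interchange of the iterated integrals is legitimate. Because each $B^m f$ with $m \le M+1$ lies in $\mathcal{D}(B^{M+1-m}) \subseteq \mathcal{D}(B)$, and the extended-generator definition builds in exactly the needed integrability, these checks are routine but should be stated. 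Everything is pointwise in $x$ here (the final estimate is $|\cdot|$, not $\|\cdot\|_{\mathcal{F}}$), so no additional uniformity over $\mathcal{X}$ is required; the only genuine input beyond arithmetic is the iterated version of the already-established representation.
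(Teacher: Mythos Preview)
Your proposal is correct and follows essentially the same approach as the paper: iterate the integral representation $E_t f = f + \int_0^t E_{t_1}(Bf)\,dt_1$ (Dynkin's formula, i.e.\ (\ref{eq: w representation})) $M+1$ times, peeling off one power of $B$ at each step using $f\in\mathcal{D}(B^{M+1})$, and then bound the remaining simplex integral by its volume $t^{M+1}/(M+1)!$ times the supremum of the integrand. The paper's proof is terser and omits the integrability/Fubini bookkeeping you flag, but the argument is the same.
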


We recognize the error bound as a generalized version of the one that holds
for a Taylor series approximation of a $M+1$ times differentiable function.
The error bound can be used to show convergence of our expansion of the
transition density with $M\geq 1$ fixed as the time distance between
observations, corresponding to $t$, shrinks to zero. This is the standard
result found in the existing literature on expansions of moments of
continuous-time processes. But, based on this result alone, the
corresponding approximate moment is then only guaranteed to converge towards
the exact one when high-frequency data is available. That is, when $t$
shrinks to zero as the number of observations diverge. For a fixed $t$,
there is no reason why the error bound provided in the theorem will not blow
up as $M\rightarrow \infty $.

We will therefore now derive conditions that guarantee convergence for a
given fixed $t>0$. From Theorem \ref{Th: w-hat error bound} we see that
convergence of $\hat{u}_{t}\left( x\right) $ requires the following two
conditions to be satisfied: $f\in \mathcal{D}\left( B^{\infty }\right) $ and 
$\left\Vert \frac{t^{m}}{m!}B^{m}f\right\Vert _{\mathcal{F}}\rightarrow 0$
as $m\rightarrow \infty $. The convergence result will generally not hold
for all $t>0$. Formally, the radius of convergence is given by 
\begin{equation}
T_{0}=1/\lim \sup_{m\rightarrow \infty }\left\{ \left\Vert B^{m}f\right\Vert
_{\mathcal{F}}/m!\right\} ^{1/m}.  \label{eq: T_0 def}
\end{equation}%
Often the exact value of $T_{0}$ cannot be derived, but it may still be
possible to identify a lower bound for it. Similarly, it is in many
applications difficult to provide a precise characterization of $\mathcal{D}\left( B^{\infty }\right) =\bigcap\nolimits_{m=1}^{\infty }\mathcal{D}\left( B^{m}\right) $. One partial characterization is that it constitutes a
core of $\mathcal{D}\left( B\right) $, c.f. Theorem 7.4.1 of \cite{davies2007}, so that most functions in $\mathcal{D}\left( B\right) $ also belongs to $\mathcal{D}\left( B^{\infty }\right) $. But this provides no
guarantees for that a given function in $\mathcal{D}\left( B\right) $
belongs to $\mathcal{D}\left( B^{\infty }\right) $.

Instead one may seek to identify a subset $\mathcal{F}_{0}\subseteq \mathcal{%
F}$ so that (i) $\mathcal{F}_{0}\subseteq \mathcal{D}\left( B\right) $ and
(ii) the image $B\left( \mathcal{F}_{0}\right) =\left\{ Bf|f\in \mathcal{F}%
_{0}\right\} \subseteq \mathcal{F}_{0}$. For a given $f\in \mathcal{F}_{0}$,
part (i) ensures that $Bf$ is well-defined while part (ii) implies that $%
Bf\in \mathcal{F}_{0}$. In particular, (i)--(ii) guarantee that $\mathcal{F}%
_{0}\subseteq \mathcal{D}\left( B^{m}\right) $ for all $m\geq 1$. As a
consequence, $\mathcal{F}_{0}\subseteq \mathcal{D}\left( B^{\infty }\right) $
thereby providing us with a partial characterization of $\mathcal{D}\left(
B^{\infty }\right) $. In particular, for any given $f\in \mathcal{F}_{0}$,
we have that $t\mapsto u_{t}$ is infinitely differentiable. The following
theorem states the formal result of the above analysis:

\begin{theorem}
\label{Th: analytic}Suppose that $f\in \mathcal{D}\left( B^{\infty }\right) $%
. Then $u_{t}$ is infinitely differentiable and, with the radius of
convergence $T_{0}\geq 0$ given in (\ref{eq: T_0 def}), 
\begin{equation*}
\forall t\leq T_{0}:\left\Vert u_{t}-\hat{u}_{t}\right\Vert _{\mathcal{F}%
}\leq \frac{\left( t/T_{0}\right) ^{M+1}}{1-t/T_{0}}\rightarrow 0\text{ as }%
M\rightarrow \infty .
\end{equation*}

The domain $\mathcal{D}\left( B^{\infty }\right) $ is a core of $\mathcal{D}%
\left( B\right) $. A sufficient condition for $f\in \mathcal{D}\left(
B^{\infty }\right) $, is that $f\in \mathcal{F}_{0}$ for some $\mathcal{F}%
_{0}\subseteq \mathcal{D}\left( B\right) $ satisfying $B\left( \mathcal{F}%
_{0}\right) \subseteq \mathcal{F}_{0}$.
\end{theorem}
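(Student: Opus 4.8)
The plan is to prove the three assertions in turn, the quantitative convergence bound being the only part that requires real work; the rest is bookkeeping built on Theorems \ref{Th: w representation} and \ref{Th: w-hat error bound}. First, infinite differentiability: since $f\in\mathcal{D}(B^{m})$ for every $m\geq 1$, I would iterate part 1 of Theorem \ref{Th: w representation}. Applied to $f$ it gives $u_{t}=E_{t}f\in\mathcal{D}(B)$ with $Bu_{t}=E_{t}(Bf)$; applied to $Bf\in\mathcal{D}(B^{m-1})$ it gives $E_{t}(Bf)\in\mathcal{D}(B)$ and $B^{2}u_{t}=E_{t}(B^{2}f)$; inductively $B^{k}u_{t}=E_{t}(B^{k}f)$ for all $k\geq 0$. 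Feeding this together with the integral identity (\ref{eq: w representation}) applied in turn to $f,Bf,B^{2}f,\dots$ and the continuity property noted after Theorem \ref{Th: w representation} (holding under the path-measurability condition discussed there) into an induction on $k$ shows that $t\mapsto u_{t}$ is infinitely differentiable in $\mathcal{F}$ with $\partial_{t}^{k}u_{t}=E_{t}(B^{k}f)$, hence $\partial_{t}^{k}u_{t}|_{t=0^{+}}=B^{k}f$. In particular (\ref{eq: w-hat def}) is the genuine $M$-th order Taylor polynomial of $u_{t}$ at $0$, and since $f\in\mathcal{D}(B^{M+1})$ for every $M$, Theorem \ref{Th: w-hat error bound} applies for every $M$.

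For the error bound, I would start from Theorem \ref{Th: w-hat error bound}: the remainder $u_{t}-\hat{u}_{t}$ equals the $(M+1)$-fold iterated integral of $s\mapsto B^{M+1}u_{s}$ and is bounded pointwise by $\tfrac{t^{M+1}}{(M+1)!}\sup_{0\leq s\leq t}|B^{M+1}u_{s}(x)|$. Taking $\|\cdot\|_{\mathcal{F}}$ through this bound --- immediate for the sup norm, a routine monotone/dominated-convergence argument for the $L_{2}(\pi)$ norm --- and substituting $B^{M+1}u_{s}=E_{s}(B^{M+1}f)$ from the first step, the contraction property of $\{E_{s}\}$ on $(\mathcal{F},\|\cdot\|_{\mathcal{F}})$ (valid in both leading cases) gives $\|B^{M+1}u_{s}\|_{\mathcal{F}}\leq\|B^{M+1}f\|_{\mathcal{F}}$, so $\|u_{t}-\hat{u}_{t}\|_{\mathcal{F}}\leq\tfrac{t^{M+1}}{(M+1)!}\|B^{M+1}f\|_{\mathcal{F}}$. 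By the Cauchy--Hadamard characterisation (\ref{eq: T_0 def}) of $T_{0}$ this tends to $0$ for every $t<T_{0}$; equivalently, since $\mathcal{F}$ is complete the power series converges and the remainder is the tail $\sum_{m\geq M+1}\tfrac{t^{m}}{m!}B^{m}f$, majorised by the geometric tail $\sum_{m\geq M+1}(t/T_{0})^{m}=\tfrac{(t/T_{0})^{M+1}}{1-t/T_{0}}$, which is the displayed rate.

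That $\mathcal{D}(B^{\infty})$ is a core of $\mathcal{D}(B)$ is the classical fact for generators of strongly continuous semigroups, Theorem 7.4.1 of \cite{davies2007}, already quoted in the text, so I would simply cite it. For the sufficient condition, if $f\in\mathcal{F}_{0}$ with $\mathcal{F}_{0}\subseteq\mathcal{D}(B)$ and $B(\mathcal{F}_{0})\subseteq\mathcal{F}_{0}$, then $Bf\in\mathcal{F}_{0}\subseteq\mathcal{D}(B)$, so $f\in\mathcal{D}(B^{2})$, and inductively $B^{m}f\in\mathcal{F}_{0}\subseteq\mathcal{D}(B)$ for all $m$, whence $f\in\bigcap_{m\geq 1}\mathcal{D}(B^{m})=\mathcal{D}(B^{\infty})$ --- this just formalises the paragraph preceding the theorem.

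The step I expect to be the real obstacle is the error bound: controlling $\sup_{0\leq s\leq t}\|B^{M+1}u_{s}\|_{\mathcal{F}}$ uniformly in $s$ and tightly enough to read off the stated geometric rate. This is where the semigroup structure enters essentially (uniform boundedness of the $E_{s}$ on $\mathcal{F}$), and where care is needed to reconcile the $\limsup$ in (\ref{eq: T_0 def}) with the precise constant $1/(1-t/T_{0})$: in practice one proves the bound with $T_{0}$ replaced by an arbitrary $\rho<T_{0}$ and then lets $\rho\uparrow T_{0}$, the displayed inequality being the natural envelope. The remaining ingredients --- the iteration of Theorem \ref{Th: w representation}, the core statement, and the sufficient condition --- are routine.
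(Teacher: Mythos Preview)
Your proposal is correct and follows essentially the same approach as the paper: bound the remainder by the tail $\sum_{m\geq M+1}\tfrac{t^{m}}{m!}\|B^{m}f\|_{\mathcal{F}}$ and majorise it by the geometric series $\sum_{m\geq M+1}(t/T_{0})^{m}$ using the definition of $T_{0}$. You are in fact more complete than the paper's own proof, which handles only the error bound and simply asserts $\|B^{m}f\|_{\mathcal{F}}/m!\leq T_{0}^{-m}$ without the $\rho\uparrow T_{0}$ caveat you correctly flag; your treatment of infinite differentiability via iteration of Theorem~\ref{Th: w representation}, the citation for the core statement, and the induction for the sufficient condition fill in parts the paper leaves implicit.
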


The last part of the theorem provides one sufficient condition for $u_{t}$
to be analytic. There are two tensions when seeking such a suitable set $%
\mathcal{F}_{0}$: First, we would like to choose $\mathcal{F}_{0}$ as large
as possible in order to guarantee convergence of $\hat{u}_{t}$ over a large
set of functions. But at the same time we need to restrict $\mathcal{F}_{0}$
so that it satisfies $B\left( \mathcal{F}_{0}\right) \subseteq \mathcal{F}%
_{0}$. Second, to ensure a strong convergence result, we would like to
choose the norm $\left\Vert \cdot \right\Vert _{\mathcal{F}}$ as "strong" as
possible, e.g., as the $\sup $ norm. But establishing $T_{0}>0$ then proves
more difficult.

One way of designing the function class $\mathcal{F}_{0}$ is to build it
from the so--called eigenfunctions of $B$. Eigenfunctions are defined in
terms of the so--called spectrum of $B$,%
\begin{equation*}
\sigma \left( B\right) =\left\{ \xi \in \mathbb{C}:\left( \xi I-B\right) 
\text{ is not a bijection}\right\} \subset \left\{ \xi \in \mathbb{C}:Re\left( \xi \right) <0\right\} \cup \left\{ 0\right\} .
\end{equation*}%
In particular, for any given eigenvalue $\xi \in \sigma \left( B\right) $
there exists a corresponding eigenfunction $\phi \in \mathcal{D}\left(
B\right) $ so that $\left( \xi I-B\right) \phi =0\Leftrightarrow B\phi =\xi
\phi $. This in turn implies that $\phi \in \mathcal{D}\left( B^{\infty
}\right) $ with $B^{m}\phi =\xi ^{m}\phi $. Thus, 
\begin{equation*}
E_{t}\phi \left( x\right) =e^{\xi t}\phi \left( x\right) =\sum_{m=0}^{\infty
}\frac{t^{m}}{m!}B^{m}\phi \left( x\right) ,
\end{equation*}%
which is clearly analytic and so our power series expansion will converge
for any eigenfunction. The following corollary shows that in principle $%
\mathcal{F}_{0}$ can be chosen as the span of any given countable set of
eigenfunctions:

\begin{corollary}
\label{Cor: eigenfct}For any given sequence $\left\{ \left( \xi _{i},\phi
_{i}\right) \right\} _{i=1}^{\infty }$ of eigenpairs of $B$,%
\begin{equation*}
\mathcal{F}_{0}=\left\{ f=\sum_{i=1}^{\infty }\alpha _{i}\phi
_{i}:\sum_{i=1}^{\infty }\left\vert \alpha _{i}\right\vert <\infty \right\}
\subseteq \mathcal{D}\left( B^{\infty }\right) .
\end{equation*}
\end{corollary}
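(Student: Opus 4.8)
The plan is to apply the sufficient condition isolated in the last sentence of Theorem~\ref{Th: analytic}: it suffices to produce a set $\mathcal{F}_{0}\subseteq\mathcal{D}(B)$ with $B(\mathcal{F}_{0})\subseteq\mathcal{F}_{0}$, and the natural candidate is exactly the $\mathcal{F}_{0}$ in the statement. The work then splits into three pieces: (a) $f=\sum_{i}\alpha_{i}\phi_{i}$ is a well-defined element of $\mathcal{F}$; (b) $f\in\mathcal{D}(B)$ with $Bf=\sum_{i}\alpha_{i}\xi_{i}\phi_{i}$; and (c) this $Bf$ again belongs to $\mathcal{F}_{0}$. Iterating (b)--(c) then gives $\mathcal{F}_{0}\subseteq\mathcal{D}(B^{m})$ for every $m$, i.e.\ $\mathcal{F}_{0}\subseteq\mathcal{D}(B^{\infty})$.

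For (a), normalise $\Vert\phi_{i}\Vert_{\mathcal{F}}=1$ without loss of generality; then $\sum_{i}|\alpha_{i}|<\infty$ and completeness of $\mathcal{F}$ make the series converge absolutely in $\mathcal{F}$, and the same estimate shows $\mathcal{F}_{0}$ is a closed (Banach) subspace. For (b), I would first record that $E_{t}$ acts diagonally on eigenfunctions: from $B\phi_{i}=\xi_{i}\phi_{i}$, Definition~\ref{Def generator} gives $E_{t}\phi_{i}=\phi_{i}+\xi_{i}\int_{0}^{t}E_{s}\phi_{i}\,ds$, a scalar integral equation whose unique solution is $E_{t}\phi_{i}=e^{\xi_{i}t}\phi_{i}$; since $\sigma(B)\subseteq\{\mathrm{Re}(\xi)<0\}\cup\{0\}$ one has the crucial bound $|e^{\xi_{i}t}|\le1$ for $t\ge0$. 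As $E_{t}$ is a bounded operator, continuity together with $\sum_{i}|\alpha_{i}e^{\xi_{i}t}|\le\sum_{i}|\alpha_{i}|<\infty$ yields $E_{t}f=\sum_{i}\alpha_{i}e^{\xi_{i}t}\phi_{i}$. I would then verify the defining identity \eqref{eq: A extend def} for $f$ with $g:=\sum_{i}\alpha_{i}\xi_{i}\phi_{i}$: integrating the series termwise, $\int_{0}^{t}E_{s}g\,ds=\sum_{i}\alpha_{i}\xi_{i}\phi_{i}\int_{0}^{t}e^{\xi_{i}s}\,ds=\sum_{i}\alpha_{i}(e^{\xi_{i}t}-1)\phi_{i}=E_{t}f-f$ (the $\xi_{i}=0$ terms vanishing on both sides), and $\Vert E_{t}|g|\Vert_{\mathcal{F}}\le\sum_{i}|\alpha_{i}\xi_{i}|<\infty$, as required.

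The main obstacle is precisely the convergence, in $\mathcal{F}$, of $g=\sum_{i}\alpha_{i}\xi_{i}\phi_{i}$ --- and, for the iteration, of $\sum_{i}\alpha_{i}\xi_{i}^{m}\phi_{i}$ --- which requires $\sum_{i}|\alpha_{i}||\xi_{i}|^{m}<\infty$ and is \emph{not} implied by $\sum_{i}|\alpha_{i}|<\infty$ alone. The cleanest way to close this is to observe that in the applications of interest the eigenvalues actually entering the expansion satisfy $\Lambda:=\sup_{i}|\xi_{i}|<\infty$; then $\sum_{i}|\alpha_{i}||\xi_{i}|^{m}\le\Lambda^{m}\sum_{i}|\alpha_{i}|<\infty$, so (b)--(c) hold for every $m$ and $\mathcal{F}_{0}\subseteq\mathcal{D}(B^{\infty})$ follows from Theorem~\ref{Th: analytic}; moreover $\Vert B^{m}f\Vert_{\mathcal{F}}/m!\le(\Lambda^{m}/m!)\sum_{i}|\alpha_{i}|\to0$, so the radius of convergence in \eqref{eq: T_0 def} is $T_{0}=\infty$ and the expansion converges for all $t>0$. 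Absent such a bound, the claim should be read with the tacit proviso that $(\alpha_{i})$ decays fast enough relative to $(\xi_{i})$, or else with $\mathcal{F}_{0}$ taken as the span of finitely many eigenfunctions, for which $\mathcal{F}_{0}\subseteq\mathcal{D}(B^{\infty})$ is immediate since $\mathcal{D}(B^{\infty})$ is a linear space; I would confirm which reading the authors intend before writing out the final proof.
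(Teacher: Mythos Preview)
The paper provides no explicit proof of this corollary; it is stated as an immediate consequence of the sufficient condition in Theorem~\ref{Th: analytic} together with the observation, made in the paragraph preceding the corollary, that each eigenfunction $\phi_i$ lies in $\mathcal{D}(B^\infty)$ with $B^m\phi_i=\xi_i^m\phi_i$. Your strategy is therefore exactly the one the paper has in mind.

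More importantly, you have correctly identified a gap that the paper itself glosses over: passing from $\phi_i\in\mathcal{D}(B^\infty)$ for each $i$ to $\sum_i\alpha_i\phi_i\in\mathcal{D}(B^\infty)$ requires, at a minimum, that the series $\sum_i\alpha_i\xi_i^m\phi_i$ converge in $\mathcal{F}$ for every $m$, and $\sum_i|\alpha_i|<\infty$ alone does not ensure this when $\sup_i|\xi_i|=\infty$. The paper's own discussion immediately after the corollary simply writes $B^mf=\sum_i\alpha_i\xi_i^m\phi_i$ without justification, so the issue is not resolved there either. Your two proposed readings---either impose $\sup_i|\xi_i|<\infty$, or restrict $\mathcal{F}_0$ to finite linear combinations---are both sound; the finite-span version is the cleanest, since then $\mathcal{F}_0\subseteq\mathcal{D}(B^\infty)$ follows trivially from the fact that $\mathcal{D}(B^\infty)$ is a vector space containing each $\phi_i$. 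Your instinct to flag this before writing the final proof is the right call.
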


This particular choice of $\mathcal{F}_{0}$ is in some cases somewhat
restrictive in the sense that it may be only a small subset of $\mathcal{D}%
\left( B^{\infty }\right) $. However, in the special case of a given
semi--group's spectrum being countable, we generally have that $\mathcal{F}%
_{0}=\mathcal{D}\left( B^{\infty }\right) $. One example of this is
so--called time reversible Markov processes whose spectra are countable with
the corresponding eigenfunctions forming an orthnormal basis of $\mathcal{F}$%
; see, e.g., \cite{Hansen1998}. But many Markov processes are irreversible
and have an uncountable spectrum in which case $\mathcal{F}_{0}$ is a proper
subset of $\mathcal{D}\left( B^{\infty }\right) $.

The corollary does not guarantee that for any $f\in \mathcal{F}_{0}$ the
corresponding $u_{t}\left( x\right) $ is analytic -- only that it is
infinitely differentiable. To see the complications of ensuring analyticity,
observe that, for any given $f\in \mathcal{F}_{0}$ with $\mathcal{F}_{0}$
defined above, $B^{m}f=\sum_{i=1}^{\infty }\alpha _{i}\xi _{i}^{m}\phi _{i}$%
, $m\geq 1$, so that%
\begin{equation*}
\hat{u}_{t}=\sum_{m=0}^{M}\frac{t^{m}}{m!}B^{m}f=\sum_{i=1}^{\infty }\left(
\sum_{m=0}^{M}\frac{\left( \xi _{i}t\right) ^{m}}{m!}\right) \alpha _{i}\phi
_{i}.
\end{equation*}%
Thus,%
\begin{equation*}
\left\Vert u_{t}-\hat{u}_{t}\right\Vert _{\mathcal{F}}\leq \sup_{i\geq
1}\left\vert \sum_{m=0}^{M}\frac{\left( \xi _{i}t\right) ^{m}}{m!}-e^{-\xi
_{i}t}\right\vert \sum_{i=1}^{\infty }\left\vert \alpha _{i}\right\vert
\left\Vert \phi _{i}\right\Vert _{\mathcal{F}},
\end{equation*}%
and so we need at a minimum $\sup_{i\geq 1}\left\vert \sum_{m=0}^{M}\frac{%
\left( \xi _{i}t\right) ^{m}}{m!}-e^{-\xi _{i}t}\right\vert \rightarrow 0$, $%
M\rightarrow \infty $. But this convergence result will generally not hold;
for example, if $\xi _{i}\in \mathbb{R}$ and $\xi _{i}\rightarrow \infty $
as $i\rightarrow \infty $ then convergence will fail.

In conclusion, to ensure convergence, we need to impose restrictions on the
eigenvalues/the spectrum. We will now present such a set of conditions.
These will involve the so--called resolvent of the generator defined as%
\begin{equation*}
R\left( \xi \right) =\left( \xi I-B\right) ^{-1}\text{, }\xi \notin \sigma
\left( B\right) .
\end{equation*}

\begin{theorem}
\label{Th: analytic E}$t\mapsto E_{t}f\left( x\right) $ is analytic for all $%
t>0$ and all functions $f\in \mathcal{F}$ if and only if the following two
conditions are satisfied: There exists $0<\delta <\pi /2$ and $C_{B}<\infty $
so that%
\begin{equation}
\sigma \left( B\right) \subseteq \bar{\sigma}_{\delta }:=\left\{ \xi \in 
\mathbb{C}:\left\vert \arg \left( \xi \right) \right\vert >\pi /2+\delta
\right\} ,  \label{eq: sigma(A) cond}
\end{equation}%
and, for all $\varepsilon >0$, 
\begin{equation}
\left\Vert R\left( \xi \right) \right\Vert _{\mathrm{op}}\leq \frac{C_{B}}{%
\left\vert \xi \right\vert }\text{ for }\xi \in \mathbb{C}\backslash \bar{%
\sigma}_{\delta }.  \label{eq: R(z) cond}
\end{equation}

In particular, if $f\in E_{\tau _{0}}\left( \mathcal{F}\right) $ for some $%
\tau _{0}>0$ (that is, $f\left( x\right) =E_{\tau _{0}}g\left( x\right) $
for some $g\in \mathcal{F}$) then%
\begin{equation*}
\forall t<\frac{\tau _{0}}{C_{B}e}:\left\Vert u_{t}-\hat{u}_{t}\right\Vert _{%
\mathcal{F}}\rightarrow 0,\text{ \ \ }M\rightarrow \infty .
\end{equation*}
\end{theorem}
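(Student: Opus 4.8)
The plan is to split the statement into two parts: (a) the equivalence between analyticity of $t\mapsto E_{t}f$ for every $f\in\mathcal{F}$ and the sectoriality/resolvent conditions (\ref{eq: sigma(A) cond})--(\ref{eq: R(z) cond}); and (b) the explicit radius bound when $f$ lies in the range of $E_{\tau_{0}}$. Part (a) is the classical characterization of bounded analytic $C_{0}$-semigroups through their generators (see, e.g., Pazy, 1983, Ch.~2, Thm.~5.2, or Engel and Nagel, 2000, Thm.~II.4.6), so I would mostly invoke it and indicate the two directions. For the ``if'' direction: given (\ref{eq: sigma(A) cond})--(\ref{eq: R(z) cond}), define for complex $z$ with $|\arg z|$ small the Dunford integral $E_{z}=\tfrac{1}{2\pi i}\int_{\Gamma}e^{\xi z}R(\xi)\,d\xi$, where $\Gamma\subset\mathbb{C}\setminus\bar\sigma_{\delta}$ is an unbounded contour winding around $\sigma(B)$ whose rays make an angle with the negative real axis slightly larger than $\pi/2$; condition (\ref{eq: R(z) cond}) makes the integral absolutely convergent and holomorphic on a sector about $\mathbb{R}_{+}$, and a contour-deformation/resolvent-identity computation shows $z\mapsto E_{z}$ is a semigroup with generator $B$, whence $E_{t}f=\sum_{m}\tfrac{t^{m}}{m!}B^{m}f$ on the intersection of the sector with $\mathbb{R}_{+}$. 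For the ``only if'' direction: analyticity of $z\mapsto E_{z}$ on a sector yields, via Cauchy's formula, the derivative estimate $\left\Vert BE_{t}\right\Vert_{\mathrm{op}}\le C_{B}/t$, and feeding this into the Laplace representation $R(\xi)=\int_{0}^{\infty}e^{-\xi t}E_{t}\,dt$ and extending it holomorphically across the imaginary axis recovers (\ref{eq: sigma(A) cond})--(\ref{eq: R(z) cond}).

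For part (b), the quantitative input I extract from (a) is the derivative bound $\left\Vert BE_{s}\right\Vert_{\mathrm{op}}\le C_{B}/s$ (read off from the Dunford representation by parametrizing $\Gamma$ by $|\xi|$ and using (\ref{eq: R(z) cond})). Since $B$ commutes with $E_{s}$ and $E_{t}=(E_{t/m})^{m}$, this self-improves to
\[
\left\Vert B^{m}E_{t}\right\Vert_{\mathrm{op}}=\left\Vert (BE_{t/m})^{m}\right\Vert_{\mathrm{op}}\le\left(\frac{C_{B}m}{t}\right)^{m},\qquad m\geq 1.
\]
Now if $f=E_{\tau_{0}}g$ with $g\in\mathcal{F}$, then for $0\le s\le t$ we have $B^{M+1}u_{s}=E_{s}B^{M+1}E_{\tau_{0}}g$, hence $\left\Vert B^{M+1}u_{s}\right\Vert_{\mathcal{F}}\le M_{0}\big(C_{B}(M+1)/\tau_{0}\big)^{M+1}\left\Vert g\right\Vert_{\mathcal{F}}$ with $M_{0}:=\sup_{0\le s\le t}\left\Vert E_{s}\right\Vert_{\mathrm{op}}<\infty$ (finite by local boundedness of the semigroup). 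Inserting this into the norm analogue of the error bound in Theorem \ref{Th: w-hat error bound} and using $(M+1)!\geq\big((M+1)/e\big)^{M+1}$ gives
\[
\left\Vert u_{t}-\hat{u}_{t}\right\Vert_{\mathcal{F}}\le M_{0}\left\Vert g\right\Vert_{\mathcal{F}}\,\frac{t^{M+1}}{(M+1)!}\left(\frac{C_{B}(M+1)}{\tau_{0}}\right)^{M+1}\le M_{0}\left\Vert g\right\Vert_{\mathcal{F}}\left(\frac{C_{B}e\,t}{\tau_{0}}\right)^{M+1},
\]
which tends to $0$ as $M\to\infty$ exactly when $t<\tau_{0}/(C_{B}e)$; equivalently $T_{0}\geq\tau_{0}/(C_{B}e)$ in (\ref{eq: T_0 def}) for such $f$.

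I expect the main obstacle to be the constant bookkeeping in part (a): reconciling the aperture $\delta$ of the spectral sector in (\ref{eq: sigma(A) cond}), the resolvent constant $C_{B}$ in (\ref{eq: R(z) cond}), and the angle and constant of the sector of holomorphy of $z\mapsto E_{z}$, so that the \emph{same} $C_{B}$ controls $\left\Vert BE_{t}\right\Vert_{\mathrm{op}}\le C_{B}/t$ and therefore appears in the final threshold $\tau_{0}/(C_{B}e)$. This requires optimizing the choice of $\Gamma$ (its aperture and the distance of its vertex from the origin, typically scaled like $1/|z|$) and absorbing the resulting universal factor; the contour estimates themselves are routine. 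A minor point is that the quantifier ``for all $\varepsilon>0$'' in (\ref{eq: R(z) cond}) should be read as the bound holding on each strictly smaller sub-sector $\mathbb{C}\setminus\bar\sigma_{\delta-\varepsilon}$, and one must keep $\Gamma$ inside such a sub-sector throughout.
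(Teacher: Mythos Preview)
Your proposal is correct and follows essentially the same route as the paper. For part (a) the paper simply cites Theorem~2.5.2 of \cite{Pazy1983}, which is exactly the result you invoke (your sketch of the Dunford-integral and Cauchy-estimate directions is the standard content of that theorem). For part (b) the core step is identical: write $B^{m}E_{\tau_{0}}=(BE_{\tau_{0}/m})^{m}$ and use the analytic-semigroup bound $\Vert BE_{s}\Vert_{\mathrm{op}}\le C_{B}/s$ (this is part (d) of the same theorem in Pazy), then combine with $m!\ge(m/e)^{m}$.

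The only difference is cosmetic: the paper bounds $\Vert B^{m}f\Vert_{\mathcal{F}}/m!$ directly and reads off $T_{0}\ge\tau_{0}/(C_{B}e)$ from the radius-of-convergence formula (\ref{eq: T_0 def}), then appeals to Theorem~\ref{Th: analytic}; you instead feed the same estimate into the remainder bound of Theorem~\ref{Th: w-hat error bound}, which forces you to carry the extra factor $M_{0}=\sup_{0\le s\le t}\Vert E_{s}\Vert_{\mathrm{op}}$. Both closures are fine; the paper's is slightly cleaner because it avoids $M_{0}$. Your caveat about whether the \emph{same} constant $C_{B}$ survives from (\ref{eq: R(z) cond}) to the bound $\Vert BE_{t}\Vert_{\mathrm{op}}\le C_{B}/t$ is well taken---the paper is equally casual about this and simply absorbs it into the citation of Pazy.
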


The first part of the theorem states necessary and sufficient conditions for 
$E_{t}f\left( x\right) $ to be analytic at any given $t>0$ and for \textit{%
any} $f\in \mathcal{F}$. The conditions (\ref{eq: sigma(A) cond})--(\ref{eq:
R(z) cond}) ensure that the spectrum of $B$ is such that the convergence
problem discussed before the theorem does not occur. This is a strong result
but at the same time (\ref{eq: sigma(A) cond})--(\ref{eq: R(z) cond}) are
rather strong conditions. Moreover, they tend to be difficult to verify in
practice since this requires knowledge of the spectrum $\sigma \left(
B\right) $. Primitive sufficient conditions for them to hold are provided in
the next section. Both the conditions and the results are relative to the
chosen function space and norm $\left( \mathcal{F},\left\Vert \cdot
\right\Vert _{\mathcal{F}}\right) $. By choosing $\mathcal{F}$ suitably
small, we expect that (\ref{eq: sigma(A) cond})--(\ref{eq: R(z) cond}) will
hold in great generality. We give an example of this in Section \ref{Sec:
conv bounded}.

The second part then shows that for the subclass of functions $f$ that
satisfy $f\left( x\right) =E_{\tau _{0}}g\left( x\right) $, for some $\tau
_{0}>0$ and $g$, analyticity of $E_{t}f\left( x\right) $ extends to $t=0$.
This part follows as a direct consequence of the first part since this
implies that $E_{t}f\left( x\right) =E_{t+\tau _{0}}g\left( x\right) $ is
analytic at $t=0$. The lower bound of the radius of convergence $T_{0}$
depends on the degree of smoothness of $f$, as measured by $\tau _{0}$, and
the properties of the model, specifically the bound $C_{A}$ on its resolvent.

The requirement $f\in E_{\tau _{0}}\left( \mathcal{F}\right) $ is difficult
to verify in a given application. In the leading case of $E_{t}f\left(
x\right) =\mathbb{E}\left[ f\left( x_{t}\right) |x_{0}=x\right] $, the
condition amounts to showing that there exists a solution $g\left( x\right) $
to the following integral equation $f\left( x\right) =\int g\left( y\right)
p_{\tau _{0}}\left( y|x\right) dy$ for some $\tau _{0}>0$, assuming that $%
x_{t}$ has a transition density $p_{t}\left( y|x\right) $. This is a
so--called Fredholm equation of the first kind; conditions for a solution to
this to exist are available but not easily verified in a given application.
However, it can be shown that, for any given $\tau _{0}$, $E_{\tau
_{0}}\left( \mathcal{F}\right) $ is dense in $\mathcal{F}$, see, e.g.,
Theorem 7.4.4 in \cite{davies2007}, and so the result will hold for "almost
every" $f\in \mathcal{F}$.

\subsection{Application to Jump-diffusions\label{sec: conv jump-diffusion}}

We now apply the general theory to our jump--diffusion model. In the
following, let $x_{t}$ be a weak solution to (\ref{eq: model}) for a given
specification of $\left( \mu ,\sigma ^{2},\lambda ,\nu \right) $ with
generator $A$ given in (\ref{eq: A def})--(\ref{eq: A_J def}) and $%
E_{t}f\left( x\right) =\mathbb{E}\left[ f\left( x_{t}\right) |x_{0}=x\right] 
$.

We first need to get a handle on the generator of the process and its domain 
$\mathcal{D}\left( A\right) $. A complete characterization of $\mathcal{D}%
\left( A\right) $ is unfortunately not possible and we will instead only
work with a subset of $\mathcal{D}\left( A\right) $ where the generator
takes the form (\ref{eq: A def}). Let $\mathcal{C}^{m}=\mathcal{C}^{m}\left( 
\mathcal{X}\right) $ denote the space of functions $f\left( x\right) $ with
domain $\mathcal{X}$ that are $m\geq 0$ times continuously differentiable
w.r.t. $x$. If $f\in \mathcal{C}^{2}$ then Ito's Lemma for jump--diffusions
(see, e.g., \cite{Cont2003}, Proposition 8.14) yields 
\begin{eqnarray*}
f\left( x_{t}\right) &=&f\left( x_{0}\right) +\int_{0}^{t}A_{D}f\left(
x_{s}\right) ds+\sum_{i:0\leq \tau _{i}\leq t}\left[ f\left( x_{\tau
_{i}^{-}}+\Delta x_{i}\right) -f\left( x_{\tau _{i}^{-}}\right) \right] \\
&&+\sum_{i=1}^{d}\int_{0}^{t}\frac{\partial f\left( x_{s}\right) }{\partial
x_{i}}\sigma _{i}\left( x_{s}\right) dW_{s},
\end{eqnarray*}%
where $A_{D}$ is defined in (\ref{eq: A_D def}), $\sigma _{i}\left( x\right)
=\left[ \sigma _{i1}\left( x\right) ,...,\sigma _{id}\left( x\right) \right] 
$ while $\tau _{i}$ and and $\Delta x_{i}$ denote the time and the size,
respectively, of the $i$th jump. Assuming $E_{t}\left\vert f\right\vert
\left( x\right) <\infty $ and $E_{t}(\frac{\partial f}{\partial x_{i}}\sigma
_{i})^{2}\left( x\right) <\infty $, $i=1,...,d$, we can take conditional
expectations w.r.t. the natural filtration on both sides of the above to
obtain (\ref{eq: A extend def}) with $A$ given in (\ref{eq: A def}). Thus,
the following is a subset of the domain of the generator,%
\begin{equation*}
\mathcal{D}_{0}\left( A\right) :=\left\{ f\in \mathcal{C}^{2}:E_{t}\left%
\vert f\right\vert \text{ and }E_{t}\left\Vert \frac{\partial f}{\partial x}%
\sigma \right\Vert ^{2}\text{ exist for all }t>0\right\} \subseteq \mathcal{D%
}\left( A\right) .
\end{equation*}

In the following we will only consider functions situated in $\mathcal{D}%
_{0}\left( A\right) $ and so not distinguish between the general generator
and the one restricted to $\mathcal{D}_{0}\left( A\right) $. Under the
assumption that $\mu $, $\sigma ^{2}$ and $\lambda $ and $f$ all belong to $%
\mathcal{C}^{2m}$, we can apply Ito's Lemma repeatedly and it follows
straightforwardly that%
\begin{equation*}
\mathcal{D}_{0}\left( A^{m}\right) :=\left\{ f\in \mathcal{C}%
^{2m}:E_{t}\left\vert A^{k}f\right\vert \text{ and }E_{t}\left\Vert \frac{%
\partial A^{k}f}{\partial x}\sigma \right\Vert ^{2}\text{ exist for all}t>0,%
\text{ }0\leq k\leq m-1\right\} \subseteq \mathcal{D}\left( A^{m}\right) .
\end{equation*}%
Implicit in this definition is the requirement that $\int_{\mathbb{R}%
^{d}}\left\vert A^{k}f\left( x+c\right) \right\vert \nu _{t}\left( c\right)
dc<\infty $ for $k=0,...,m$. Thus, a given $f\in \mathcal{C}^{2m}$ belongs
to $\mathcal{D}_{0}\left( A^{m}\right) $ if relevant moments w.r.t the jump
measure $\nu $ and the probability measure of $x_{t}$ exist. For example, if 
$f$ and all its derivatives are bounded, $\mu $, $\sigma ^{2}$ and $\lambda $
and all their derivatives are bounded by some function $V\left( x\right)
\geq 0$ with $\mathbb{E}\left[ V\left( x_{t}\right) \right] <\infty $, and $%
\nu $ has bounded support then $f\in \mathcal{D}_{0}\left( A^{\infty
}\right) $. Similarly, if $f$ is a polynomial of order $q$, $\mu $, $\sigma
^{2}$ and $\lambda $ are linear w.r.t. $x$, $\nu _{t}$ has all polynomial
moments, and $\mathbb{E}\left[ \left\Vert x_{t}\right\Vert ^{q}\right]
<\infty $, $0\leq t\leq T$, then $f\in \mathcal{D}\left( A^{\infty }\right) $%
.

Since $f\in \mathcal{D}_{0}\left( A^{\infty }\right) $ is necessary for our
expansion to work, we will maintain the following assumption on the model:

\begin{description}
\item[A.1] (i) $\mu $, $\sigma ^{2}$ and $\lambda $ belong to $\mathcal{C}%
^{\infty }$ and (ii) $\sup_{x}\lambda \left( x\right) <\infty $.
\end{description}

Part (i) ensures that, under suitable moment conditions as described above,
if $f\in \mathcal{C}^{\infty }$ then $f\in \mathcal{D}_{0}\left( A^{\infty
}\right) $. Part (ii) is imposed to simplify subsequent arguments since it
entails the following result (see Pazy, 1983, Theorem 3.2.1):

\begin{lemma}
\label{Lem: A_J bounded}Suppose that $A_{J}$ is a bounded operator. If $%
A_{D} $ generates an analytic semi--group then $A_{D}+A_{J}$ also generates
an analytic semi--group. Under A.1(ii), $A_{J}$ is a bounded operator.
\end{lemma}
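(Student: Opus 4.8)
The plan is to prove the two assertions in the order they are stated. The first assertion is the classical perturbation result for analytic semi-groups: if $A_D$ generates an analytic semi-group and $A_J$ is bounded, then $A_D + A_J$ generates an analytic semi-group. For this I would simply invoke the standard reference (Pazy, 1983, Theorem 3.2.1, as already cited in the statement). The argument underlying that reference, which I would sketch, runs as follows: analyticity of the semi-group generated by $A_D$ is equivalent to the resolvent estimate $\|R(\xi; A_D)\|_{\mathrm{op}} \le C/|\xi|$ on a sector $\{\xi : |\arg(\xi)| < \pi/2 + \delta\}$ for some $\delta > 0$ — exactly condition (\ref{eq: R(z) cond}) of Theorem \ref{Th: analytic E}. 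One then writes, for $\xi$ in a (possibly slightly smaller) sector with $|\xi|$ large enough that $\|A_J\|_{\mathrm{op}} \cdot \|R(\xi; A_D)\|_{\mathrm{op}} \le 1/2$,
\[
\xi I - (A_D + A_J) = (I - A_J R(\xi; A_D))(\xi I - A_D),
\]
so that $R(\xi; A_D + A_J) = R(\xi; A_D)(I - A_J R(\xi; A_D))^{-1}$ exists by the Neumann series and satisfies $\|R(\xi; A_D + A_J)\|_{\mathrm{op}} \le 2C/|\xi|$ on that sector. This is again the sectorial resolvent bound, hence $A_D + A_J$ generates an analytic semi-group. I would not reproduce this in full; a one-line citation plus this sketch suffices.

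For the second assertion — that under A.1(ii) the operator $A_J$ is bounded — I would argue directly from the definition (\ref{eq: A_J def}),
\[
A_J f(x) = \lambda(x) \int_{\mathbb{R}^d} \left[ f(x+c) - f(x) \right] \nu(c|x)\, dc,
\]
using the triangle inequality and the fact that $\nu(\cdot|x)$ is a probability density, so $\int_{\mathbb{R}^d} \nu(c|x)\, dc = 1$. In the sup-norm case this gives
\[
|A_J f(x)| \le \lambda(x) \int_{\mathbb{R}^d} \big( |f(x+c)| + |f(x)| \big) \nu(c|x)\, dc \le 2 \lambda(x) \, \|f\|_{\mathcal{F}},
\]
and taking the supremum over $x$ together with A.1(ii), $\sup_x \lambda(x) < \infty$, yields $\|A_J f\|_{\mathcal{F}} \le 2 \big(\sup_x \lambda(x)\big) \|f\|_{\mathcal{F}}$, i.e. $\|A_J\|_{\mathrm{op}} \le 2\sup_x\lambda(x) < \infty$. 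For the $L_2(\pi)$ norm a little more care is needed: I would split $A_J = A_{J1} - \lambda I$ where $A_{J1} f(x) = \lambda(x)\int f(x+c)\nu(c|x)\,dc$, note that the multiplication operator $f \mapsto \lambda f$ is bounded on $L_2(\pi)$ by $\sup_x\lambda(x)$, and bound $A_{J1}$ by Jensen's inequality applied to the probability measure $\nu(c|x)\,dc$, namely $|A_{J1}f(x)|^2 \le \lambda(x)^2 \int |f(x+c)|^2 \nu(c|x)\,dc$, followed by Tonelli; the resulting change-of-variables factor $\int \pi(x)\nu(c|x)\,dc\,/\,\pi(x+c)$ must be controlled, which requires either a mild compatibility condition between $\pi$ and $\nu$ or, in the stationary case, cancels by invariance. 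The main (and essentially only) obstacle is this $L_2$ bookkeeping; in the sup-norm case the proof is immediate, and since the paper's jump-diffusion applications mostly work with the sup-norm, I would state the sup-norm argument in the body and relegate the $L_2$ variant to a remark.
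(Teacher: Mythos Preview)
Your proposal is correct and aligns with the paper's treatment: the paper does not give its own proof of this lemma but simply invokes Pazy (1983, Theorem~3.2.1) for the perturbation statement, exactly as you do. Your explicit sup-norm computation showing $\|A_J\|_{\mathrm{op}}\le 2\sup_x\lambda(x)$ under A.1(ii) fills in a step the paper leaves implicit, and your $L_2$ caveat is a fair observation though not addressed in the paper.
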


Thus, for a given jump--diffusion model satisfying A.1(ii), or any other
conditions ensuring $A_{J}$ is bounded, we only need to ensure that the
diffusive component is analytic. In the following, we will implicitly assume
that indeed $A_{J}$ is bounded and derive conditions under which $E_{t}$ for
pure diffusion processes ($A_{J}=0$) is analytic.

Ideally we would now provide primitive conditions for general
jump--diffusion processes to satisfy the high-level conditions found in the
theorems and corollaries stated in the previous section. This is
unfortunately not possible since the spectral properties of jump--diffusions
are still not fully understood. We will therefore only state results for
special cases for which results do exist. At the same time we would like to
emphasise that we expect the results to hold more broadly.

We first develop conditions under which polynomial moment functions are
analytic. We start out with a few definitions: For a given multi-index $%
\alpha =\left( \alpha _{1},...\alpha _{d}\right) \in \mathbb{N}_{0}^{d}$ and 
$x=\left( x_{1},...,x_{d}\right) ^{\prime }\in \mathbb{R}^{d}$ let $%
\left\vert \alpha \right\vert =\alpha _{1}+\cdots +\alpha _{d}$ and $\left(
x\right) ^{\alpha }=x_{1}^{\alpha _{1}}\cdots x_{d}^{\alpha _{d}}$. We then
let%
\begin{equation*}
\mathcal{P}_{k}=\left\{ p\left( x\right) =\sum_{\left\vert \alpha
\right\vert \leq k}c_{\alpha }\left( x\right) ^{\alpha }:\alpha \in \mathbb{N%
}_{0}^{d},c_{\alpha }\in \mathbb{R}\right\}
\end{equation*}%
denote the family of polynomials of order $k$ and $\mathcal{P}_{k|\mathcal{X}%
}$ be these polynomials restricted to the domain of $x_{t}$. Observe here
that $\mathcal{P}_{k|\mathcal{X}}$ is a finite-dimensional function space.
In particular, we can choose a set of basis functions $e=\left(
e_{1},...,e_{N}\right) \in \mathcal{P}_{k|\mathcal{X}}$, where $N=\dim 
\mathcal{P}_{k|\mathcal{X}}$, so for any $p\in \mathcal{P}_{k|\mathcal{X}}$
there exists $c=\left( c_{1},...c_{N}\right) $ so that%
\begin{equation*}
p\left( x\right) =\sum_{i=1}^{N}c_{i}e\left( x\right) =c^{\prime }e\left(
x\right) .
\end{equation*}%
If $\mathcal{P}_{k|\mathcal{X}}$ satisfies the two conditions of Theorem \ref%
{Th: analytic} then analyticity follows automatically from the fact that
when we restrict the domain of $A$ to $\mathcal{P}_{k|\mathcal{X}}$ then it
becomes a finite--dimensional operator and therefore bounded:

\begin{corollary}
\label{Cor: poly analytic}Suppose that $x_{t}$ is a polynomial process in
the sense that, for all $k\geq 1$, $\mathcal{P}_{k|\mathcal{X}}\subseteq 
\mathcal{D}\left( A\right) $ and $A\left( \mathcal{P}_{k|\mathcal{X}}\right)
\subseteq \mathcal{P}_{k|\mathcal{X}}$. Then, for any $k\geq 1$ and any $%
p=c^{\prime }e\in \mathcal{P}_{k|\mathcal{X}}$, $u_{t}\left( x\right)
=E_{t}p\left( x\right) $ is analytic with radius $+\infty $ and satisfies
for all $x\in \mathcal{X}$,%
\begin{equation*}
u_{t}\left( x\right) =c^{\prime }\exp \left( t\bar{A}\right) e\left(
x\right) =c^{\prime }\sum_{m=0}^{\infty }\frac{t^{m}}{m!}\bar{A}^{m}e\left(
x\right) ,
\end{equation*}%
where $\bar{A}=\left[ \bar{a}_{ij}\right] _{1\leq i,j\leq N}\in \mathbb{R}%
^{N\times N}$ is defined as the solution to 
\begin{equation*}
Ae_{i}=\sum_{i=1}^{N}\bar{a}_{ij}e_{j},\text{ \ \ }i=1,...,N.
\end{equation*}

A sufficient condition for $x_{t}$ to be polynomial process is that $\mu \in 
\mathcal{P}_{1|\mathcal{X}}$, $\sigma ^{2}\in \mathcal{P}_{2|\mathcal{X}}$
and $\lambda \in \mathcal{P}_{2|\mathcal{X}}$.
\end{corollary}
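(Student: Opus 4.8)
The plan is to specialise the last part of Theorem~\ref{Th: analytic} to the finite-dimensional space $\mathcal{F}_{0}=\mathcal{P}_{k|\mathcal{X}}$ and then turn the resulting abstract power series into a matrix exponential. Setting $\mathcal{F}_{0}:=\mathcal{P}_{k|\mathcal{X}}$, the polynomial-process hypothesis is exactly the pair of requirements $\mathcal{F}_{0}\subseteq\mathcal{D}(A)$ and $A(\mathcal{F}_{0})\subseteq\mathcal{F}_{0}$ appearing in the last part of Theorem~\ref{Th: analytic}; hence every $p\in\mathcal{P}_{k|\mathcal{X}}$ lies in $\mathcal{D}(A^{\infty})$ and $t\mapsto u_{t}=E_{t}p$ is infinitely differentiable. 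Since $e=(e_{1},\dots,e_{N})$ is a basis of $\mathcal{P}_{k|\mathcal{X}}$ and $Ae_{i}\in\mathcal{P}_{k|\mathcal{X}}$, the coefficients $\bar{a}_{ij}$ in $Ae_{i}=\sum_{j}\bar{a}_{ij}e_{j}$ are uniquely determined, so $\bar{A}=[\bar{a}_{ij}]$ is well defined; in vector form this reads $Ae=\bar{A}e$ (with $A$ acting entrywise), and a trivial induction, using linearity of $A$ and the fact that $\bar{A}$ is a constant matrix, gives $A^{m}p=A^{m}(c'e)=c'\bar{A}^{m}e$ for every $m\geq0$. Consequently the Taylor approximant (\ref{eq: w-hat def}) is the truncated matrix exponential $\hat{u}_{t}=\sum_{m=0}^{M}\frac{t^{m}}{m!}A^{m}p=c'\bigl(\sum_{m=0}^{M}\frac{t^{m}}{m!}\bar{A}^{m}\bigr)e$.

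Next I would determine the radius of convergence and identify the limit. From $\Vert A^{m}p\Vert_{\mathcal{F}}=\Vert c'\bar{A}^{m}e\Vert_{\mathcal{F}}\leq\Vert c\Vert\,\Vert\bar{A}\Vert^{m}\max_{i}\Vert e_{i}\Vert_{\mathcal{F}}$ and $(m!)^{1/m}\to\infty$ it follows that $\limsup_{m\to\infty}(\Vert A^{m}p\Vert_{\mathcal{F}}/m!)^{1/m}=0$, so the radius $T_{0}$ in (\ref{eq: T_0 def}) equals $+\infty$. To pin the limit down pointwise, note that each $e_{i}$, and hence $p$, belongs to $\mathcal{D}(A^{\infty})$, so iterating Theorem~\ref{Th: w representation} yields $A^{m}u_{s}=E_{s}(A^{m}p)=c'\bar{A}^{m}(E_{s}e)$ for all $m$ and $s\geq0$, with $E_{s}$ acting componentwise. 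The error bound of Theorem~\ref{Th: w-hat error bound} then gives, for fixed $x$ and $t$, $|u_{t}(x)-\hat{u}_{t}(x)|\leq\frac{t^{M+1}}{(M+1)!}\sup_{0\leq s\leq t}|A^{M+1}u_{s}(x)|\leq\frac{(t\Vert\bar{A}\Vert)^{M+1}}{(M+1)!}\,\Vert c\Vert\sup_{0\leq s\leq t}\Vert(E_{s}e)(x)\Vert$, where the last supremum is finite because $s\mapsto(E_{s}e_{i})(x)$ is continuous by (\ref{eq: w representation}). Letting $M\to\infty$ produces $u_{t}(x)=c'\exp(t\bar{A})e(x)=c'\sum_{m=0}^{\infty}\frac{t^{m}}{m!}\bar{A}^{m}e(x)$ for every $x\in\mathcal{X}$ and every $t$, which is the displayed identity and shows $u_{t}$ is analytic with radius $+\infty$.

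For the last assertion I would verify directly that $\mu\in\mathcal{P}_{1|\mathcal{X}}$, $\sigma^{2}\in\mathcal{P}_{2|\mathcal{X}}$ and $\lambda\in\mathcal{P}_{2|\mathcal{X}}$ force $A(\mathcal{P}_{k|\mathcal{X}})\subseteq\mathcal{P}_{k|\mathcal{X}}$, the inclusion $\mathcal{P}_{k|\mathcal{X}}\subseteq\mathcal{D}(A)$ being the moment statement recorded just before Assumption A.1. For $A_{D}$ in (\ref{eq: A_D def}): given $p\in\mathcal{P}_{k}$, $\partial_{x_{i}}p$ has degree at most $k-1$ and $\partial^{2}_{x_{i}x_{j}}p$ degree at most $k-2$, so multiplying by $\mu_{i}\in\mathcal{P}_{1}$ and $\sigma^{2}_{ij}\in\mathcal{P}_{2}$ respectively keeps $A_{D}p$ in $\mathcal{P}_{k}$. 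For $A_{J}$ in (\ref{eq: A_J def}): expanding $p(x+c)-p(x)=\sum_{1\leq|\beta|\leq k}\frac{1}{\beta!}\partial^{\beta}p(x)\,c^{\beta}$ by the finite multivariate Taylor formula, one obtains $A_{J}p(x)=\sum_{1\leq|\beta|\leq k}\frac{1}{\beta!}\partial^{\beta}p(x)\,\bigl[\lambda(x)\int c^{\beta}\nu(c|x)\,dc\bigr]$, so that $A_{J}p\in\mathcal{P}_{k}$ as soon as each compensated jump moment $x\mapsto\lambda(x)\int c^{\beta}\nu(c|x)\,dc$ is a polynomial in $x$ of degree at most $|\beta|$; this holds, e.g., for the state-independent, mean-zero jump families of Section~\ref{sec: Practical}, which is the natural jump specification under which the stated sufficient condition applies.

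The only genuinely delicate point I anticipate is this jump-term verification: one must track carefully which moments of $\nu$ are assumed polynomial in $x$ so that multiplying by $\lambda\in\mathcal{P}_{2|\mathcal{X}}$ does not push the degree above $k$ (for $\mu$ and $\sigma^{2}$ the bookkeeping is the classical polynomial-diffusion one). Everything else is the routine observation that a generator restricted to a finite-dimensional invariant subspace is bounded, hence generates an entire (matrix-exponential) semigroup, combined with Theorems~\ref{Th: analytic}, \ref{Th: w representation} and \ref{Th: w-hat error bound}.
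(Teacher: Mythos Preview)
Your proposal is correct and follows the same approach as the paper, whose entire argument is the sentence immediately preceding the corollary (there is no separate proof in the appendix): restrict $A$ to the finite-dimensional invariant subspace $\mathcal{P}_{k|\mathcal{X}}$, observe that the resulting operator is bounded, and invoke Theorem~\ref{Th: analytic}. You supply the details the paper omits---in particular the explicit matrix-exponential identification via $A^{m}p=c'\bar{A}^{m}e$ and the pointwise convergence via Theorem~\ref{Th: w-hat error bound}---and you correctly flag that the jump-term verification in the sufficient condition requires more than the bare hypothesis $\lambda\in\mathcal{P}_{2|\mathcal{X}}$: one needs the compensated moments $x\mapsto\lambda(x)\int c^{\beta}\nu(c|x)\,dc$ to have degree at most $|\beta|$, which (as you note) holds for state-independent mean-zero jumps but is not spelled out in the paper.
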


The second result provides primitive conditions for the high--level
assumptions (\ref{eq: sigma(A) cond})--(\ref{eq: R(z) cond}) to hold in the
context of jump diffusions, where $E_{t}^{\ast }$ and $A^{\ast }$ denotes
the so--called adjoint operators of $E_{t}$ and $A$, respectively:

\begin{corollary}
\label{Cor: reversible}Suppose that $E_{t}f\left( x\right) =\mathbb{E}\left[
f\left( x_{t}\right) |x_{0}=x\right] $ and that $\mathcal{F}$ is a Hilbert
space with inner product $\left\langle \cdot ,\cdot \right\rangle $ so that $%
\left\Vert f\right\Vert _{\mathcal{F}}^{2}=\left\langle f,f\right\rangle $.
Suppose furthermore that $x_{t}$ is a time--reversible Markov process in the
sense that its generator is self--adjoint, $A=A^{\ast }$ (or, equivalently, $%
E_{t}=E_{t}^{\ast }$). Then (\ref{eq: sigma(A) cond})--(\ref{eq: R(z) cond})
are satisfied and so $t\mapsto E_{t}f\left( x\right) $ is analytic for all $%
t>0$.

Suppose that $x_{t}$ is a stationary diffusion process which satisfies the
conditions given in either Example 1, 2 or 3 in \cite{Scheinkman1995}. Then $%
x_{t}$ is time--reversible.
\end{corollary}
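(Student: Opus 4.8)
The plan is to prove the two assertions separately: the first by checking the spectral hypotheses of Theorem~\ref{Th: analytic E} directly from self--adjointness, the second by recalling the structure of the Hansen--Scheinkman examples. Throughout the first part take $\mathcal{F}=L_{2}\left( \pi \right) $ with $\pi $ the stationary density of $x_{t}$, so that $\left\langle f,g\right\rangle =\int_{\mathcal{X}}f\left( x\right) g\left( x\right) \pi \left( x\right) dx$ and $A=A^{\ast }$ by hypothesis. I first record that $E_{t}$ is a contraction on $\mathcal{F}$: by Jensen's inequality and invariance of $\pi $, $\left\Vert E_{t}f\right\Vert _{\mathcal{F}}^{2}=\int_{\mathcal{X}}\left( \mathbb{E}\left[ f\left( x_{t}\right) |x_{0}=x\right] \right) ^{2}\pi \left( x\right) dx\leq \mathbb{E}_{\pi }\left[ f\left( x_{t}\right) ^{2}\right] =\left\Vert f\right\Vert _{\mathcal{F}}^{2}$. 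Hence $t\mapsto \left\Vert E_{t}f\right\Vert _{\mathcal{F}}$ is non--increasing, so $\left\langle Af,f\right\rangle =\tfrac{1}{2}\left. \partial _{t}\right\vert _{t=0^{+}}\left\Vert E_{t}f\right\Vert _{\mathcal{F}}^{2}\leq 0$; together with $A=A^{\ast }$ this says $-A$ is a non--negative self--adjoint operator, so $\sigma \left( A\right) \subseteq \left( -\infty ,0\right] $. This already gives (\ref{eq: sigma(A) cond}) for every $0<\delta <\pi /2$, since every nonzero point of $\left( -\infty ,0\right] $ has argument $\pm \pi $ and the origin lies on the boundary of $\bar{\sigma}_{\delta }$.

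Next I bound the resolvent. For a self--adjoint operator the spectral theorem gives $\left\Vert R\left( \xi \right) \right\Vert _{\mathrm{op}}=1/\mathrm{dist}\left( \xi ,\sigma \left( A\right) \right) \leq 1/\mathrm{dist}\left( \xi ,\left( -\infty ,0\right] \right) $, so it suffices to bound the distance to the negative real axis from below for $\xi \in \mathbb{C}\backslash \bar{\sigma}_{\delta }$, i.e. for $\left\vert \arg \xi \right\vert \leq \pi /2+\delta $. Writing $\xi =\left\vert \xi \right\vert e^{i\theta }$: if $\mathrm{Re}\,\xi \geq 0$ the nearest point of $\left( -\infty ,0\right] $ is $0$ and the distance equals $\left\vert \xi \right\vert $; if $\mathrm{Re}\,\xi <0$ the nearest point is $\mathrm{Re}\,\xi $ and the distance equals $\left\vert \mathrm{Im}\,\xi \right\vert =\left\vert \xi \right\vert \left\vert \sin \theta \right\vert \geq \left\vert \xi \right\vert \cos \delta $, since $\sin $ is decreasing on $\left[ \pi /2,\pi \right] $ and $\pi /2+\delta <\pi $. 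Hence $\left\Vert R\left( \xi \right) \right\Vert _{\mathrm{op}}\leq \left( \cos \delta \right) ^{-1}\left\vert \xi \right\vert ^{-1}$, which is (\ref{eq: R(z) cond}) with $C_{B}=1/\cos \delta $, and Theorem~\ref{Th: analytic E} then yields analyticity of $t\mapsto E_{t}f\left( x\right) $ for all $t>0$ and all $f\in \mathcal{F}$.

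For the second assertion the content is that each diffusion considered in Examples 1--3 of \cite{Scheinkman1995} is time--reversible, i.e. its generator $A_{D}$ is symmetric on $L_{2}\left( \pi \right) $ with $\pi $ the stationary density. I would verify this on a core of smooth functions: integrating by parts twice in $\left\langle A_{D}f,g\right\rangle =\int_{\mathcal{X}}\left( \mu ^{\top }\nabla f+\tfrac{1}{2}\mathrm{tr}\left( \sigma ^{2}\nabla ^{2}f\right) \right) g\,\pi \,dx$ reduces symmetry to the detailed--balance identity $\partial _{x_{i}}\left( \mu _{i}\pi \right) =\tfrac{1}{2}\sum_{j}\partial _{x_{i}x_{j}}^{2}\left( \sigma _{ij}^{2}\pi \right) $ (equivalently, vanishing of the stationary probability current), which is precisely $\pi \left( x\right) p_{t}\left( y|x\right) =\pi \left( y\right) p_{t}\left( x|y\right) $. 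In the scalar case this holds for every stationary diffusion -- the stationary Fokker--Planck equation forces the current to be a constant, which is $0$ at natural or reflecting boundaries -- while for any non--scalar specification among the examples one checks the identity directly from the given drift and diffusion coefficients. Essential self--adjointness of the closed generator then follows because a conservative stationary Markov generator is essentially self--adjoint on such a core. Combining with the first part gives analyticity of $t\mapsto E_{t}f\left( x\right) $ for these models.

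The contraction estimate and the planar sector geometry are routine; the genuine work is in the second assertion, where one must (a) fix the correct Hilbert space $L_{2}\left( \pi \right) $ and a core on which $A_{D}$ is symmetric, and (b) verify the detailed--balance identity for the precise coefficient structure of any non--scalar example in \cite{Scheinkman1995}, the scalar ones being automatic. A lesser subtlety is upgrading symmetry on a core to self--adjointness of the full extended generator, for which I would invoke the conservativeness and stationarity of the process.
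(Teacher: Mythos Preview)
Your argument follows the same route as the paper's: self--adjointness forces $\sigma(A)\subseteq(-\infty,0]$, and the resolvent bound then comes from $\|R(\xi)\|_{\mathrm{op}}=1/\mathrm{dist}(\xi,\sigma(A))$. Your treatment is in fact more careful than the paper's on two counts: the paper writes $\max_{w\leq 0}|\lambda-w|^{-1}=|\lambda|^{-1}$, which is only valid for $\mathrm{Re}\,\lambda\geq 0$, whereas your case split correctly produces $C_{B}=1/\cos\delta$ on the full complement of the sector; and for the second assertion the paper gives no argument beyond citing \cite{Scheinkman1995}, while you sketch the detailed--balance verification and flag the passage from symmetry on a core to self--adjointness.
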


The time--reversibility condition implies that $A$'s spectrum is discrete
and contained in the negative half--line which suffices for (\ref{eq:
sigma(A) cond})--(\ref{eq: R(z) cond}) to hold. The three examples referred
to in the second part of the last theorem are time--homogenous scalar
diffusions, multivariate factor diffusion models, and a restricted class of
multivariate diffusions; see \cite{Scheinkman1995} for the precise details.

Note here that the corollary imposes no smoothness conditions on $\mu $, $%
\sigma ^{2}$ and $f$. This is because that $Af$ may still be well--defined
even without smoothness, c.f. above discussion of $\mathcal{D}\left(
A\right) $. However, its particular form in these cases is generally unknown
to us. Thus, in order to compute $Af$ in practice we restrict ourselves to
smooth models, as in Assumption A.1(i), and smooth choices of $f$, as in $%
\mathcal{D}_{0}\left( A\right) $.

Our third result again uses Theorem \ref{Th: analytic} but focuses on a
different class of "test functions" to obtain results for such models. We
restrict the function set to%
\begin{equation}
\mathcal{F}_{0}=\left\{ f\in \mathcal{C}^{\infty }:\sup_{\left\vert \alpha
\right\vert \geq 0}\left\Vert \partial _{x}^{\alpha }f\right\Vert _{\mathcal{%
F}}<\infty \right\} ,  \label{eq: F_0 def}
\end{equation}%
where $\partial _{x}^{\alpha }f=\partial ^{\alpha }f/\left( \partial
x^{\alpha }\right) $, which we equip with the norm $\left\Vert f\right\Vert
_{\mathcal{F}_{0}}=\sup_{\left\vert \alpha \right\vert \geq 0}\left\Vert
\partial _{x}^{\alpha }f\right\Vert _{\mathcal{F}}.$Importantly, if $f\in 
\mathcal{F}_{0}$ then, for any $\alpha \in \mathbb{N}_{0}^{\infty }$, $%
\partial _{x}^{\alpha }f\in \mathcal{F}_{0}$ with $\left\Vert \partial
_{x}^{\alpha }f\right\Vert _{\mathcal{F}_{0}}\leq \left\Vert f\right\Vert _{%
\mathcal{F}_{0}}.$This property of $\mathcal{F}_{0}$ ensures that if $\mu $
and $\sigma $ in $\mathcal{F}_{0}$ then $Af\in \mathcal{F}_{0}$ for all $%
f\in \mathcal{F}_{0}$ and so $\mathcal{F}_{0}\subseteq \mathcal{D}\left(
A^{\infty }\right) $. Moreover, the generator, when restricted to $\mathcal{F%
}_{0}$, is bounded and so the radius of convergence is infinite:

\begin{theorem}
\label{Th: analytic T 1}Suppose that $\mu $ and $\sigma ^{2}$ lie in $\left( 
\mathcal{F}_{0},\left\Vert \cdot \right\Vert _{\mathcal{F}_{0}}\right) $
defined in (\ref{eq: F_0 def}). Then $\mathcal{F}_{0}\subseteq \mathcal{D}%
\left( A\right) $ and $A:\mathcal{F}_{0}\mapsto \mathcal{F}_{0}$ is a
bounded operator. In particular, there exists $\bar{A}<\infty $ so that for
any $f\in \mathcal{F}_{0}$ and any $t>0$,%
\begin{equation*}
\left\Vert u_{t}-\hat{u}_{t}\right\Vert _{\mathcal{F}_{0}}\leq \frac{\left( t%
\bar{A}\right) ^{M+1}}{\left( M+1\right) !}\left\Vert f\right\Vert _{%
\mathcal{F}_{0}}\rightarrow 0\text{ as }M\rightarrow \infty .
\end{equation*}
\end{theorem}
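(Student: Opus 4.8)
The plan is to reduce the whole statement to one assertion: under the hypotheses, the generator $A$ maps $\mathcal{F}_0$ into itself and is a \emph{bounded} linear operator on $(\mathcal{F}_0,\|\cdot\|_{\mathcal{F}_0})$; everything else then follows from results already in hand. As a preliminary I would record that $\mathcal{F}_0\subseteq\mathcal{D}(A)$: any $f\in\mathcal{F}_0$ lies in $\mathcal{C}^\infty\subseteq\mathcal{C}^2$, and since $\mu$ and $\sigma^2$ lie in $\mathcal{F}_0$ (so, in the leading case of the $\sup$ norm, are bounded together with all their derivatives), the integrability conditions defining $\mathcal{D}_0(A)$ are trivially met, whence $f\in\mathcal{D}_0(A)\subseteq\mathcal{D}(A)$ and $Af$ equals the differential expression $(\ref{eq: A_D def})$ (we are in the pure--diffusion case; a bounded jump part could be appended via Lemma \ref{Lem: A_J bounded}). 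The same reasoning gives $A^kf\in\mathcal{D}_0(A)$ whenever $A^kf\in\mathcal{F}_0$, so it suffices to control $\|Af\|_{\mathcal{F}_0}$.

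For the core step, write $Af=\sum_i\mu_i\,\partial_{x_i}f+\tfrac12\sum_{i,j}\sigma^2_{ij}\,\partial^2_{x_ix_j}f$ and differentiate each summand by Leibniz: $\partial^\alpha_x(\mu_i\,\partial_{x_i}f)=\sum_{\beta\le\alpha}\binom{\alpha}{\beta}(\partial^\beta_x\mu_i)\,(\partial^{\alpha-\beta+e_i}_x f)$, and analogously for the second--order terms. Taking $\|\cdot\|_{\mathcal{F}}$ of both sides, using sub--multiplicativity $\|gh\|_{\mathcal{F}}\le\|g\|_{\mathcal{F}}\|h\|_{\mathcal{F}}$ (valid for the $\sup$ norm) and the stability property recorded just before the theorem, $\|\partial^\gamma_x g\|_{\mathcal{F}_0}\le\|g\|_{\mathcal{F}_0}$ for every $\gamma$, one bounds each Leibniz term by $\|\mu_i\|_{\mathcal{F}_0}\|f\|_{\mathcal{F}_0}$ (respectively $\|\sigma^2_{ij}\|_{\mathcal{F}_0}\|f\|_{\mathcal{F}_0}$). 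Summing over $i$, $j$ and $\beta\le\alpha$ and taking the supremum over $\alpha$ then yields $\|Af\|_{\mathcal{F}_0}\le\bar A\,\|f\|_{\mathcal{F}_0}$ with $\bar A$ depending only on $d$, $\|\mu\|_{\mathcal{F}_0}$, $\|\sigma^2\|_{\mathcal{F}_0}$, and iteration gives $\|A^mf\|_{\mathcal{F}_0}\le\bar A^m\|f\|_{\mathcal{F}_0}$.

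I expect the genuine obstacle to sit precisely in that summation: the inner sum contributes the binomial weight $\sum_{\beta\le\alpha}\binom{\alpha}{\beta}=2^{|\alpha|}$, which is not uniform in $\alpha$, so the naive estimate is only $\|\partial^\alpha_x(\mu_i\partial_{x_i}f)\|_{\mathcal{F}}\le 2^{|\alpha|}\|\mu_i\|_{\mathcal{F}_0}\|f\|_{\mathcal{F}_0}$ and $\|Af\|_{\mathcal{F}_0}$ need not be finite — indeed, with the literal norm $\sup_{|\alpha|\ge0}\|\partial^\alpha_x f\|_{\mathcal{F}}$ the space $\mathcal{F}_0$ is not closed under products. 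To make the argument fully rigorous I would read $\mathcal{F}_0$ as a Gevrey--type algebra, equipping it with a norm $\sup_\alpha\rho^{|\alpha|}\|\partial^\alpha_x f\|_{\mathcal{F}}/|\alpha|!$ for a sufficiently small $\rho>0$ (so that the Leibniz weights telescope and products are controlled), or equivalently imposing such uniform growth bounds on the higher derivatives of $\mu$ and $\sigma^2$; under that reading the bound $\|Af\|_{\mathcal{F}_0}\le\bar A\|f\|_{\mathcal{F}_0}$ is restored and $A$ restricted to $\mathcal{F}_0$ is bounded. I would flag this delicacy explicitly.

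Granting boundedness, the remainder is immediate. Since $\|A^mf\|_{\mathcal{F}_0}\le\bar A^m\|f\|_{\mathcal{F}_0}$ for all $m\ge1$ we get $\mathcal{F}_0\subseteq\mathcal{D}(A^\infty)$ and $A(\mathcal{F}_0)\subseteq\mathcal{F}_0$, so by the last part of Theorem \ref{Th: analytic} the map $t\mapsto u_t=E_tf$ is infinitely differentiable with $B^mf=A^mf$ and, by $(\ref{eq: T_0 def})$ and Stirling, radius of convergence $T_0=+\infty$. For the explicit bound I would apply Theorem \ref{Th: w-hat error bound} in the $\|\cdot\|_{\mathcal{F}_0}$ norm, $\|u_t-\hat u_t\|_{\mathcal{F}_0}\le\frac{t^{M+1}}{(M+1)!}\sup_{0\le s\le t}\|A^{M+1}u_s\|_{\mathcal{F}_0}$, and then use $\|A^{M+1}u_s\|_{\mathcal{F}_0}\le\bar A^{M+1}\|u_s\|_{\mathcal{F}_0}$ together with $\|u_s\|_{\mathcal{F}_0}=\|e^{sA}f\|_{\mathcal{F}_0}\le e^{s\bar A}\|f\|_{\mathcal{F}_0}$ (the series $e^{sA}=\sum_k s^kA^k/k!$ converging in operator norm because $A$ is bounded). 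This gives $\|u_t-\hat u_t\|_{\mathcal{F}_0}\le\frac{(t\bar A)^{M+1}}{(M+1)!}e^{t\bar A}\|f\|_{\mathcal{F}_0}$, which is the asserted estimate after enlarging $\bar A$, and it vanishes as $M\to\infty$ since it is the tail of the series for $e^{t\bar A}$.
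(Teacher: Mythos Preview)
Your approach is the same as the paper's: show that $A$ restricted to $\mathcal{F}_0$ is bounded, iterate to get $\|A^mf\|_{\mathcal{F}_0}\le\bar A^m\|f\|_{\mathcal{F}_0}$, and read off the exponential tail bound. The paper's proof is in fact briefer than yours --- it writes in one line
\[
\|A_Df\|_{\mathcal{F}_0}\le\sum_i\|\mu_i\|_{\mathcal{F}}\|\partial_{x_i}f\|_{\mathcal{F}}+\tfrac12\sum_{i,j}\|\sigma^2_{ij}\|_{\mathcal{F}}\|\partial^2_{x_ix_j}f\|_{\mathcal{F}}
\]
and passes directly to $\bar A\|f\|_{\mathcal{F}_0}$, without addressing the Leibniz issue you raise. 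So the $2^{|\alpha|}$ obstruction you flag is not something you are missing relative to the paper; it is a genuine delicacy that the paper's proof glosses over, and your instinct to interpret $\mathcal{F}_0$ as a Gevrey--type algebra (or to impose factorial weights) is the natural way to make the step rigorous.

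One minor difference at the end: rather than routing through Theorem~\ref{Th: w-hat error bound} and picking up the factor $\sup_{0\le s\le t}\|u_s\|_{\mathcal{F}_0}\le e^{t\bar A}\|f\|_{\mathcal{F}_0}$, the paper bounds the tail of the convergent series $u_t=\sum_{m\ge0}t^mA^mf/m!$ directly, writing $\|u_t-\hat u_t\|_{\mathcal{F}_0}\le \frac{t^{M+1}}{(M+1)!}\|A^{M+1}f\|_{\mathcal{F}_0}\le\frac{(t\bar A)^{M+1}}{(M+1)!}\|f\|_{\mathcal{F}_0}$. Your version is equivalent up to the harmless $e^{t\bar A}$ factor you already noted.
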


Note here that convergence holds for all $t>0$ and that the convergence rate
is super-geometric. Moreover, the result allows for a broad class of
non-linear multivariate diffusion models. On the other hand, it rules out
unbounded drift and diffusion terms.

\subsection{Convergence over bounded sets\label{Sec: conv bounded}}

The above results are strong in the sense that they guarantee convergence
w.r.t a function norm over the full state space $\mathcal{X}$. But at the
same time they are restrictive in that they do not apply to general
multivariate jump--diffusion models. One way of allowing for a broader class
of models and functions is to restrict attention to solutions defined on a
bounded subset of $\mathcal{X}$ leading to the following class of so--called
localized Cauchy problems. We here focus on the case of pure diffusions
since for this class of models results exist on analytic solutions on
bounded sets.

Let $\mathcal{X}_{0}\subseteq \mathcal{X}$ be a bounded open set and let $%
u_{t}^{\ast }\left( x\right) $ be a function chosen by the researcher which
satisfies $u_{0}^{\ast }\left( x\right) =f\left( x\right) $. We then
consider the following "trimmed" version of the Cauchy problem for diffusion
models:%
\begin{eqnarray}
\partial _{t}w_{t}\left( x\right)  &=&A_{D}w_{t}\left( x\right) \text{ for }%
\left( t,x\right) \in \left( 0,\infty \right) \times \mathcal{X}_{0},
\label{eq: Dirichlet 1} \\
w_{t}\left( x\right)  &=&u_{t}^{\ast }\left( x\right) \text{ for}\left(
t,x\right) \in \left( 0,\infty \right) \times \mathcal{X}\backslash \mathcal{%
X}_{0},  \label{eq: Dirichlet 2}
\end{eqnarray}%
with initial condition $w_{0}\left( x\right) =f\left( x\right) $ for $x\in 
\mathcal{X}_{0}$. We now only require the solution $w_{t}\left( x\right) $
to solve the Cauchy problem on a bounded open subset $\mathcal{X}_{0}$ of
the full domain $\mathcal{X}$ and then pin down its behaviour outside of $%
\mathcal{X}_{0}$ through the pre-specified function $u^{\ast }$. The class
of problems on the form (\ref{eq: Dirichlet 1})--(\ref{eq: Dirichlet 2}) can
be described by a semi--group $E_{t}$ so that $w_{t}=E_{t}f$. By choosing $%
\mathcal{X}_{0}$ as a bounded set, the requirements for the semi--group to
be analytic becomes a lot less restrictive and essentially requires $\mu $
and $\sigma ^{2}$ to be sufficiently smooth; see, e.g., Chapter 3 in \cite%
{Lunardi1995}. The following theorem states the precise conditions:

\begin{theorem}
\label{Th: analytic bounded}Suppose that $\mu \left( x\right) $, $\sigma
^{2}\left( x\right) $ and $u_{t}^{\ast }\left( x\right) $ are analytic
functions so that, for some $0<C_{0},C_{1}<\infty $,%
\begin{equation}
\left\Vert \partial _{x}^{\alpha }\mu \left( x\right) \right\Vert \leq
C_{0}C_{1}^{-1-\left\vert \alpha \right\vert }\left\vert \alpha \right\vert
!,\text{ \ \ \ }\left\Vert \partial _{x}^{\alpha }\sigma ^{2}\left( x\right)
\right\Vert \leq C_{0}C_{1}^{-1-\left\vert \alpha \right\vert }\left\vert
\alpha \right\vert !,\text{ \ \ }x\in \mathcal{X}_{0};
\label{eq: mu sig bounds}
\end{equation}%
and, for some $c>0$ and for all $x,y\in \mathcal{X}_{0}$, $y^{\prime }\sigma
^{2}\left( x\right) y\geq c\left\Vert y\right\Vert $. Then, $w_{t}:\mathcal{X%
}_{0}\mapsto \mathbb{R}$ is analytic at any $t>0$ w.r.t. the uniform norm, $%
\left\Vert w_{t}\right\Vert _{\mathcal{F},0}=\sup_{x\in \mathcal{X}%
_{0}}\left\vert w_{t}\left( x\right) \right\vert $.

Suppose furthermore that $f\left( x\right) =E_{\tau _{0}}g\left( x\right) $
for some continuous function $g$. Then, $w_{t}:\mathcal{X}_{0}\mapsto 
\mathbb{R}$ is analytic at $t=0$ with radius of convergence $T_{0}>1/\left(
\rho \tau _{0}\right) $, where $\rho =\rho \left( B,d\right) \in (0,1]$.
\end{theorem}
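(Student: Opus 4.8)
The strategy is to realise the localised Cauchy problem \eqref{eq: Dirichlet 1}--\eqref{eq: Dirichlet 2} through a semigroup generated by a uniformly elliptic operator on the bounded set $\mathcal{X}_0$, to use the classical fact that such operators generate \emph{analytic} semigroups, and then to read off the two claims from the smoothing estimates of analytic semigroups together with the assumed analyticity of $\mu$, $\sigma^2$ and $u^*$. As a preliminary reduction, note that since $u_0^*=f$ on $\mathcal{X}_0$, the function $v_t:=w_t-u_t^*$ vanishes at $t=0$ on $\mathcal{X}_0$ and equals zero on $\mathcal{X}\setminus\mathcal{X}_0$ (hence on $\partial\mathcal{X}_0$) for all $t$, and it solves the inhomogeneous Dirichlet problem $\partial_t v_t=A_D v_t+h_t$ on $\mathcal{X}_0$ with $h_t:=A_D u_t^*-\partial_t u_t^*$. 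Writing $A_{D,0}$ for the Dirichlet realisation of $A_D$ on $C(\overline{\mathcal{X}}_0)$ and $S_t=e^{tA_{D,0}}$ for the associated semigroup, one has $w_t=u_t^*+\int_0^t S_{t-\tau}h_\tau\,d\tau$ on $\mathcal{X}_0$, so that the semigroup $E_t$ figuring in the statement equals $S_t$ up to the analytic corrections carried by $u^*$.

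For the first claim I would invoke classical parabolic theory: under the uniform ellipticity hypothesis and the regularity of $\mu$ and $\sigma^2$ (analytic, in particular $C^\infty$), the operator $A_{D,0}$ generates an analytic semigroup on $C(\overline{\mathcal{X}}_0)$; see \cite{Lunardi1995}, Chapter~3 (and Pazy (1983), Chapter~7, for the $L^p$ version). Analyticity of $S_t$ means $t\mapsto S_t$ extends holomorphically to a sector $\{|\arg z|<\delta\}$, and Cauchy's formula on circles of radius comparable to $t\sin\delta$ then yields $\|A_{D,0}^m S_t\|_{\mathrm{op}}\le M_0\,m!\,(t\sin\delta)^{-m}$ for all $m\ge 0$ and $t>0$. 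Consequently $t\mapsto S_t\varphi$ is real-analytic on $(0,\infty)$; the Duhamel term $t\mapsto\int_0^t S_{t-\tau}h_\tau\,d\tau$ is likewise real-analytic there since $h_\tau$ is analytic in $\tau$ (write the convolution as a contour integral and differentiate under it, or use the regularity results in \cite{Lunardi1995}, Chapter~4); and $u_t^*$ is analytic by hypothesis. Hence $w_t$ is analytic at every $t>0$ in the norm $\|\cdot\|_{\mathcal{F},0}$.

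For the second claim, the additional hypothesis $f=E_{\tau_0}g=S_{\tau_0}g$ places $f$ in the range of the analytic semigroup, so the Cauchy estimate above gives, with $B=A_{D,0}$, that $\|B^m f\|_{\mathcal{F},0}=\|A_{D,0}^m S_{\tau_0}g\|_{\mathcal{F},0}\le M_0\,m!\,(\rho\tau_0)^{-m}\|g\|_{\mathcal{F},0}$, where $\rho=\rho(B,d)\in(0,1]$ absorbs $\sin\delta$ and the semigroup bound — quantities that are themselves controlled, through the resolvent estimates underlying the analyticity of $A_{D,0}$, by the ellipticity constant $c$, the Cauchy bounds $C_0,C_1$ in \eqref{eq: mu sig bounds} and the dimension $d$. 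Since $u^*$, and hence $h$, obey \eqref{eq: mu sig bounds}, the contributions $u_t^*$ and $\int_0^t S_{t-\tau}h_\tau\,d\tau$ remain analytic all the way down to $t=0$ with a radius controlled by $C_0$ and $C_1$ alone; substituting the bound on $\|B^m f\|$ into \eqref{eq: T_0 def} then produces the asserted lower bound on the radius of convergence $T_0$.

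The main obstacle is the passage, required for the second claim, from analyticity on $(0,\infty)$ to analyticity up to $t=0^+$ with a quantitative radius: this is exactly where analyticity of the coefficients — the full Cauchy estimates \eqref{eq: mu sig bounds}, not merely smoothness or Hölder continuity — is indispensable, since it forces the solution operator, restricted to the class of analytic vectors containing $\mathrm{Range}(E_{\tau_0})$, to act like a bounded operator of norm of order $(\rho\tau_0)^{-1}$. A subsidiary difficulty is the treatment of the boundary of the arbitrary open set $\mathcal{X}_0$ in the space of continuous functions: one either imposes mild regularity of $\partial\mathcal{X}_0$ so that the cited results apply verbatim, or works with interior parabolic analyticity estimates and exploits the assumed analyticity of the prescribed exterior function $u^*$ to carry analyticity up to $\partial\mathcal{X}_0$.
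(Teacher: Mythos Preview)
Your argument is essentially correct, but the route differs from the paper's. The paper's proof is a direct citation of Theorem~1.1 in Escauriaza, Montaner and Zhang (2017), which, under precisely the analytic-coefficient hypotheses \eqref{eq: mu sig bounds} and uniform ellipticity, delivers the quantitative bound $|\partial_t^m w_t|_{t=0}|\le C(\rho\tau_0)^{-m}m!$ in one stroke; plugging this into \eqref{eq: T_0 def} gives the radius estimate. You instead reconstruct the result from classical analytic-semigroup machinery: subtract $u^*$ to reduce to a Dirichlet problem, invoke Lunardi/Pazy to obtain an analytic semigroup $S_t$, and then bound $\|A_{D,0}^m S_{\tau_0}\|$ via Cauchy estimates in the sector of holomorphy. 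Your route is more self-contained and makes transparent why $\rho$ depends only on the ellipticity constant, the dimension, and the resolvent bound; the paper's route is shorter and sidesteps both the Duhamel bookkeeping and the boundary-regularity issue you flag, since the Escauriaza et al.\ theorem is formulated as an interior parabolic estimate and handles the inhomogeneous data directly.

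One point to tighten: in your argument for the second claim the Cauchy bounds \eqref{eq: mu sig bounds} enter only implicitly, through the sector angle of the analytic semigroup, whereas the Escauriaza et al.\ result uses them explicitly to control the growth of spatial derivatives that feed back into the time derivatives. As written, your $\rho$ depends on the H\"older (not analytic) norms of the coefficients, which is actually a slightly stronger conclusion than what the paper states; if you keep this route you should either note that, or trace more carefully how $C_0,C_1$ govern the resolvent constant.
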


This provides simple and relatively weak conditions under which a series
expansion of $w_{t}$ will converge. But will such series expansion be a good
approximation to $u_{t}$? By eq. (\ref{eq: Dirichlet 1}) together with the
initial condition 
\begin{equation*}
\left. \partial _{t}^{m}w_{t}\left( x\right) \right\vert
_{t=0^{+}}=A_{D}^{m}f\left( x\right) ,\text{ \ \ }x\in \mathcal{X}_{0}\text{%
, \ \ }m\geq 0.
\end{equation*}%
Thus, under the conditions of the theorem, our proposed power series
approximation shares derivatives with $u_{t}$ on $\mathcal{X}_{0}$. At the
same time, the solution $w_{t}$ will generally differ from the global
solution $u_{t}$. However, if we restrict $f\in \mathcal{D}^{\infty }\left(
A_{D}\right) $ then $\left. \partial _{t}^{m}u_{t}\left( x\right)
\right\vert _{t=0^{+}}=A_{D}^{m}f\left( x\right) $ and so $w_{t}\left(
x\right) =u_{t}\left( x\right) $, $x\in \mathcal{X}_{0}$, and the power
series will be consistent on $\mathcal{X}_{0}$. In particular, if we can
show that $w_{t}\left( x\right) $ is analytic on $\mathcal{X}_{0}$ then the
same will hold for $u_{t}\left( x\right) $ when considered as a function
with domain $\mathcal{X}_{0}$. This result combined with Lemma \ref{Lem: A_J
bounded} shows that our power series expansions converges for a very broad
class of diffusion models over bounded subsets of their domains.

\subsection{Expansion of "irregular" moments}

Finally, we provide an analysis of smoothed expansions on the form (\ref{eq:
E approx general}). First, by following the same arguments as in Theorem \ref%
{Th: w-hat error bound}, it is easily shown using Taylor's Theorem that if $%
u_{0,s}$ satisfies A.0 then $\hat{u}_{t}\left( x\right) =\hat{E}_{t}f\left(
x\right) $ given in (\ref{eq: E approx general}) with $M_{1}+M_{2}=M$
satisfies%
\begin{equation*}
\left\vert u_{t}\left( x\right) -\hat{u}_{t}\left( x\right) \right\vert \leq
\sum_{m_{1}+m_{2}=M}\frac{\left( -s\right) ^{m_{1}}t^{m_{2}}}{m_{1}!m_{2}}%
\sup_{0\leq s,\tau \leq t}\left\vert \left( \partial _{s}\right)
^{m_{1}}B^{m_{2}}E_{\tau }u_{0,s}\left( x\right) \right\vert =O\left(
s^{M}\right) +O\left( t^{M}\right) .
\end{equation*}%
One could now hope for that as long as $u_{0,s}\left( x\right) $ is
sufficiently regular then the expansion would converge under conditions
similar to the ones in the "regular" case analyzed in the previous section.
This is unfortunately not the case. To see this, observe that in order for
the expansion to be asymptotically valid $s\mapsto u_{0,s}\left( x\right) $
has to be analytic so that%
\begin{eqnarray*}
f\left( x\right)  &=&\sum\limits_{m_{1}=0}^{\infty }\frac{\left( -s\right)
^{m_{1}}}{m_{1}!}\partial _{s}^{m}u_{0,s}\left( x\right)  \\
&=&\sum\limits_{m_{1}=0}^{M_{1}}\frac{\left( -s\right) ^{m_{1}}}{m_{1}!}%
\partial _{s}^{m}u_{0,s}\left( x\right) +\sum\limits_{m_{1}=M_{1}+1}^{\infty
}\frac{\left( -s\right) ^{m_{1}}}{m_{1}!}\partial _{s}^{m}u_{0,s}\left(
x\right)  \\
&=&:\hat{f}\left( x\right) +r\left( x\right) ,
\end{eqnarray*}%
where $r\left( x\right) $ is the remainder term from a $M_{1}$th order
Taylor expansion of $s\mapsto u_{0,s}$ around $s=0$. If $f\notin \mathcal{D}%
\left( B\right) $ and, for some $M_{1}\geq 1$, $\hat{f}\left( x\right) \in 
\mathcal{D}\left( B\right) $ then obviously $r\left( x\right) \notin 
\mathcal{D}\left( B\right) $. Thus, as $M$ grows large enough, we must have $%
\hat{f}\notin \mathcal{D}\left( B\right) $ in which case $\hat{E}_{t}f\left(
x\right) =\sum_{m_{2}=0}^{M_{2}}\frac{t^{m_{2}}}{m_{2}}B^{m_{2}}\hat{f}%
\left( x\right) $ is not well-defined. In practice, we expect $\hat{E}%
_{t}f\left( x\right) $ in (\ref{eq: E approx general}) to become numerically
unstable as $M_{2}\rightarrow \infty $. That is, the numerical error will
start blowing up.

This demonstrates that the proposed series expansions of irregular functions
such as densities and option prices should be used with caution: As more
terms are added to the expansions, they will most eventually become
numerically unstable and produce unreliable estimates. However, as we shall
see in the next section, the expansions still work well when a reasonably
small number of terms are used.

\section{Numerical results\label{sec: numeric}}

We assess the performance of our approximations when applied to the problem
of option pricing when the underlying asset's dynamics are described by a
stochastic volatility model with jumps under the risk--neutral measure. We
consider the following class of asset pricing models where the log--price $%
s_{t}$ of a given asset exhibits both stochastic volatility and jumps, 
\begin{equation}
ds_{t}=\left( \mu -v_{t}/2-\lambda \left( v_{t}\right) \bar{J}\right) dt+%
\sqrt{v_{t}}dW_{1t}+\log \left( J_{t}+1\right) dN_{t},  \label{eq: logret}
\end{equation}%
where the volatility process $v_{t}$ is solution to either 
\begin{equation}
dv_{t}=\kappa _{V}\left( \alpha _{V}-v_{t}\right) dt+\sigma _{V}v_{t}^{\beta
}\left( \rho dW_{1t}+\sqrt{1-\rho ^{2}}dW_{2t}\right) ,  \label{eq: vol}
\end{equation}%
or 
\begin{equation}
d\log v_{t}=\kappa _{V}\left( \alpha _{V}-\log v_{t}\right) dt+\sigma
_{V}\left( \rho dW_{1t}+\sqrt{1-\rho ^{2}}dW_{2t}\right) .
\label{eq: logvol}
\end{equation}%
Here, $\mu =r-\delta $ where $r$ and $\delta $ are the risk-free rate and
the constant dividend, respectively. To ensure that the model has a
well-defined solution, $\kappa _{V}$, $\alpha _{V}$, $\sigma _{V}$ are
restricted to be positive and $1/2\leq \beta \leq 1$.

The jump component consists of a Cox process $N_{t}$ with a jump intensity
function given by $\lambda \left( v\right) =\lambda _{0}+\lambda _{1}v$, and
a random variable $J_{t}$ with support $[-1,\infty )$, and expectation $\bar{%
J}$. We include $-\lambda \left( v_{t}\right) \bar{J}$ in the drift as a
compensator such that the jump part is a martingale. For example, if $J+1$
is chosen to be log-normally distributed with parameters $\mu _{J}$ and $%
\sigma _{J}$, then $\bar{J}=\exp \left( \mu _{J}+\sigma _{J}^{2}/2\right) -1$%
. Special cases of this model include \citet{Merton1976}, where both
volatility and jump intensity are constant, $v_{t}=\sigma _{0}$ and $\lambda
\left( v\right) =\lambda _{0}$. Eq. (\ref{eq: logret}) together with either (%
\ref{eq: vol}) or (\ref{eq: logvol}) is a special case of (\ref{eq: model})
with $x_{t}=\left( s_{t},v_{t}\right) $.

This class of models subsumes models in \cite{andersen2002} and \cite%
{Wan2021} as well as a number of other special cases. Compared to \cite%
{andersen2002}, our specification allows the variance process to be the
non-affine continuous-time GARCH model ($\beta =1$) and the CEV model ($%
1/2<\beta <1$). Also, compared to \cite{Wan2021}, we allow for
state-dependent jump intensity ($\lambda _{1}\neq 0$) which they rule out.

We consider a European call option with payoff $f\left( s_{T}\right) \equiv
\max \left\{ \exp \left( s_{T}\right) -K,0\right\} $ at maturity time $T>0$,
where $K=100$ is the strike price. With the above model formulated under the
so--called risk--neutral measure, let $u_{\Delta }\left( s,v\right) =E\left[
f\left( s_{T}\right) |s_{T-\Delta }=s,v_{T-\Delta }=v\right] $ be the
expected risk--neutral pay-off the option expires in $\Delta $ time units
and the current log stock price and volatility is $s$ and $v$, respectively.
Within the above class of models for $s_{t}$, no closed-form formula for the
option price is available. We here implement our proposed series expansion
of the unknown price, $\hat{u}_{t}\left( s,v\right) $, as given in (\ref{eq:
u-hat special}), where we choose $u_{0,\Delta }$ as the pay-off under the
Black- Scholes model as given in (\ref{eq: B-S pay-off}).

In the case of state--dependent jumps, we need to compute the integration
part of $A_{J}$ using numerical methods. Since $\log \left( J_{t}+1\right) $
is i.i.d. with normal distribution with mean $m_{J}$ and standard deviation $%
\sigma _{J}$ for all models in this section, we use the Gauss-Hermite
quadrature with different numbers of nodes and weights, whose values are
fixed after choosing the number of nodes and weights, c.f. Section \ref{sec:
numeric}.

To assess the numerical performance of our expansion, we will use as
benchmark the option price obtained via Monte Carlo methods, where the total
number of simulation trials $10,000,000$ and the time-step is $10,000$ per
year, see Chapter 3 in \cite{Giesecke2018} for details. We measure the
accuracy of the approximations by the maximum absolute error and the
absolute percentage error defined as follows: $\max_{S\in \left[ 90,110%
\right] }\lvert \hat{u}_{\Delta }\left( s,v\right) -u_{\Delta }^{MC}\left(
s,v\right) \rvert $ and $\max_{S\in \left[ 90,110\right] }\lvert \hat{u}%
_{\Delta }\left( s,v\right) -u_{\Delta }^{MC}\left( s,v\right) \rvert
/u_{\Delta }^{MC}\left( s,v\right) $, respectively, where $\hat{u}_{\Delta
}\left( s,v\right) $ and $u_{\Delta }^{MC}\left( s,v\right) $ are the series
expansion and the Monte Carlo version of the option price, respectively.

We consider increasingly challenging experiments, aiming to assess the
resilience of our method to the approximation of option prices under
increasingly complex models.

\subsection{State-independent jumps\label{subsec: indepjump}}

In this subsection, we explore the performance our method when jumps are
state-independent ($\lambda _{1}=0$).

In Figure \ref{fig: SQR_NJ}, we depict the approximation errors resulting
from our method for (\ref{eq: logret})--(\ref{eq: vol}) with $\beta =0.5$
across different levels of the current asset price. As in \cite{Wan2021},
the parameter values used in this experiment are chosen as the estimates
reported in \cite{Eraker2004}, which are displayed in the figure legend.
From left to right, the time to maturity ranges from $\Delta :=T-t=1/52$,
1/12, and 1/4, respectively. In the top three panels, the maximum absolute
error has been plotted for 1st, 2nd, 3rd, and 4th order approximation,
respectively; whereas the horizontal axis denotes the number of nodes and
weights for the Gauss-Hermite quadrature. For the bottom three panels, the
vertical axis denotes the absolute percentage error; whereas the horizontal
axis denotes the stock price.

We make the following observations: First, for all maturities, as $M$
increases, the approximation error decreases. Second, for a given order of
approximation, our method is more accurate as the time to maturity
decreases. Third, one can achieve accurate approximations with small number
of nodes and weights used in the quadrature approximation of the jump
component. For small time to maturity ($t=\Delta =1/52$), it is sufficient
to use the Gauss-Hermite quadrature with 10 nodes and weights, but, for
larger time to maturity ($t=\Delta =1/12$, 1/4), only 4 or 5 nodes and
weights. It indicates that the error in computing the integration part of
the jump component is smaller than the error of our Taylor series
approximation as the number of nodes and weights for the Gauss-Hermite
quadrature increases.

\begin{figure}
	\centering
	\includegraphics[width=0.7\linewidth]{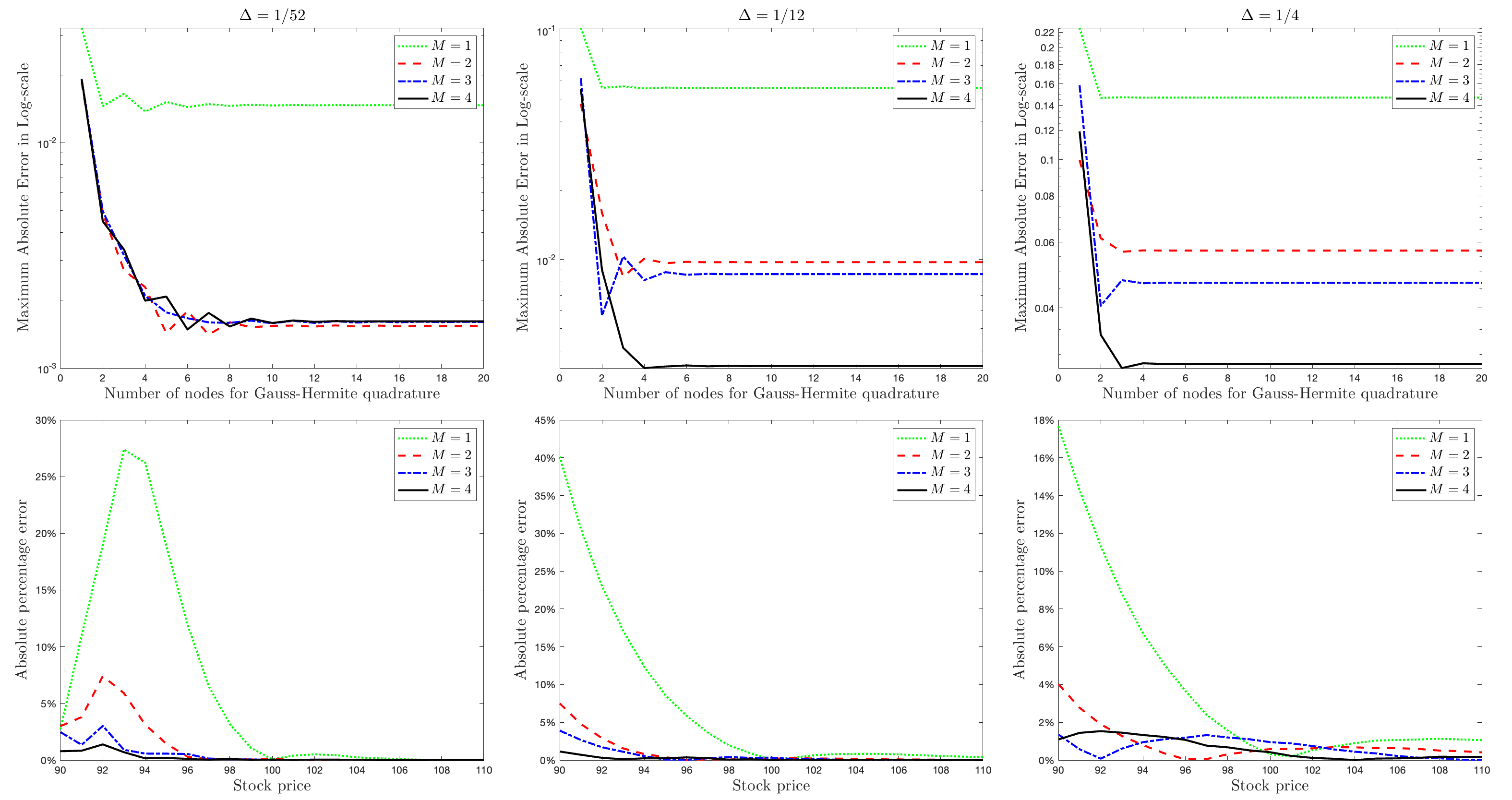}
	\caption[]{Maximum absolute error (top) and absolute percentage error (bottom) of option price approximations in (\ref{eq: logret})--(\ref{eq: vol}) with $\beta =0.5$, $M=1,2,3,4$, $\Delta=1/52,1/12,1/4$, $s=100$, $v=0.0416$}
\label{fig: SQR_NJ}
\end{figure}

Figure \ref{fig: CEV} and \ref{fig: GARCHNJ} investigate the numerical
performance of our approximation for the call option under the stochastic
volatility model (\ref{eq: logret})--(\ref{eq: vol}) with the CEV ($\beta
=0.8$) and GARCH ($\beta =1$) specifications of variance, respectively. The
parameters for the CEV and GARCH specification are from \cite{aitsahalia2007}
and \cite{Yang2017}, respectively, but we added or changed the parameters
for the jump part, which is the same as in \cite{Wan2021}.

For $\Delta =1/52$ and $\Delta =1/12$, the performances of the approximation
for both two models share three patterns arose in the outcome in Figure \ref%
{fig: SQR_NJ}. However, for longer time-to-maturity, $\Delta =1/4$, the
higher order of approximation does not guarantee smaller approximation
error. In general, the performance of the approximation error is good with
shorter maturities and/or $\beta $ takes on a relatively small value.

\begin{figure}
	\centering
	\includegraphics[width=0.7\linewidth]{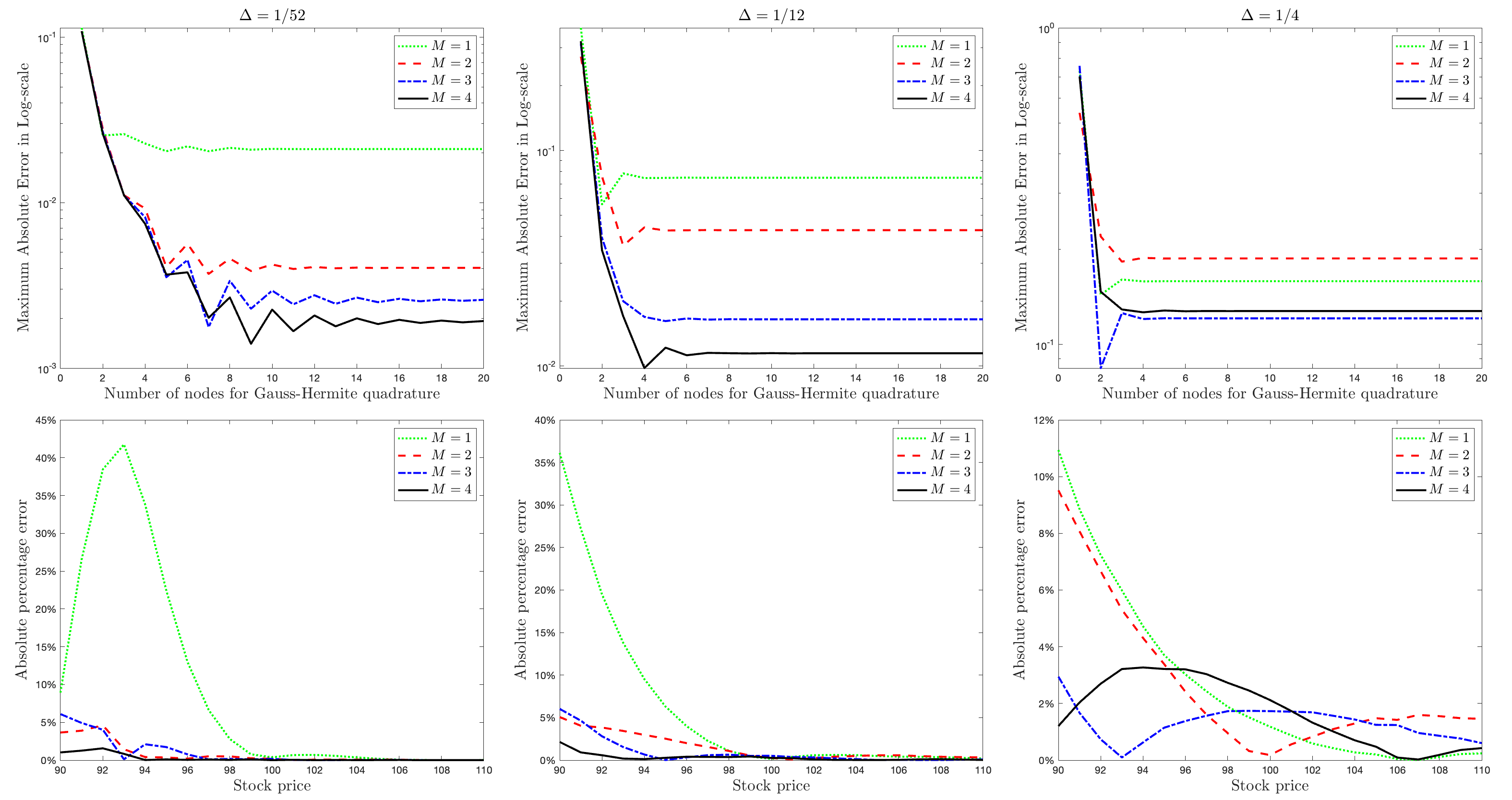}
	\caption[]{Maximum absolute error (top) and absolute percentage error (bottom) of option price approximations in (\ref{eq: logret})--(\ref{eq: vol}) with $\beta=0.8$}
\label{fig: CEV}
\end{figure}

\begin{figure}
	\centering
	\includegraphics[width=0.7\linewidth]{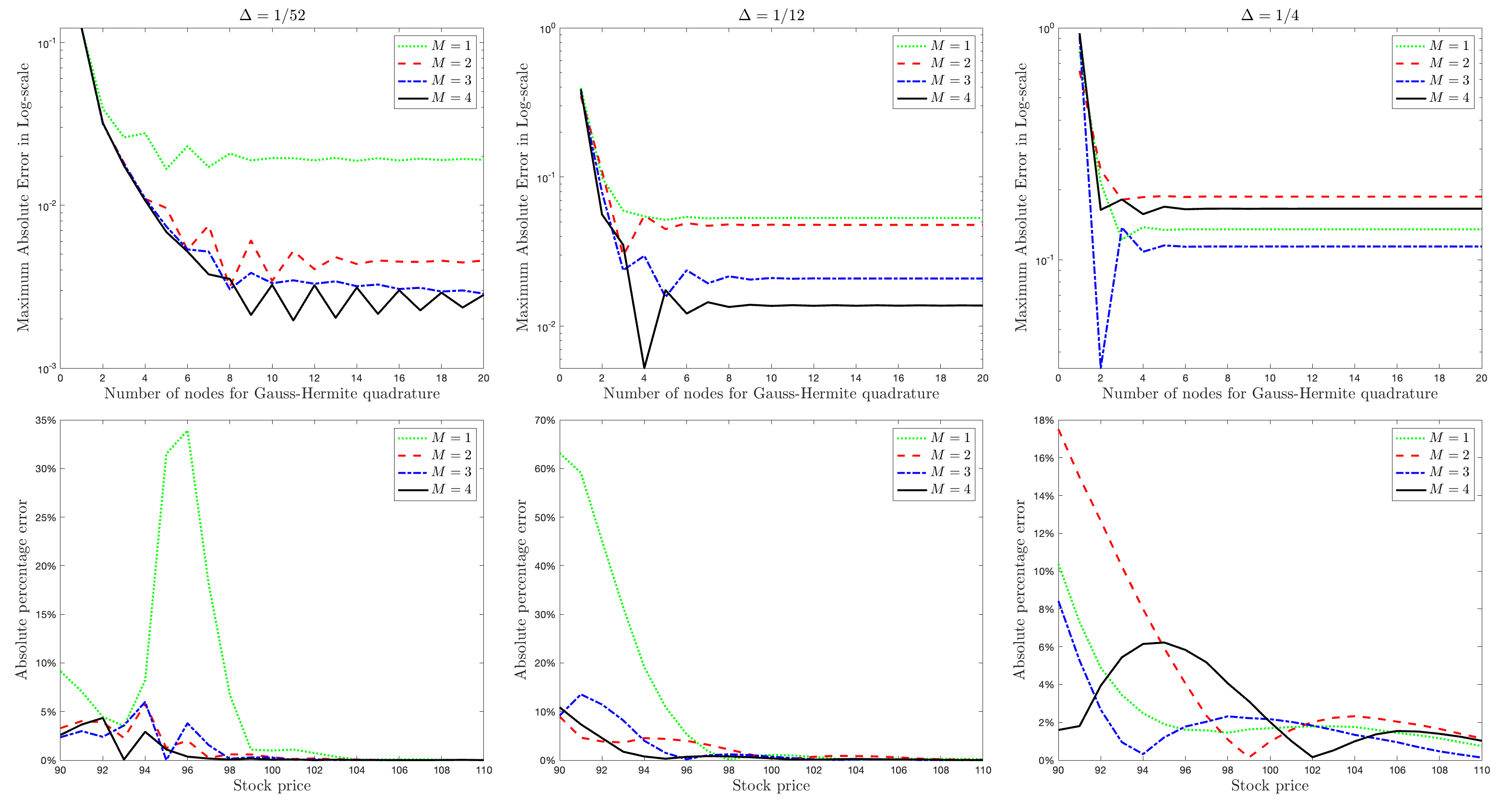}
	\caption[]{Maximum absolute error (top) and absolute percentage error (bottom) of option price approximations in (\ref{eq: logret})--(\ref{eq: vol}) with $\beta=1.0$}
\label{fig: GARCHNJ}
\end{figure}

\subsection{State-dependent jumps\label{subsec: depjump}}

In this subsection, we provide results for the case where the option price
is computed under models with state-dependent jump intensities ($\lambda
_{1}\neq 0$).

In Figures \ref{Fig: Lambda 1}--\ref{fig: Lambda 3}, we depict the relative
error of the approximation for the same three models considered in the
previous subsection, except that now $\lambda _{1}=1,10,30$, when
time-to-maturity equal to one month, $\Delta =1/12$. In each figure, from
left to right, the state dependency of jump intensity ranges $\lambda
_{1}=1,10,30$. Overall, the approximation errors for each of the three
models are comparable to that of the same model with state-independent jumps
($\lambda _{1}=0$). Furthermore, the absolute percentage error is smaller
for all orders of approximation for larger $\lambda _{1}$. It indicates that
the magnitude of $\lambda _{1}$ affects the level of option prices but does
not affect the approximation errors. That is, the performance of our
approximation is not very sensitive to the degree of state dependence of the
jumps as measured by the value of $\lambda _{1}$.

\begin{figure}
	\centering
	\includegraphics[width=0.7\linewidth]{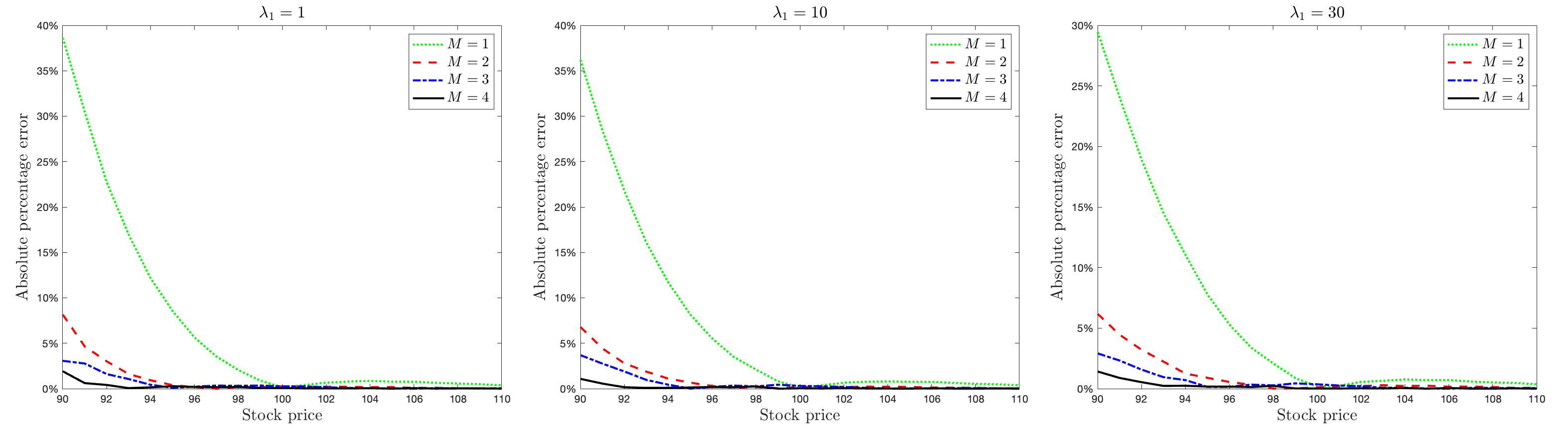}
	\caption[]{Maximum absolute error (top) and absolute percentage error (bottom) of option price approximations in (\ref{eq: logret})--(\ref{eq: vol}) with $\beta =0.5$ and $\lambda _{1}=1$ (left panel), $10$ (middle panel) and $30$ (right panel)}
\label{Fig: Lambda 1}
\end{figure}

\begin{figure}
	\centering
	\includegraphics[width=0.7\linewidth]{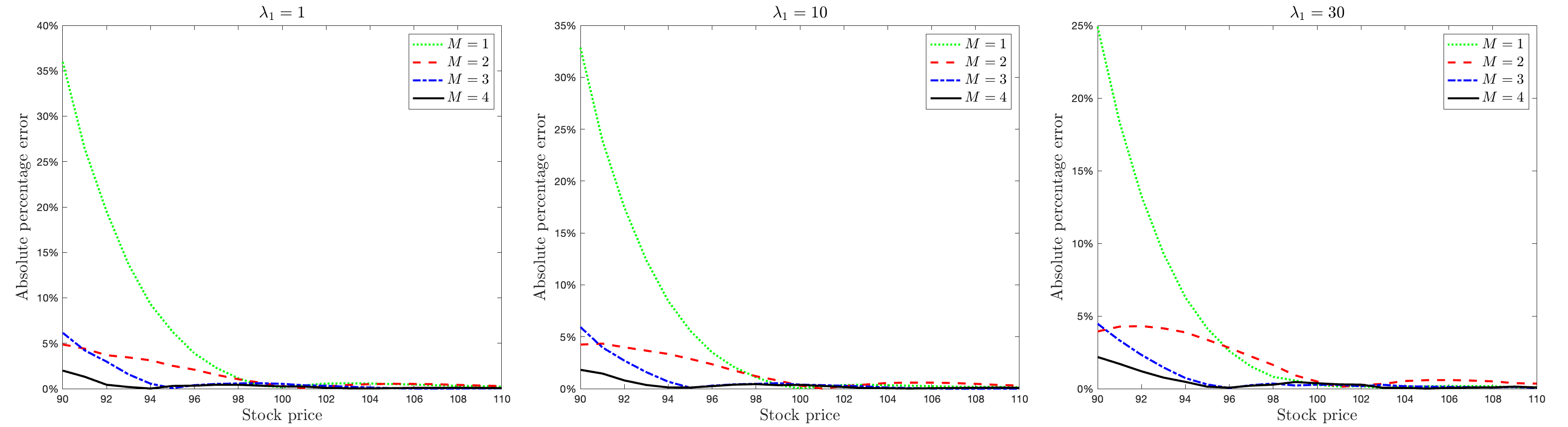}
	\caption[]{Maximum absolute error (top) and absolute percentage error (bottom) of option price approximations in (\ref{eq: logret})--(\ref{eq: vol}) with $\beta =0.8$ and $\lambda _{1}=1$ (left panel), $10$ (middle panel) and $30$ (right panel)}
\label{Fig: Lamda 2}
\end{figure}

\begin{figure}
	\centering
	\includegraphics[width=0.7\linewidth]{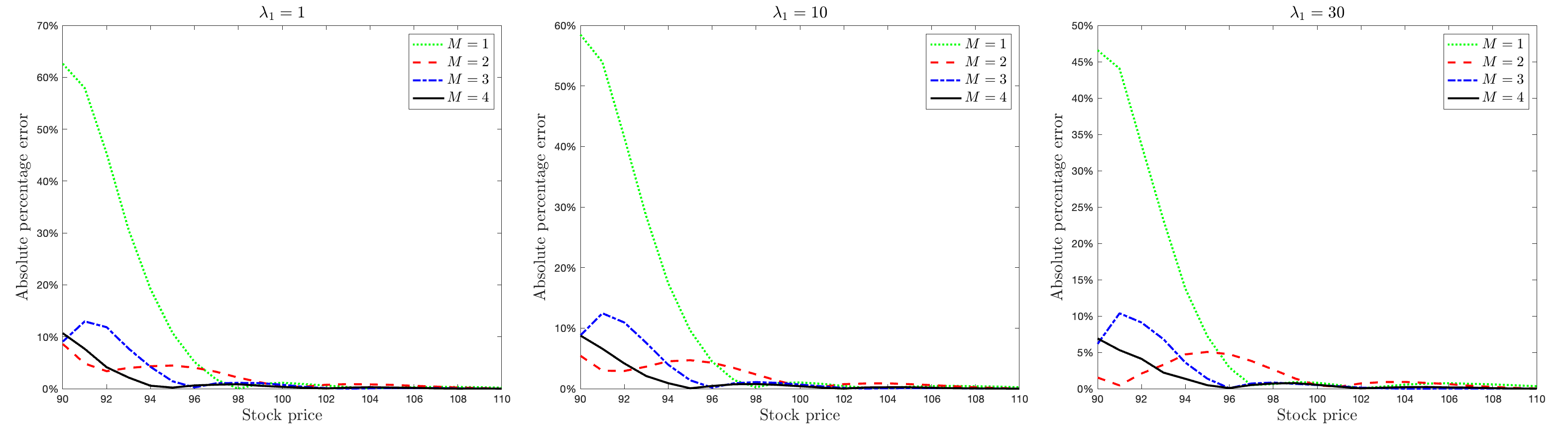}
	\caption[]{Maximum absolute error (top) and absolute percentage error (bottom) of option price approximations in (\ref{eq: logret})--(\ref{eq: vol}) with $\beta =1.0$ and $\lambda _{1}=1$ (left panel), $10$ (middle panel) and $30$ (right panel)}
\label{fig: Lambda 3}
\end{figure}

Next, we consider the performance when $v_{t}$ solves the log--volatility
model (\ref{eq: logvol}) with parameters chosen as $\left( r,\delta ,\kappa
_{V},\alpha _{V},\sigma _{V},\rho \right) =\left(
0.0304,0,0.0145,-0.8276,0.1153,-0.6125\right) $ and $\left( \lambda _{0},\mu
_{J},\sigma _{J}\right) =\left( 0.0137,-0.000125,0.015\right) $; these are
the estimates reported in \cite{andersen2002}. Figures \ref{fig: logV 1} and %
\ref{fig: logV 2} display the relative error of the approximation for the
call option under this model for different values of $\Delta $ and $\lambda
_{1}$ with $v_{0}=\alpha _{V}=-0.8276$. We see that even for the 2nd order
approximation, the approximation error is quite small for all choices of
time--to--maturity and $\lambda _{1}$. The plotted errors are now more
ragged which we conjecture is due to bigger numerical errors in the Monte
Carlo benchmark that we use for comparison.

\begin{figure}
	\centering
	\includegraphics[width=0.7\linewidth]{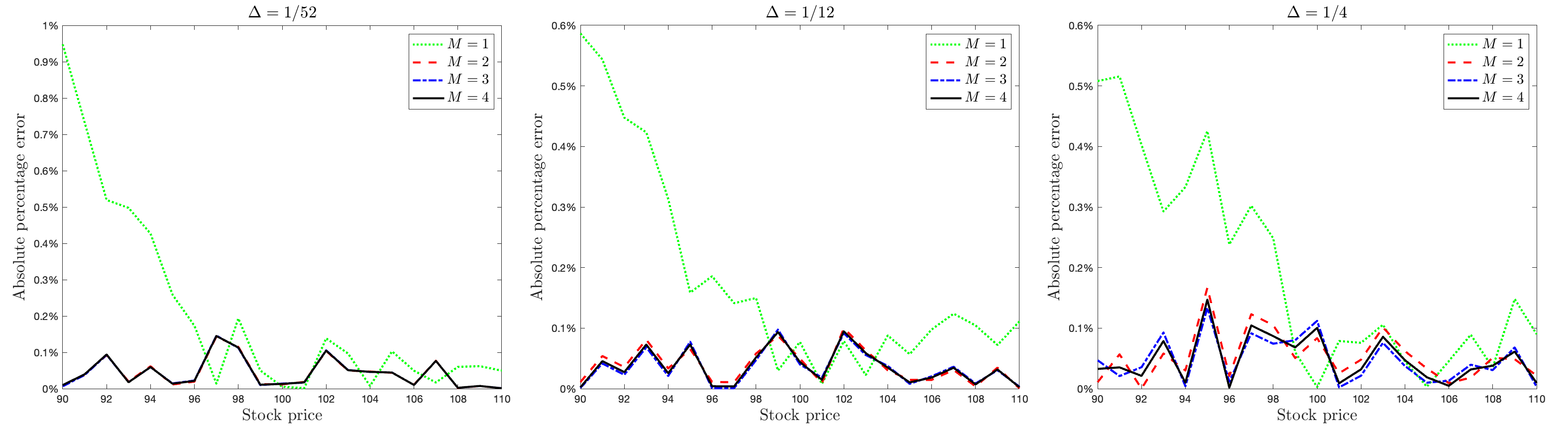}
	\caption[]{Maximum absolute error (top) and absolute percentage error (bottom) of option price approximations in (\ref{eq: logret}) and (\ref{eq: logvol}) with $\lambda _{1}=1$ and $\Delta=1/52$ (left), 1/12 (center), 1/4 (right)}
\label{fig: logV 1}
\end{figure}

\begin{figure}
	\centering
	\includegraphics[width=0.7\linewidth]{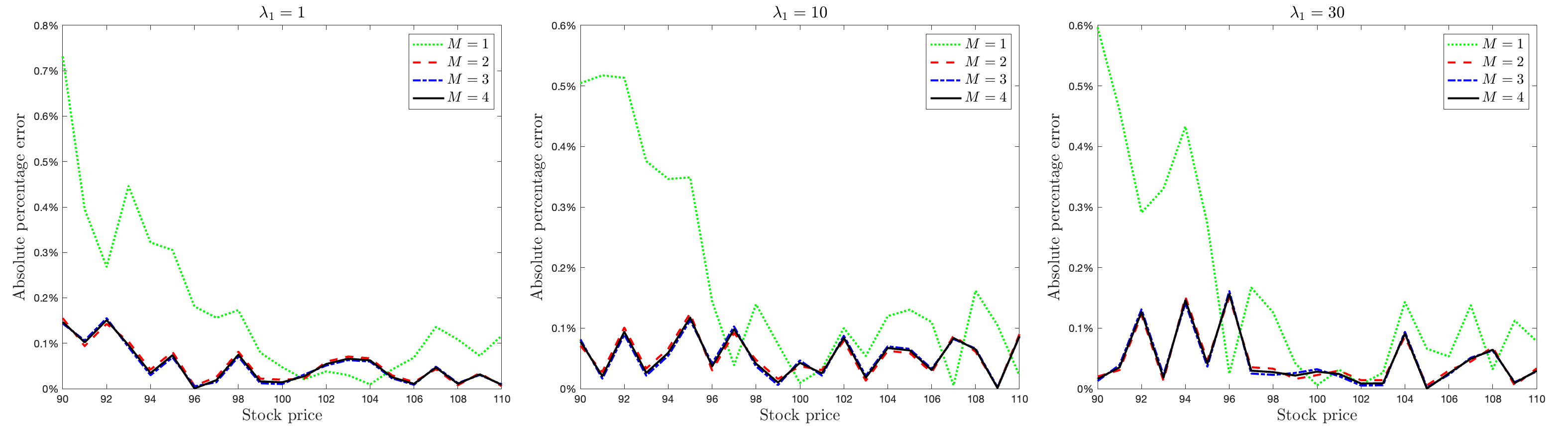}
	\caption[]{Maximum absolute error (top) and absolute percentage error (bottom) of option price approximations in (\ref{eq: logret}) and (\ref{eq: logvol}) with $\lambda _{1}=1$ (left), 10 (center), 30 (right) and $\Delta =1/12$}
\label{fig: logV 2}
\end{figure}

\subsection{Two-factor affine jump diffusion model \label{subsec: two}}

We here wish to examine the robustness of our method when applied to more
complex models that go beyond one-factor volatility. We consider the
stochastic volatility model with two factors for the volatility used in \cite%
{filipovic2016}. In their specification, the dynamics of $s_{t}$ under the
risk--neutral measure are given by 
\begin{equation}
\begin{split}
ds_{t}& =\left( \mu -v_{t}/2-\lambda \left( v_{t},m_{t}\right) \bar{J}%
\right) dt+\sqrt{v_{t}}dW_{1t}+\log \left( J_{t}^{S}+1\right) dN_{t}, \\
dv_{t}& =\kappa _{V}\left( m_{t}-v_{t}\right) dt+\sigma _{V}\sqrt{v_{t}}%
\left( \rho dW_{1t}+\sqrt{1-\rho ^{2}}dW_{2t}\right) +J_{t}^{V}dN_{t}, \\
dm_{t}& =\kappa _{m}\left( \alpha _{m}-m_{t}\right) dt+\sigma _{m}\sqrt{m_{t}%
}W_{3t},
\end{split}
\label{eq: twofactor}
\end{equation}%
where $W_{1}$, $W_{2}$, and $W_{3}$ are mutually independent standard
Brownian motions. Compared to the models of the previous subsection, there
is a second variance factor $m_{t}$, which represents a stochastic level
around which $v_{t}$ reverts. The jump component consists of: (i) $N_{t}$, a
Cox process with a bounded intensity function given by $\lambda \left(
v,m\right) =\lambda _{0}+\lambda _{1}v+\lambda _{2}m$, and the variance jump
size $J^{V}$ is exponentially distributed with parameter $\mu _{J}^{V}=%
\mathbb{E}\left[ J_{t}^{V}\right] $.

Figure \ref{fig: SVJFGM} reports the performance of our approximation for
different times to maturity, and with different numbers of nodes for
Gauss-Hermite quadrature. The parameter values we used are estimates in \cite%
{aitsahalia2020}. The performance of the approximation shares the same
patterns that we found in Figure \ref{fig: SQR_NJ}. It indicates that the
performance of the approximation is still very good when we add more factors
to the volatility specification.

\begin{figure}
	\centering
	\includegraphics[width=0.7\linewidth]{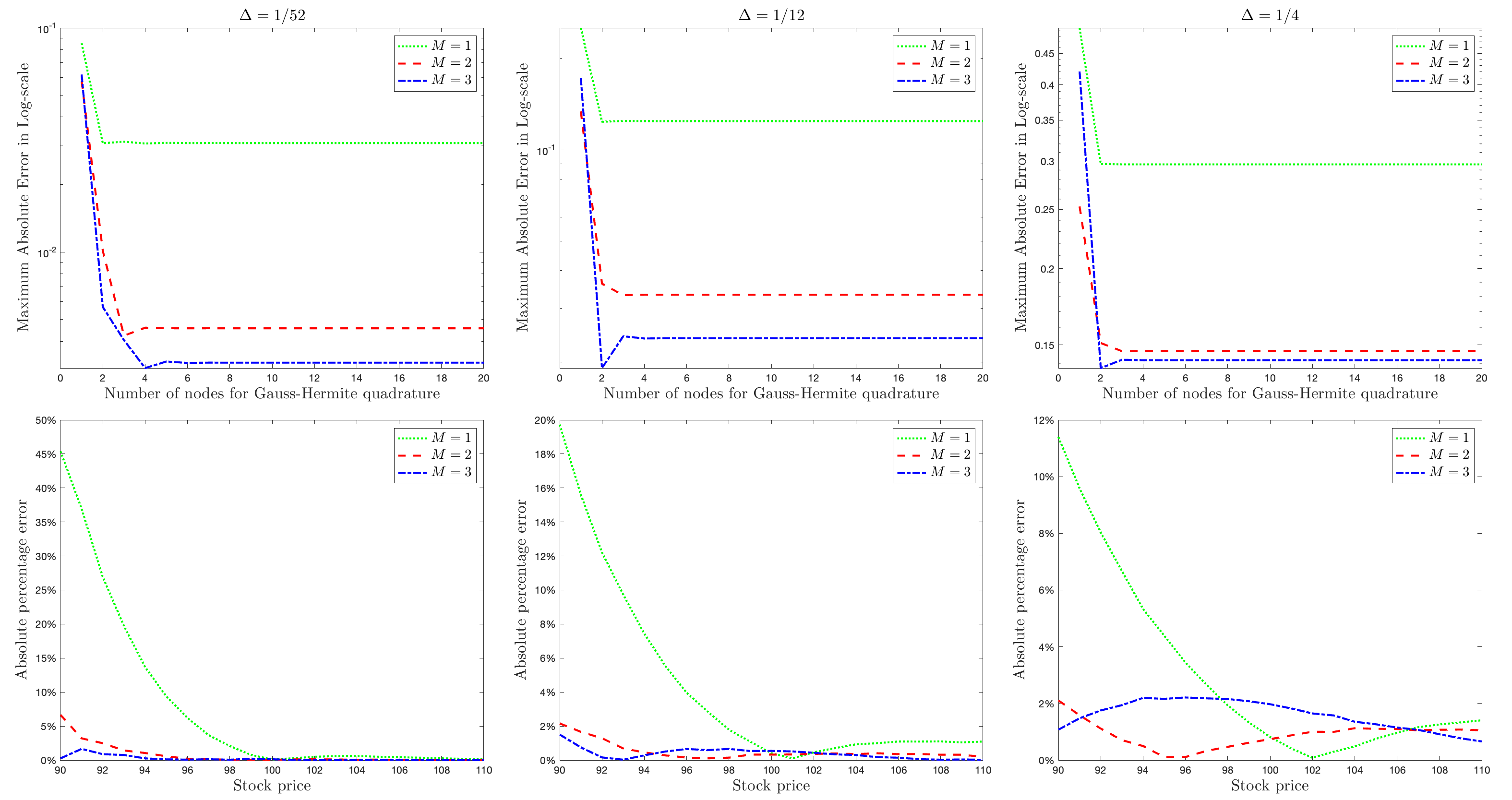}
	\caption[]{Maximum absolute error (top) and absolute percentage error (bottom) of option price approximations in (\ref{eq: twofactor}) with $\Delta =1/52$, 1/12, 1/4}
\label{fig: SVJFGM}
\end{figure}

\section{Conclusion\label{sec: conclusion}}

This paper provides a general framework for developing and analyzing series
expansions of moments of continuous-time Markov processses, including
jump-diffusions. The expansions come in two versions depending on the
features of the moment. For "regular" moments, we provide conditions under
which the corresponding expansion will converge towards the actual moments
as more terms are added. For the "smoothed" expansion, no such theoretical
guarantees exist: The expansion will eventually become imprecise as the
number of terms grows. A numerical study shows that the smoothed expansions
still work well in practice when a relatively small number of terms are used
in its implementation.

\newpage

\bibliographystyle{chicago}

\newpage

\appendix

\section{Relationship to existing literature\label{sec: relationship}}

We here first present the proposal of \cite{kristensen2011} and show that it
falls within the general framework of Section \ref{sec: atd}. We then
proceed to show that the class of series expansions of \cite{kristensen2011}
contains as special cases the ones of \cite{Yangetal2019} and \cite{Wan2021}

Recall the definition of $u_{t}\left( x\right) $ in (\ref{eq: u FK}) of the
motivating jump--diffusion example. To approximate $u_{t}\left( x\right) $, 
\cite{kristensen2011} takes as starting point an auxiliary model on the form%
\begin{equation}
dx_{0,t}=\mu _{0}\left( x_{0,t}\right) dt+\sigma _{0}\left( x_{0,t}\right)
dW_{t}+J_{0,t}dN_{0,t},  \label{eq: aux model}
\end{equation}%
where $N_{0,t}$ is a Poisson process with jump intensity $\lambda _{0}\left(
x\right) $ and $J_{0,t}$ has density $\nu _{0}\left( \cdot |x\right) $. Let $%
u_{0,t}\left( x\right) $ be the solution to the problem of interest but now
under the auxiliary model,%
\begin{equation}
-\partial _{t}u_{0,t}\left( x\right) =\left[ A_{0}-r\left( x\right) \right]
u_{0,t}\left( x\right) ,  \label{eq: PIDE aux}
\end{equation}%
with initial condition $u_{0,0}\left( x\right) =f\left( x\right) $, where $A$
has been replaced by the auxiliary model's generator, $A_{0}=A_{0,D}+A_{0,J}$%
, with%
\begin{eqnarray*}
A_{0,D}f\left( x\right)  &=&\sum_{i=1}^{d}\mu _{0,i}\left( x\right) \partial
_{x_{i}}f\left( x\right) +\frac{1}{2}\sum_{i,j=1}^{d}\sigma
_{0,ij}^{2}\left( x\right) \partial _{x_{i},x_{j}}^{2}f\left( x\right) , \\
A_{0,J}f\left( x\right)  &=&\lambda _{0}\left( x\right) \int_{\mathbb{R}^{d}}%
\left[ f\left( x+c\right) -f\left( x\right) \right] \nu _{0}\left(
c|x\right) dc.
\end{eqnarray*}%
\cite{kristensen2011} then subtract (\ref{eq: PIDE aux}) from (\ref{eq: PIDE
true}) and, after some straightforward manipulations, arrive at the
following PIDE of $\Delta u_{t}\left( x\right) \equiv u_{t}\left( x\right)
-u_{0,t}\left( x\right) $:%
\begin{equation}
\partial _{t}\Delta u_{t}\left( x\right) =\left[ A-r\left( x\right) \right]
\Delta u_{t}\left( x\right) +d_{t}\left( x\right) .  \label{PDE approx}
\end{equation}%
where%
\begin{equation}
d_{t}\equiv \left( A-A_{0}\right) u_{0,t}.  \label{eq: d def}
\end{equation}%
Since the initial conditions of (\ref{eq: PIDE terminal}) and (\ref{eq: PIDE
aux}) are the same, the initial condition of (\ref{PDE approx}) becomes $%
\Delta u_{T}\left( x\right) =0$ which is now smooth and bounded. As with $%
u_{t}\left( x\right) $, $\Delta u_{t}\left( x\right) $ can be represented as
a moment function using Feynman-Kac formula under weak regularity conditions,%
\begin{equation}
u_{t}\left( x\right) =u_{0,t}\left( x\right) +\int_{0}^{t}E_{s}d_{s}\left(
x\right) ds.  \label{eq: Delta w moment}
\end{equation}%
The second term on the right-hand side of Eq. (\ref{eq: Delta w moment})
delivers an exact expression of the difference between $u_{t}\left( x\right) 
$ and $u_{0,t}\left( x\right) $.

The next step utilizes the smoothness of $d_{t}$ to obtain a Taylor
expansion w.r.t. time of this second term. We first develop a power series
expansion of the integrand,%
\begin{equation}
w_{s}\left( x\right) \equiv E_{s}d_{s}\left( x\right) ,\text{ \ \ }s\geq 0,
\label{eq: w D def}
\end{equation}%
at $s=t$ taking the form%
\begin{equation*}
\hat{w}_{s}\left( x\right) \equiv \sum\limits_{m=0}^{M-1}\frac{\left(
s-t\right) ^{m}}{m!}\left. \partial _{s}^{m}w_{s}\left( x\right) \right\vert
_{s=t}=\sum\limits_{m=0}^{M-1}\frac{\left( s-t\right) ^{m}}{m!}\left(
\partial _{t}+A-r\right) ^{m}d_{t}\left( x\right) .
\end{equation*}%
for some $M\geq 1$. This assumes that $\left( \partial _{t}+A-r\right)
^{m}d_{t}\left( x\right) $ is well-defined, $1\leq m\leq M$. Combining these
last two equations, substituting the resulting expression into (\ref{eq:
Delta w moment}) and evaluating the integral $\int\nolimits_{0}^{t}\hat{w}%
_{s}\left( x\right) ds$, we obtain the approximation originally proposed in 
\cite{kristensen2011}, here extended to the general case of jump--diffusions:%
\begin{equation}
\hat{u}_{t}\left( x\right) \equiv u_{0,t}\left( x\right)
+\sum\limits_{m=0}^{M-1}\frac{t^{m+1}}{\left( m+1\right) !}\left( -\partial
_{t}+A-r\right) ^{m}d_{t}\left( x\right) ,  \label{eq: u_M def}
\end{equation}%
where $\left( -\partial _{t}+A-r\right) ^{0}d_{t}\left( x\right)
=d_{t}\left( x\right) $.

Finally, observe that an equivalent representation of $\hat{u}_{t}\left(
x\right) $ is%
\begin{equation}
\hat{u}_{t}\left( x\right) =\sum\limits_{m=0}^{M}\frac{t^{m}}{m!}\left(
-\partial _{t}+A-r\right) ^{m}u_{0,t}\left( x\right)  \label{eq: u_M def 2}
\end{equation}%
which follows from combining (\ref{eq: PIDE aux}) and (\ref{eq: d def}) to
obtain%
\begin{equation*}
d_{t}=Au_{0,t}-A_{0}u_{0,t}=Au_{0,t}+\left\{ -\partial _{t}u_{0,t}-r\left(
x\right) u_{0,t}\right\} =\left( -\partial _{t}+A-r\right) u_{0,t}.
\end{equation*}%
We recognize (\ref{eq: u_M def 2}) as a special case of the general proposal
in (\ref{eq: E approx special}).

Next, we demonstrate that the above class of series expansions include as
special cases the approximate transition densities and option prices
proposed in \cite{Yangetal2019} and \cite{Wan2021}. With $r=0$ and $f\left(
x\right) =\delta \left( y-x\right) $, (\ref{eq: u_M def 2}) becomes%
\begin{equation}
\hat{p}_{t}\left( y|x\right) =\sum\limits_{m=0}^{M}\frac{t^{m}}{m!}\left(
-\partial _{t}+A\right) ^{m}p_{0,t}\left( y|x\right) ,  \label{eq: p-hat KL}
\end{equation}%
where $p_{0,t}\left( y|x\right) $ is the transition density of the auxiliary
model. Now, let us first consider the transition density expansion developed
in \cite{Yangetal2019} for pure diffusions ($A_{J}=0$). Inspecting the
expansion presented in eq. (10) of their paper, we recognize it to be
identical to above when $p_{0,t}$ is chosen as in eq. (\ref{eq: p_0 BM})
with $\sigma _{0}=\sigma \left( x\right) $. Thus, \cite{Yangetal2019} is a
special case of \cite{kristensen2011}. This somehow went unnoticed by the
authors and we here clarify the connection between the two papers. Second,
consider the expansion of the transition density in \cite{Wan2021} in the
pure diffusion case. As explained by the authors themselves, the preferred
version of the expansion used in this paper is the same as the series
expansion of Yang et al. (2019) when $\mu _{0}=0$ in the auxiliary BM\
model. And so the pure diffusion version of \cite{Wan2021} is also a special
case of \cite{kristensen2011}. 

Next, we show that the expansion of option prices developed in \cite{Wan2021}
is again a special case of \cite{kristensen2011}. Setting $r\left( x\right)
=0$ and $f\left( x\right) =\left( \exp \left( x_{1}\right) -K\right) ^{+}$
and using as auxiliary model (\ref{eq: BM aux}), $\hat{u}_{t}\left( x\right) 
$ as given in (\ref{eq: u_M def 2}) delivers an expansion of the expected
pay-off of a European option where $u_{0,t}\left( x\right) $ is\ now the
pay-off function under the Black--Scholes model. To connect this option
price approximation with the corresponding proposal of \cite{Wan2021},
observe that $u_{0,t}\left( x\right) =\int f\left( y\right) p_{0,t}\left(
y|x\right) dy$, where $p_{0,t}\left( y|x\right) $ is given in (\ref{eq: p_0
BM}). Substituting this into (\ref{eq: u_M def 2}) and changing the order of
integration and differentiation yields%
\begin{eqnarray}
\hat{u}_{t}(x) &=&\sum\limits_{m=0}^{M}\frac{t^{m}}{m!}\left( -\partial
_{t}+A\right) ^{m}\int f\left( y\right) p_{0,t}\left( y|x\right) dy  \notag
\\
&=&\int f\left( y\right) \left\{ \sum\limits_{m=0}^{M}\frac{t^{m}}{m!}\left(
-\partial _{t}+A\right) ^{m}p_{0,t}\left( y|x\right) \right\} dy  \notag \\
&=&\int f\left( y\right) \hat{p}_{t}\left( y|x\right) dy,
\label{eq: KM approx}
\end{eqnarray}%
where $\hat{p}_{t}\left( y|x\right) $ is the density approximation we
arrived at in (\ref{eq: p-hat KL}). Thus, for simple moment functions, such
as the ones appearing in European option prices with constant interest
rates, the expansion of \cite{kristensen2011} is equivalent to first
developing the corresponding expansion for the transition density and then
using this to compute the relevant moment. However, in practice, it is
easier to directly employ (\ref{eq: u_M def 2}) with $u_{0,t}$ chosen as the
pay-off under the Black--Scholes model since this avoids having to compute
the integral $\int f\left( y\right) \hat{p}_{t}\left( y|x\right) dy$ after
developing the expansion of the transition density.

Let us consider \cite{Wan2021}'s proposal for option pricing approximation:
They take as starting point that the pay-off can be written as $%
u_{t}(x)=\int f\left( y\right) p_{t}\left( y|x\right) dy,$and then replace $%
p_{t}\left( y|x\right) $ by the approximation given in (\ref{eq: p-hat KL})
with auxiliary model chosen as Brownian motion with drift. As we just
demonstrated in (\ref{eq: KM approx}), this is identical to the
approximation developed in \cite{kristensen2011} when the auxiliary model is
chosen as the Black--Scholes model since the log--price in this case follows
a Brownian Motion with drift. Thus, the option price approximation of \cite%
{Wan2021} is again a special case of \cite{kristensen2011}.

\section{Extension to time-inhomogenous problems \label{Sec: Time-inhomo}}

We here present the extension of our method to handle time--inhomogenous
models and problems where no closed-form solution is available to (\ref{eq:
PIDE aux}). As motivating example, consider the following extended version
of the model in (\ref{eq: model}):%
\begin{equation}
dx_{t}=\mu _{t}\left( x_{t}\right) dt+\sigma _{t}\left( x_{t}\right)
dW_{t}+J_{t}\left( x_{t}\right) dN_{t},  \label{eq: model 2}
\end{equation}%
where now $\lambda _{t}\left( x\right) $, $\mu _{t}\left( x\right) $, $%
\sigma _{t}\left( x\right) $ and $\nu _{t}\left( x_{t}\right) $ are now
allowed to vary with $t$. This in turn implies that the corresponding
generator is also time--varying, $A_{t}f\left( x\right) =A_{D,t}f\left(
x\right) +A_{J,t}f\left( x\right) $, where%
\begin{eqnarray*}
A_{D,t}f\left( x\right)  &=&\sum_{i=1}^{d}\mu _{i,t}\left( x\right) \partial
_{x_{i}}f\left( x\right) +\frac{1}{2}\sum_{i,j=1}^{d}\sigma
_{ij,t}^{2}\left( x\right) \partial _{x_{i},x_{j}}^{2}f\left( x\right) , \\
A_{J,t}f\left( x\right)  &=&\lambda _{t}\left( x\right) \int_{\mathbb{R}^{d}}%
\left[ f\left( x+c\right) -f\left( x\right) \right] \nu _{t}\left( c\right)
dc.
\end{eqnarray*}%
We are interested in computing $u_{s,t}\left( x\right) $ defined as%
\begin{equation}
u_{s,t}\left( x\right) =E_{s,t}f\left( x\right) ,\text{ \ \ }0\leq s\leq t,
\end{equation}%
where%
\begin{equation}
\left( s,t,f\right) \mapsto E_{s,t}f\left( x\right) \equiv \mathbb{E}\left[
\left. \exp \left( -\int\nolimits_{s}^{t}r\left( x_{u}\right) du\right)
f\left( x_{t}\right) \right\vert x_{s}=x\right] .
\end{equation}%
Due to the time--inhomogeneity, the operator $E_{s,t}f\left( x\right) $ is
now indexed by two time variables, $s$ and $t$. At the same time, for any
fixed value of $s\geq 0$, $\left( t,f\right) \mapsto E_{s,t}f\left( x\right) 
$ remains a semi--group when $\mathcal{F}$ is chosen suitably. Most of the
ideas and results from Sections \ref{sec: atd}--\ref{sec: theory} therefore
carry over to the time--inhomogenous case with only minor differences.
Below, we present the series expansion and explain how the theory applies to
this.

We take as starting point a given $\left( s,t,f\right) \mapsto
E_{s,t}f\left( x\right) $ where, for any given $s\geq 0$, $\left( t,f\right)
\mapsto E_{s,t}f\left( x\right) $ is assumed to be semi--group on some
funtion space $\left( \mathcal{F},\left\Vert \cdot \right\Vert _{\mathcal{F}%
}\right) $. In the following, we keep $s\geq 0$ fixed. We denote by $%
\mathcal{D}\left( B_{s}\right) $ the set of functions $f\in \mathcal{F}$ for
which there exists $g_{s}\in \mathcal{F}$ such that, for each $t\geq 0$,%
\begin{equation}
E_{s,t}f\left( x\right) =f\left( x\right) +\int_{s}^{t}E_{s,u}g_{s}\left(
x\right) du,
\end{equation}%
and we write $B_{s}f\left( x\right) :=g_{s}\left( x\right) $ and call $B_{s}$
the (extended) generator of $E_{s,t}$. In the motivating example above, it
is easily shown by Ito's Lemma that $B_{t}=A_{t}-r$ on the space%
\begin{equation*}
\mathcal{D}_{0}\left( B_{s}\right) :=\left\{ f\in \mathcal{C}^{2}\cap 
\mathcal{F}:E_{s,t}\left\vert f\right\vert \text{ and }E_{s,t}\left\Vert 
\frac{\partial f}{\partial x}\sigma _{s}\right\Vert ^{2}\text{ exist for all 
}t>0\right\} \subseteq \mathcal{D}\left( B_{s}\right) .
\end{equation*}
For any regular function $f$, regular in the sense that $f\in \mathcal{D}%
\left( B_{s}^{M}\right) $, we have

\begin{equation}
\partial _{t}u_{s,t}\left( x\right) =B_{t}^{m}u_{s,t}\left( x\right) ,\text{
\ }t>0,  \label{eq: PIDE time-var}
\end{equation}%
c.f. \cite{Rueschendorf2016}, which corresponds to the so--called forward
equation. Thus, in this case the following is a valid series expansion of $%
u_{s,t}\left( x\right) $:%
\begin{equation}
\hat{u}_{s,t}\left( x\right) =\sum\limits_{m=0}^{M}\frac{\left( t-s\right)
^{m}}{m!}B_{s}^{m}f\left( x\right) .
\end{equation}

If $f$ is irregular, so that $f\notin \mathcal{D}\left( B_{s}\right) $, we
introduce a smoothed version of it, $u_{0,s,t}\left( x\right) $ which is
assumed to satisfy:

\begin{description}
\item[A.0'] (i) $\lim_{t\rightarrow s^{+}}u_{0,s,t}\left( x\right) =f\left(
x\right) $ and (ii) $u_{0,s,t}\in \mathcal{D}\left( \left( \partial
_{t}\right) ^{M_{1}}\right) \cap \mathcal{D}\left( B_{s}^{M_{2}}\right) $
for some $M_{1},M_{2}\geq 1$.
\end{description}

Following the same steps as in the time--homogenous case of Section \ref%
{sec: atd}, we obtain the following series expansion:%
\begin{equation}
\hat{u}_{s,t}\left( x\right) =\sum\limits_{m=0}^{M}\frac{\left( t-s\right)
^{m}}{m!}\left( B_{s}-\partial _{t}\right) ^{m}u_{0,s,t}\left( x\right) .
\label{eq: u-hat time-inhomo}
\end{equation}

\section{Proofs \label{sec: Proofs}}

\begin{proof}[Proof of Theorem \protect\ref{Th: w representation}]
Use eq. (\ref{eq: w representation}) together with $%
E_{s}E_{t}=E_{s+t}=E_{t}E_{s}$ to obtain%
\begin{equation*}
E_{t}u_{s}\left( x\right) =u_{s}\left( x\right) +\int_{0}^{t}E_{s+w}\left(
Bf\right) \left( x\right) dw=\int_{0}^{t}E_{w}\left( E_{s}\left( Bf\right)
\right) \left( x\right) dw.
\end{equation*}

The second part of the theorem is obtained by taking derivatives w.r.t. $s$
on both sides of (\ref{eq: w representation}) and using that the right-hand
side derivative equals $E_{s}\left( Bf\right) \left( x\right) =Bu_{s}\left(
x\right) $ if this function is continuous w.r.t $s$ from the right.
\end{proof}

\bigskip

\begin{proof}[Proof of Theorem \protect\ref{Th: w-hat error bound}]
We expand $t\mapsto E_{t}f\left( x\right) $ around $E_{0}f\left( x\right)
=f\left( x\right) $ recursively: First rewrite (\ref{eq: w representation})
as%
\begin{equation}
E_{t}f\left( x\right) =f\left( x\right) +\int_{0}^{t}E_{t_{1}}\left(
Bf\right) \left( x\right) dt_{1}.  \label{eq: Dynkin}
\end{equation}%
Since $Bf\in \mathcal{D}\left( B\right) $ by assumption, we can apply (\ref%
{eq: Dynkin}) again to $E_{t_{1}}\left( Bf\right) \left( x\right) $ yielding 
\begin{equation*}
E_{t_{1}}\left( Af\right) \left( x\right) =Bf\left( x\right)
+\int_{0}^{t_{1}}E_{t_{2}}\left( B^{2}f\right) \left( x\right) dt_{2}.
\end{equation*}%
Substitute the right-hand side of the last equation into (\ref{eq: Dynkin})
to obtain%
\begin{eqnarray*}
E_{t}f\left( x\right) &=&f\left( x\right) +\int_{0}^{t}\left\{ Bf\left(
x\right) +\int_{0}^{t_{1}}E_{t_{2}}\left( B^{2}f\right) \left( x\right)
dt_{2}\right\} dt_{1} \\
&=&f\left( x\right) +tBf\left( x\right)
+\int_{0}^{t}\int_{0}^{t_{1}}E_{t_{2}}\left( B^{2}f\right) \left( x\right)
dt_{2}dt_{1}.
\end{eqnarray*}%
Repeating this argument $M$ more times yields the claimed result.
\end{proof}

\bigskip

\begin{proof}[Proof of Theorem \protect\ref{Th: analytic}]
By definition, $\left\Vert B^{m}f\right\Vert _{\mathcal{F}}/m!\leq
1/T_{0}^{m}$. Thus,%
\begin{equation*}
\left\Vert u_{t}-\hat{u}_{t}\right\Vert _{\mathcal{F}}\leq
\sum_{m=M+1}^{\infty }\frac{t^{m}}{m!}\left\Vert B^{m}f\left( x\right)
\right\Vert _{\mathcal{F}}\leq \sum_{m=M+1}^{\infty }\left( \frac{t}{T_{0}}%
\right) ^{m}=\frac{\left( t/T_{0}\right) ^{M+1}}{1-t/T_{0}}\rightarrow 0.
\end{equation*}
\end{proof}

\bigskip

\begin{proof}[Proof of Theorem \protect\ref{Th: analytic E}]
The first part follows from Theorem 2.5.2 of \cite{Pazy1983}. To show the
second part, recall the definition of radius of convergence $T_{0}$ in (\ref%
{eq: T_0 def}). To bound the right hand side of (\ref{eq: T_0 def}), first
use that $f\left( x\right) =E_{\tau _{0}}g\left( x\right) $ and that $A$ and 
$E_{\tau _{0}}$ commute to obtain $\left\Vert B^{m}f\right\Vert _{\mathcal{F}%
}=\left\Vert \left( BE_{\tau _{0}/m}\right) ^{m}g\right\Vert _{\mathcal{F}%
}\leq \left\Vert BE_{\tau _{0}/m}\right\Vert _{\mathrm{op}}^{m}\left\Vert
g\right\Vert _{\mathcal{F}}$. Next, due to (\ref{eq: sigma(A) cond})--(\ref%
{eq: R(z) cond}), we can apply part (d) of Theorem 2.5.2 of \cite{Pazy1983}
yielding $\left\Vert BE_{\tau _{0}/m}\right\Vert _{\mathrm{op}}^{m}\leq
\left( C_{A}m/\tau _{0}\right) ^{m}$. In total, 
\begin{equation*}
\left\Vert B^{m}f\right\Vert _{\mathcal{F}}/m!\leq \left\{ \left( \frac{C_{A}%
}{\tau _{0}}\right) ^{m}m^{m}/m!\right\} ^{1/m}\left\Vert g\right\Vert _{%
\mathcal{F}}^{1/m}\leq \left( \frac{C_{A}e}{\tau _{0}}\right) \left\Vert
g\right\Vert _{\mathcal{F}}^{1/m},
\end{equation*}%
and we conclude that $T_{0}\geq \tau _{0}/\left( C_{A}e\right) $.
\end{proof}

\bigskip

\begin{proof}[Proof of Corollary \protect\ref{Cor: reversible}]
With the function space being a Hilbert space, we are able to introduce the
adjoint $A^{\ast }$ of the operator $A$ with corresponding semigroup $%
E_{t}^{\ast }=e^{A^{\ast }t}$. If $x_{t}$ indeed is reversible in the sense
that $A=A^{\ast }$ then $\sigma \left( A\right) \subseteq (-\infty ,0]$\ and
so (\ref{eq: sigma(A) cond}) is satisfied. (c.f. eq. 5.8 in \cite%
{Scheinkman1995}). Moreover, by the Spectral Mapping Theorem (\cite%
{rudin1973}, Theorem 10.28), the spectrum of the resolvent satisfies%
\begin{equation*}
\sigma \left( R\left( \lambda \right) \right) \backslash \left\{ 0\right\}
=\left( \lambda -\sigma \left( A\right) \right) ^{-1}=\left\{ \frac{1}{%
\lambda -w}:w\in \sigma \left( A\right) \right\}
\end{equation*}%
Since $A$ is self-adjoint so is $R\left( \lambda \right) $ for any $\lambda
\notin \sigma \left( A\right) $. Thus, 
\begin{equation*}
\left\Vert R\left( \lambda \right) \right\Vert _{\mathrm{op}}=\max_{w\in
\sigma \left( R(\lambda \right) )}\left\vert w\right\vert =\max_{w\in \sigma
\left( A\right) }\frac{1}{\left\vert \lambda -w\right\vert }\leq \max_{w\leq
0}\frac{1}{\left\vert \lambda -w\right\vert }=\frac{1}{\left\vert \lambda
\right\vert },
\end{equation*}%
and so (\ref{eq: R(z) cond}) is satisfied.
\end{proof}

\bigskip

\begin{proof}[Proof of Theorem \protect\ref{Th: analytic T 1}]
For any $f\in \mathcal{F}_{0}$,%
\begin{eqnarray*}
\left\Vert A_{D}f\right\Vert _{\mathcal{F}_{0}} &\leq
&\sum_{i=1}^{d}\left\Vert \mu _{i}\right\Vert _{\mathcal{F}}\left\Vert \frac{%
\partial f}{\partial x_{i}}\right\Vert _{\mathcal{F}}+\frac{1}{2}%
\sum_{i,j=1}^{d}\left\Vert \sigma _{ij}^{2}\right\Vert _{\mathcal{F}%
}\left\Vert \frac{\partial ^{2}f}{\partial x_{i}\partial x_{j}}\right\Vert _{%
\mathcal{F}} \\
&\leq &\left( \sum_{i=1}^{d}\left\Vert \mu _{i}\right\Vert _{\mathcal{F}%
_{0}}+\frac{1}{2}\sum_{i,j=1}^{d}\left\Vert \sigma _{ij}^{2}\right\Vert _{%
\mathcal{F}_{0}}\right) \left\Vert f\right\Vert _{\mathcal{F}_{0}} \\
&=&:\bar{A}\left\Vert f\right\Vert _{\mathcal{F}_{0}},
\end{eqnarray*}%
where $\bar{A}<\infty $ under the assumptions of the theorem. Thus, $%
\left\Vert A\right\Vert _{\mathrm{op}}=\sup_{\left\Vert f\right\Vert _{%
\mathcal{F}_{0}}\leq 1}\left\Vert Af\right\Vert _{\mathcal{F}_{0}}<\infty $
and so $A:\mathcal{F}_{0}\mapsto \mathcal{F}_{0}$ is a bounded operator.
This in turn implies that $\sum_{m=0}^{\infty }\frac{t^{m}}{m!}A^{m}f\left(
x\right) $ is a well-defined representation of $w_{t}\left( x\right) $ for
any $f\in \mathcal{F}_{0}$ and so the power series approximation is
consistent. In particular,%
\begin{equation*}
\left\Vert \hat{w}_{t}-w_{t}\right\Vert _{\mathcal{F}_{0}}\leq \frac{t^{M+1}%
}{\left( M+1\right) !}\left\Vert A^{M+1}f\right\Vert _{\mathcal{F}_{0}}\leq 
\frac{\left( t\bar{A}\right) ^{M+1}}{\left( M+1\right) !}\left\Vert
f\right\Vert _{\mathcal{F}_{0}}.
\end{equation*}
\end{proof}

\bigskip

\begin{proof}[Proof of Theorem \protect\ref{Th: analytic bounded}]
The first part follows from Theorem 1.1 in \cite{Escauriaza2017}. For the
second part, First note that $w_{0,t}\left( x\right) =E_{0,t+\tau
_{0}}g\left( x\right) $. Now, by Theorem 1.1 in \cite{Escauriaza2017}, $%
\left\vert \left. \partial _{t}^{m}w_{0,t}\right\vert _{t=0}\right\vert \leq
C\left( \rho \tau _{0}\right) ^{-m}m!$, for all $x\in \mathcal{X}_{0}$, for
some constant $\rho =\rho \left( B,d\right) \in (0,1]$. This in turn implies
that the power series expansion will converge with radius of convergence
bounded by 
\begin{equation*}
T_{0}^{-1}=\lim \sup_{m\rightarrow \infty }\left\{ \frac{1}{m!}\left\Vert
\left. \partial _{t}^{m}w_{0,t}\right\vert _{t=0}\right\Vert _{\mathcal{F}%
,0}\right\} ^{1/m}\leq \frac{1}{\rho \tau _{0}}.
\end{equation*}
\end{proof}

\end{document}